\theoremstyle{plain}
\newtheorem{theorem}{Theorem}
\newtheorem{proposition}{Proposition}
\newtheorem{observation}{Observation}
\theoremstyle{definition}
\newcommand{\lotto}{\mathcal{L}}
\newcommand{\blotto}{\mathcal{B}}
\newcommand{\general}{\mathcal{G}}
\newcommand{\uvec}[1]{\mathbf{1}_{#1}}
\newcommand{\uovec}[1]{\bm{u}_{\mathrm{O}}^{#1}}
\newcommand{\uoupvec}[1]{\bm{u}_{\mathrm{O}\uparrow 1}^{#1}}
\newcommand{\uevec}[1]{\bm{u}_{\mathrm{E}}^{#1}}
\newcommand{\wvec}[2]{\bm{w}_{#1}^{#2}}
\newcommand{\vvec}[2]{\bm{v}_{#1}^{#2}}
\newcommand{\vvvec}[1]{\bm{v}^{#1}}
\newcommand{\uo}[1]{U_{\mathrm{O}}^{#1}}
\newcommand{\uoup}[1]{U_{\mathrm{O}\uparrow 1}^{#1}}
\newcommand{\ue}[1]{U_{\mathrm{E}}^{#1}}
\newcommand{\w}[2]{W_{#1}^{#2}}
\newcommand{\vd}[2]{V_{#1}^{#2}}
\newcommand{\vv}[1]{V^{#1}}
\renewcommand{\Pr}{\mathbf{P}}
\newcommand{\Ex}{\mathbf{E}}
\newcommand{\sign}{\mathrm{sign}}
\newcommand{\val}[1]{\mathrm{val}\: #1}
\newcommand{\conv}[1]{\mathrm{conv}\!\left(#1\right)}
\newcommand{\matr}[1]{\mathbf{#1}}
\newcommand{\mset}[1]{\langle #1 \rangle}
\newcommand{\divid}[2]{{#1}\! \mid \!{#2}}
\newcommand{\ndivid}[2]{{#1}\! \nmid \!{#2}}
\newcommand{\sslash}{/\!\!/}
\newcommand{\divides}[2]{{#1}\! \mid \!{#2}}
\newcommand{\ndivides}[2]{{#1}\! \nmid \!{#2}}
\DeclareMathOperator{\lcm}{lcm}
\DeclareMathOperator{\card}{card}
\begin{document}

\begin{frontmatter}

\title{Non-symmetric discrete Colonel Blotto game}
\runtitle{Non-symmetric discrete Colonel Blotto game}

\begin{aug}
%
%
%
\author[add]{\fnms{Marcin}~\snm{Dziubi\'{n}ski}\ead[label=e1]{m.dziubinski@mimuw.edu.pl}}
\address[add]{%
\orgdiv{Faculty of Mathematics, Informatics, and Mechanics},
\orgname{University of Warsaw}}

\end{aug}

\begin{funding}
The author gratefully acknowledges financial support from the Polish National Science Centre through grant 2018/29/B/ST6/00174.
\end{funding}
%

\begin{abstract}
We study equilibrium strategies and the value of the asymmetric variant of the discrete Colonel Blotto game with $K \geq 2$ battlefields, $B \geq 1$ resources of the weaker player and $A > B$ resources of the stronger player. We derive equilibrium strategies and the formulas for the value of the game for the cases where the number of resources of the weaker player, $B$, is at least $2(\lceil A/K \rceil - 1)$ as well as for the cases where this number is at most $\lfloor A/K \rfloor$. In particular, we solve all the cases of the game which can be solved using the discrete General Lotto game of~\cite{Hart08}. We propose a constrained variant of the discrete General Lotto game and use it to derive equilibrium strategies in the discrete Colonel Blotto game, that go beyond the General Lotto solvable cases game.
\end{abstract}

\begin{keyword}
\kwd{Discrete Colonel Blotto}
\kwd{Nash equilibrium}
\end{keyword}

\begin{keyword}[class=JEL] 
\kwd{C72}
\kwd{C61}
\end{keyword}

\end{frontmatter}

\section{Introduction}
\label{sec:intro}

The Colonel Blotto game, proposed by~\cite{Borel21} is one of the oldest games studied in the modern game theory. It is a two player zero-sum game where two players, A and B, endowed with $A$ and $B$ units of resources ($A\geq B > 0$), respectively, compete over $K \geq 2$ battlefields. Each player, not observing the choice of the opponent, distributes his resources across the battlefields. A player wins all the battlefields where he has strictly more resources than the opponent, loses all the battlefields where he has strictly less resources than the opponent and the remaining battlefields are tied. In the classic formulation, all the battlefields have the same value and the score that a player gets at the end of the game is the average of the total number of the won minus the total number of the lost battlefields. In the discrete variant of the game, the units are not divisible and each battlefield must receive a discrete number of units.

Since its introduction, the game attracted interest of researchers in the areas of economics~\cite{Tukey49,LaslierPicard02}, operations research~\cite{BealeHeselden62,Penn71,BellCover80}  and, more recently, computer science~\cite{AhmadinejadDehghaniHajiaghayiLucierMahiniSeddighin19,
BehnezhadDehghaniDerakhshanHajiaghayiSeddighin22,VuLoiseau21}. The interest lasts to this day, which is mainly due to many applications of the game in the areas such as political competition~\cite{Laslier02,LaslierPicard02,Washburn13}, network security~\cite{ChiaChuang11,GuanWangYaoJiangHanRen20}, and military conflicts~\cite{Blackett54,ShubikWeber81}.
Still, the solution to the discrete Colonel Blotto games for all the possible values of parameters of the game is largely unknown.

The most significant contribution to obtaining an explicit solution to the game is the work of~\cite{Hart08}. Introducing a symmetric across battlefields variant of the game, called the Colonel Lotto game, and another game, called the discrete General Lotto game,\footnote{
The continuous variant  of the General Lotto game, without this name, was earlier defined by~\cite{Myerson93}.}
Hart found optimal strategies as well as the value of the discrete Colonel Blotto game in the symmetric case, where both players have the same number of units, as well as the asymmetric cases, where the numbers of units of the players are close, $\lfloor A/K \rfloor < B/K < A/K < \lceil A/K \rceil$. 
He also obtained the lower bound on the value of the game (payoff to player A) in the case where $\divides{K}{A}$ and $A \equiv K \pmod 2$, and the upper bound on the value of the game in the case where 
$\ndivides{K}{A}$, $\divides{K}{B}$, $B/K < \lfloor A/K \rfloor$, and $B$ is even.
\cite{Dziubinski17} constructed additional optimal strategies for the symmetric case and used that to obtain complete characterisation of an aggregate statistic of the optimal strategies in that case, called spectrum.
In a recent paper, \cite{LiangWangCaoYang23} studied the discrete Colonel Blotto game with two battlefields. They obtained complete characterization of the values of the game, provided complete direct characterization of optimal strategies of the players in the cases where the difference in resources between the players is $B = A-1$ as well as in the cases where $B \leq A-2$ and $A \bmod (A-B) = A-B-1$.\footnote{
Throughout the paper, given two natural numbers $X$ and $Y$, we use $X \bmod Y$ to denote the remainder of dividing $X$ by $Y$, i.e. the value $R \in \{0,\ldots,Y-1\}$ such that $X = K\cdot Y + R$, for a natural number $K$.
} In the remaining cases they provided complete but mainly indirect characterization, where the strategies of one of the players are characterized in terms of a set of conditions.

The continuous variant of the Colonel Blotto game is much better understood. \cite{BorelVille38} solved the game for the symmetric case, $A=B$, with $K=3$ battlefields. \cite{GrossWagner50} completed the solution of the symmetric case  and solved the asymmetric cases with $K = 2$ battlefields. Using an approach based on copula theory, \cite{Roberson06} completed the solution of the asymmetric cases. Most notably, he obtained complete characterization of the marginal distributions of equilibrium strategies.
\cite{MacdonellMastronardi15} obtained complete characterization of Nash equilibria in the case of $K=2$ battlefields.\footnote{
Other notable works that obtained solutions to different subcases of the continuous Colonel Blotto game include~\cite{LaslierPicard02}, \cite{Weinstein12}, and~\cite{Thomas18}.}
These advancements laid foundations for the study of various extensions of the Colonel Blotto game, such as the non-zero-sum variant studied by~\cite{RobersonKvasov12}; the generalization allowing heterogeneous values of battlefields, including the non-zero-sum variant, where the values are heterogeneous across the players, studied by~\cite{KovenockRoberson21}; the extension to more than two players, studied by~\cite{BoixAdseraEdelmanJayanti21}; and the extension allowing for head-starts studied by~\cite{VuLoiseau21}.\footnote{
Other notable variants of the Colonel Blotto game include the sequential non-zero-sum variant, studied by~\cite{Powell09}, and the coalitional variant, studied by~\cite{KovenockRoberson12}.}

The problem of computing optimal strategies in the Colonel Blotto game attracted attention from researchers in operations research and computer science from the early years. 
\cite{BealeHeselden62} proposed a method for finding approximate solutions of the Colonel Blotto game based on an auxiliary game, where the budget constraints of the players have to be satisfied in expectation only.
Partially using this approach, \cite{Penn71} proposed a generalized multiplier method, that can be applied to a large class of games, including the Colonel Blotto game. His technique, combined with fictitious play, was implemented in a computer program by~\cite{EisenLeMat68}. None of these papers is concerned with the actual computational complexity of the problem. The breakthrough in this regard was a paper by\cite{AhmadinejadDehghaniHajiaghayiLucierMahiniSeddighin19}, who showed that Colonel Blotto (and similar types of games with bilinear payoffs) can be solved in polynomial time with respect to the values of the parameters of the game. Their method, although polynomial, has high computational cost because it requires using the ellipsoid method to solve the linear program associated with the solution of the game. This was significantly improved by~\cite{BehnezhadDehghaniDerakhshanHajiaghayiSeddighin22} who, using a network-flow based representation of mixed strategies in Colonel Blotto games, constructed a linear program for solving the game, that is sufficiently small to allow for practical computation.
\cite{BoixAdseraEdelmanJayanti21} proposed a sampling algorithm that allows for generating mixed strategies in multiplayer Colonel Blotto games from the given marginal distributions. \cite{VuLoiseau21} proposed an efficient algorithm to compute good (but not necessarily equilibrium) marginal distributions in the Colonel Blotto game with favouritism. These marginal distributions can be then used to obtain approximate Nash equilibria in the game. \cite{BeagleholeHopkinsKaneLiuLovett23} proposed a fast sampling extension of the Multiplicative Weights Update algorithm and demonstrated how that can be used for solving Colonel Blotto game fast.

\subsection{Our contribution}
In this paper we provide optimal strategies and solve for the value of the discrete Colonel Blotto game for the asymmetric cases of the Colonel Blotto game with the parameters values
\begin{enumerate}
\item $2(\lceil A/K \rceil - 1) \leq B \leq K\lfloor A/K \rfloor$, except the cases of $K = 3$, $A \in \{7,13,19\}$ and odd $B$,\label{p:solved:1}
\item $1 \leq B\leq \lfloor A/K \rfloor$.\label{p:solved:3}
\end{enumerate}
The case~(\ref{p:solved:1}) constitutes the scenarios where the weaker player has at least twice the minimal number of resources that the stronger player can secure at every battlefield. We refer to this case as the \emph{high $B$ case}. The case~(\ref{p:solved:3}) constitutes the scenarios where the weaker player has at most the minimal number of resources that the stronger player can secure at every battlefield. We refer to this case as the \emph{low $B$ case}. Notice that in the low $B$ case, only the subcase where $B = \lfloor A/K\rfloor$ is non-trivial. 
In the case of $B < \lfloor A/K \rfloor$ it is immediate to see that any allocation of resources with at least $\lfloor A/K \rfloor$ resources at each battlefield is a dominant strategy for player $A$ and any strategy of player $B$ is a best response to that. The value of the game is $1$ in that case.

Given that the case of $K\lfloor A/K \rfloor < B \leq A$ was already solved by~\cite{Hart08}, this leaves the case of $\lfloor A/K \rfloor < B< 2(\lceil A/K \rceil - 1)$, the \emph{intermediate $B$ case}, as the major unsolved case.\footnote{
The analysis of the continuous variant of the game in~\cite{Roberson06} is also divided into similar three subcases: $2A/K \leq B < A$ (high $B$), $A/(K-1) \leq B < 2A/K$ (intermediate $B$), and $0 < B < A/(K-1)$ (low $B$).
}

Most of the analysis concerns the case~(\ref{p:solved:1}), of high values of $B$. We partition this analysis into three subcases
\begin{enumerate}[(a)]
\item $\ndivides{K}{A}$, $A > K$, $2\lfloor A/K \rfloor \leq B \leq K\lfloor A/K\rfloor$, and $B$ is even,\label{p:contrib:1}
\item $\divides{K}{A}$, $2A/K - 2 \leq B < A$, and $A \equiv K \pmod 2$,\label{p:contrib:2}
\item $\ndivides{K}{A}$, $A > K$, $2\lfloor A/K \rfloor \leq B \leq K\lfloor A/K\rfloor$, $B$ is odd, and either $K\neq 3$ or $A\notin\{7,13,19\}$,\label{p:contrib:3}
\end{enumerate}
We show that~(\ref{p:contrib:1}) and~(\ref{p:contrib:2}) constitute all the asymmetric cases that can be solved using the discrete General Lotto game of~\cite{Hart08} when $B \leq K\lfloor A/K\rfloor$ (the remaining cases constitute those already solved by~\cite{Hart08}: the symmetric case of $A = B$ and the asymmetric case of $K\lfloor A/K \rfloor < B < A$).
To solve case~(\ref{p:contrib:3}), we introduce and solve a constrained variant of the discrete General Lotto game and then use its solutions to construct optimal strategies in the Colonel Blotto game.

Our approach relies on the complete characterization of solutions of the discrete General Lotto game, obtained partially by~\cite{Hart08} and completed by~\cite{Dziubinski12}. Constructing optimal strategies for the players we rely on the approach of~\cite{Dziubinski17}, who constructs optimal strategies for the symmetric variant of the Colonel Blotto game by composing optimal strategies for $2$ and $3$ battlefields.
In our case, the optimal strategies are composed of optimal strategies for $2$ up to $5$ battlefields.

To our knowledge, our paper is the first to obtain equilibrium strategies as well as the values of the Colonel Blotto game with $K\geq 3$ battlefields and the number of resources of the weaker player at most $K\lfloor A/K \rfloor$. This covers a large part of parameter values, leaving the cases where the number of resources of the weaker player is between $\lfloor A/K \rfloor+1$ and $2\lceil A/K\rceil - 2$ as well as six cases with $K = 3$ and $(A,B) \in \{(7,5),(13,9),(13,11),(19,13),(19,15),(19,17)\}$ as the remaining unsolved fragments of the game.
We summarize the contribution for $K \geq 3$ battlefields in Table~\ref{tab:genk}.
\begin{table}[h!]
  \begin{center}
    \begin{tabular}{|c||c|c|}
      \hline      
      \textbf{$A$ and $B$} & \textbf{Value of the game} & \textbf{Equilibrium stratgies}\\
      \hline      
      \hline      
      $K\lfloor A/K \rfloor + 1 \leq B \leq A$ & \cite{Hart08} & \cite{Hart08} \\
      \hline
      $2(\lceil A/K \rceil - 1) \leq B \leq K\lfloor A/K \rfloor$ & \cite{Hart08}$^{\textrm{bounds}}$ & this paper \\
      (except $K = 3$, $A \in \{7,13,19\}$, odd $B$) & this paper$^{\textrm{exact}}$  & \\
      \hline
      $\lfloor A/K \rfloor + 1 \leq B \leq 2(\lceil A/K \rceil - 1)-1$ & open & open \\
      \hline
      $B = \lfloor A/K \rfloor$ & this paper & this paper \\
      \hline
      $1 \leq B \leq \lfloor A/K \rfloor-1$ & trivial & trivial \\
      \hline
    \end{tabular}
  \end{center}
  \caption{The values of the game and optimal strategies of the Colonel Blotto game in the cases of $K\geq 3$ battlefields. $^{\textrm{bounds}}$: lower bound in the case of $\divides{K}{A}$ and $A \equiv K \pmod 2$, and upper bound in the case of $\ndivides{K}{A}$, $\divides{K}{B}$, $B/K < \lfloor A/K \rfloor$, and $B$ even.}
  \label{tab:genk}
\end{table}

The case of $K = 2$ battlefields was already solved by~\cite{LiangWangCaoYang23}. However, in the case of even $A \geq 4$ and $B = A/2$, the characterization of equilibrium strategies for player A obtained by~\cite{LiangWangCaoYang23} is indirect, via a set of conditions, and we provide a direct description of an equilibrium strategy for player A for this case. Similarly, in the case of odd $A \geq 5$ and $B = (A-1)/2$ the characterization of equilibrium strategies for players A and B obtained by~\cite{LiangWangCaoYang23} is indirect and we provide direct description of equilibrium strategies for both players for this case.   
%

The remainder of the paper is organized as follows. In Section~\ref{sec:blotto} we formally define the game. In Section~\ref{sec:analysis} we provide the analysis of the problem. In particular, in Section~\ref{sec:approach} we outline our approach, in Section~\ref{sec:genlot} we solve the General Lotto solvable cases and in Section~\ref{sec:beyond} we propose a constrained variant of the game and solve additional subcases of the high $B$ case. In Section~\ref{sec:low} we solve the low $B$ case. We conclude with a~discussion in Section~\ref{sec:disc}. All proofs are given in the Appendix.

\section{Definition of the problem}
\label{sec:blotto}

The asymmetric \emph{Colonel Blotto game} is defined as follows. There are two players A and B having $A > B \geq 1$ units (where $A$ and $B$ are natural numbers) respectively, to distribute simultaneously and independently over $K \geq 2$ battlefields. The game is denoted by $\blotto(A,B;K)$. Player A, with the larger number of resources, is referred to as the \emph{stronger player} and player B, with the smaller number of resources, is referred to as the \emph{weaker player}.

A pure strategy of a player with $X$ units is a $K$-partition, $\bm{x} = (x_1,\ldots,x_K)$, of $X$ so that $x_1 + \ldots + x_K = X$ and each $x_i$ is a natural number. The set of pure strategies of a player with $X \in \mathbb{Z}_{\geq 0}$ units is
\begin{equation*}
S_{\blotto}(X) = \left\{(z_i)_{i = 1}^K \in \mathbb{Z}_{\geq 0} : \sum_{i = 1}^K z_i = X \right\}.
\end{equation*}
A mixed strategy of a player with $X$ resources is a probability distribution on $S(X)$.

After the units are distributed, the payoff of each player is computed as follows. For each battlefield where a player assigned a strictly larger number of units than the opponent, he receives the score of $1$ and the opponent receives the score of $-1$. The score at the tied battlefields is $0$, for each player. The overall payoff is the average of payoffs obtained at all battlefields. Given the pure strategies $\bm{x}^{\mathrm{A}}\in S(A)$ and $\bm{x}^{\mathrm{B}} \in S(B)$ of A and B, respectively, the payoff to the stronger player, A, is given by
\begin{equation*}
h_{\mathcal{B}}\!\left(\bm{x}^{\mathrm{A}},\bm{x}^{\mathrm{B}}\right) = \frac{1}{K}\sum_{i = 1}^{K} \sign(x^{\mathrm{A}}_i - x^{\mathrm{B}}_i).
\end{equation*}
The payoff to the weaker player, B,  is minus the payoff to the stronger player. The payoffs from mixed strategy profiles are the expected values of the payoffs from the pure strategy profiles.

The Colonel Blotto game is a finite zero-sum game. Hence, for any $K \geq 2$ and $A \geq B \geq 1$, $K,A,B \in \mathbb{Z}_{\geq 0}$, 
$\blotto(A,B;K)$ has a Nash equilibrium in mixed strategies. In addition, all equilibria are payoff equivalent and interchangeable. Interchangeability means that if mixed strategy profiles $(\bm{\xi},\bm{\zeta})$ and $(\bm{\xi}',\bm{\zeta}')$ are Nash equilibria so are $(\bm{\xi}',\bm{\zeta})$ and $(\bm{\xi},\bm{\zeta}')$. Equilibrium strategies of a player are referred to as his \emph{optimal strategies}. The equilibrium payoff of the player receiving non-negative payoff in equilibrium (i.e. player A) is called the \emph{value of the game}.
We are interested in finding optimal strategies of the players and establishing the value of the Colonel Blotto game.

\section{Analysis}
\label{sec:analysis}

In the analysis we follow the approach of~\cite{Hart08}, who introduced two additional games, related to the Colonel Blotto game, the \emph{Colonel Lotto} game and the \emph{General Lotto} game.

The \emph{Colonel Lotto game} involves two players, A and B, endowed with $A$ and $B$ units, respectively. Each player chooses an unordered $K$-partition of his resources. A strategy of a player with $X$ resources is $\bm{x} = \mset{x_1,\ldots,x_K}$ such that $x_1 + \ldots + x_K = X$ and each $x_i$ is a natural number. 
After the partitions are chosen, the parts are matched uniformly at random. The result at each matched pair is computed like in the Colonel Blotto game. The payoff to a player is an average of his results over all possible pairings. Given the strategies $\bm{x}^{\mathrm{A}}$ and $\bm{x}^{\mathrm{B}}$ of A and B, respectively, the payoff to player A is equal to
\begin{equation*}
h_{\mathcal{L}}\!\left(\bm{x}^{\mathrm{A}},\bm{x}^{\mathrm{B}}\right) = \frac{1}{K^{2}}\sum_{i = 1}^{K} \sum_{j = 1}^{K} \sign(x^{\mathrm{A}}_i - x^{\mathrm{B}}_j)
\end{equation*}
and the payoff to player B is equal to minus this quantity. The game is finite and zero-sum.

The Colonel Lotto game is related to the Colonel Blotto in the following sense. Given a pure strategy $\bm{x}\in S(X)$ of a player with $X$ resources in the Colonel Blotto game, let $\sigma(\bm{x})$ denote a mixed strategy in the Colonel Blotto game that assigns equal probability, $\frac{1}{K!}$, to each permutation of $\bm{x}$. Similarly, given a mixed strategy $\bm{\xi}$ of a player with $X$ resources in the Colonel Blotto game, let $\sigma(\bm{\xi})$ denote a mixed
strategy obtained by replacing each pure strategy, $\bm{x}$, in the support of $\bm{\xi}$ by $\sigma(\bm{x})$. The strategies $\sigma(\bm{x})$ and $\sigma(\bm{\xi})$ are called \emph{symmetric across battlefields}. As was observed in~\cite{Hart08}, $h_{\mathcal{B}}(\sigma(\bm{\xi}), \bm{y}) = h_{\mathcal{L}}(\bm{\xi},\bm{y})$, for any pure strategy, $\bm{y}$,
of player B. Consequently, $h_{\mathcal{B}}(\sigma(\bm{\xi}), \bm{\zeta}) = h_{\mathcal{L}}(\bm{\xi},\bm{\zeta})$, for any mixed strategy $\bm{\zeta}$ of player B. Analogous facts hold for the strategies of player B. This leads to the following observation:

\begin{observation}[\cite{Hart08}]
\label{obs:1}
The Colonel Blotto game $\blotto(A, B; K)$ and the Colonel Lotto game $\lotto(A, B; K)$ have the same value. 
Moreover, the mapping $\sigma$ maps the optimal strategies in the Colonel Lotto game onto the optimal strategies in the Colonel Blotto game that are symmetric across battlefields.
\end{observation}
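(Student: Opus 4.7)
The plan is to use the payoff identity recorded just above the observation to translate back and forth between mixed Lotto strategies and symmetric mixed Blotto strategies, and then to show that this translation preserves both the value and optimality.

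First I would fix the correspondence explicitly. Every Lotto strategy $\bm{x} = \mset{x_1,\ldots,x_K}$ corresponds canonically, via $\sigma$, to the symmetric Blotto mixed strategy drawing a uniformly random permutation of $(x_1,\ldots,x_K)$, and this extends by linearity to a bijection from Lotto mixed strategies onto symmetric Blotto mixed strategies, whose inverse simply projects each pure Blotto strategy in the support to its underlying multiset. By linearity of expectation, the identity $h_{\mathcal{B}}(\sigma(\bm{\xi}),\bm{y}) = h_{\mathcal{L}}(\bm{\xi},\bm{y})$ already quoted from~\cite{Hart08} extends from pure opponent strategies to arbitrary mixed ones, and analogously with the roles of A and B reversed.

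Next I would establish the equality of the values. Let $v$ denote the value of $\lotto(A,B;K)$ and fix a pair $(\bm{\xi}^\ast,\bm{\zeta}^\ast)$ of optimal Lotto strategies. For any mixed Blotto strategy $\bm{\zeta}$ of player B the extended payoff identity gives
\begin{equation*}
h_{\mathcal{B}}(\sigma(\bm{\xi}^\ast),\bm{\zeta}) = h_{\mathcal{L}}(\bm{\xi}^\ast,\bm{\zeta}) \geq v,
\end{equation*}
where on the right $\bm{\zeta}$ is read as its induced Lotto mixed strategy, so $\sigma(\bm{\xi}^\ast)$ secures at least $v$ for A in Blotto. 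The symmetric argument using $\sigma(\bm{\zeta}^\ast)$ yields the matching upper bound, so by the minimax theorem $\blotto(A,B;K)$ has value $v$ and both $\sigma(\bm{\xi}^\ast)$ and $\sigma(\bm{\zeta}^\ast)$ are optimal symmetric Blotto strategies.

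Finally, for the surjectivity claim I would start from an arbitrary optimal Blotto strategy $\bm{\eta}$ of A that is symmetric across battlefields, write $\bm{\eta} = \sigma(\bm{\xi})$ using the bijection above, and verify that $\bm{\xi}$ is optimal in Lotto. Given any Lotto strategy $\bm{\zeta}'$ of B, I would pick any Blotto pure strategy $\bm{\zeta}$ whose underlying multiset is $\bm{\zeta}'$; the payoff identity then gives $h_{\mathcal{L}}(\bm{\xi},\bm{\zeta}') = h_{\mathcal{B}}(\bm{\eta},\bm{\zeta}) \geq v$, so $\bm{\xi}$ is Lotto-optimal, and the analogous argument handles the symmetric Blotto optima of B. The only mildly delicate point in the whole proof is the ordered-versus-unordered bookkeeping needed to pull Blotto mixed strategies back to Lotto mixed strategies, but once the payoff identity has been extended to mixed strategies by linearity of expectation this is entirely routine.
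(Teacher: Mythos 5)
Your proof is correct and takes essentially the same route as the paper, which presents this observation as an immediate consequence of the payoff identity $h_{\mathcal{B}}(\sigma(\bm{\xi}),\bm{y}) = h_{\mathcal{L}}(\bm{\xi},\bm{y})$ and its extension to mixed opponent strategies stated just before it. You simply spell out the routine details the paper leaves implicit: the bijection between Lotto strategies and symmetric Blotto strategies, the two security bounds giving equality of values, and the pull-back argument showing $\sigma$ is onto the symmetric Blotto optima.
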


The \emph{General Lotto game} involves two players, $A$ and $B$, with budgets $a\in \mathbb{R}_{\geq 0}$ and $b \in \mathbb{R}_{\geq 0}$, respectively. A pure strategy of a player with budget $m$ is a probability distribution on natural numbers with the mean equal to $m$,\footnote{
Given a non-empty set $X$, $\Delta(X)$ denotes the set of probability distributions on $X$.}
\begin{equation*}
S_{\general}(m) = \left\{\bm{p} \in \Delta(\mathbb{Z}_{\geq 0}) : \sum_{i \geq 1} ip_i = m \right\}.
\end{equation*}
Given a strategy profile in the General Lotto game, $(\bm{p}^{\mathrm{A}},\bm{p}^{\mathrm{B}})$, and two non-negative integer valued random variables $X^{\mathrm{A}}$ and $X^{\mathrm{B}}$ such that, for any $i \in \mathbb{Z}_{\geq 0}$, $\Pr(X^{\mathrm{A}} = i) = p^{\mathrm{A}}_i$ and $\Pr(X^{\mathrm{B}} = i) = p^{\mathrm{B}}_i$, the payoff to player A is equal to
\begin{displaymath}
H\!\left(X^{\mathrm{A}},X^{\mathrm{B}}\right) = \Pr\!\left(X^{\mathrm{A}} > X^{\mathrm{B}}\right) - \Pr(X^{\mathrm{A}} < X^{\mathrm{B}})
\end{displaymath}
and the payoff to player B is equal to $H(X^{\mathrm{B}},X^{\mathrm{A}}) = -H(X^{\mathrm{A}},X^{\mathrm{B}})$.
The game is zero-sum and infinite.

The General Lotto game is connected to the Colonel Lotto game as follows. Any unordered $K$-partition $\bm{z} = \mset{z_1,\ldots,z_K}$ of a natural number $C$ corresponds to a probability distribution on non-negative integers, $\bm{q}$, such that the probability of value $i \in \mathbb{Z}_{\geq 0}$, $q_i$, is equal the frequency of $i$ in $\bm{z}$, i.e. to the number of times $i$ appears in $\bm{z}$ divided by $K$. The mean of this probability distribution is then $C/K$. We say that $K$-partition $\bm{z}$ \emph{$(C,K)$-implements} probability distribution $\bm{q}$.

This construction links pure strategies $\bm{x}^{\mathrm{A}}$ and $\bm{x}^{\mathrm{B}}$ of players A and B in the Colonel Lotto game with discrete non-negative integer valued probability distributions with means $A/K$ and $B/K$, respectively.  The payoff $h_{\mathcal{L}}(\bm{x}^{\mathrm{A}},\bm{x}^{\mathrm{B}})$ can be then written as
\begin{displaymath}
h_{\mathcal{L}}(\bm{x}^{\mathrm{A}},\bm{x}^{\mathrm{B}}) = H(X^{\mathrm{A}}, X^{\mathrm{B}}),
\end{displaymath}
where $X^{\mathrm{A}}$ and $X^{\mathrm{B}}$ are non-negative integer valued random variables bounded from above by $A$ and $B$, respectively, and distributed according to the probability distributions associated with $\bm{x}^{\mathrm{A}}$ and $\bm{x}^{\mathrm{B}}$, respectively.
General Lotto game could be seen as a generalization of Colonel Lotto game which allows for strategies of the players to be unbounded random variables. 
Every strategy of player A in the Colonel Lotto game $\lotto(A,B;K)$ $(A,K)$-implements a strategy of player A in the General Lotto game $\general(A/K, B/K)$. However, not every strategy of player A in $\general(A/K, B/K)$ has the corresponding strategy in $\lotto(A,B;K)$ that $(A,K)$-implements it.
The strategies of player A in the General Lotto game $\general(A/K, B/K)$ for which there exists a corresponding strategy in the Colonel Lotto game $\lotto(A,B;K)$ are called \emph{$(A,K)$-feasible}. The $(B,K)$-feasible strategies of $\general(A/K,B/K)$ are defined analogously.
A strategy profile in the General Lotto game $\general(A/K,B/K)$ is $(A,B,K)$-feasible if the strategy of player A in the profile is $(A,K)$-feasible and the strategy of player B in the profile is $(B,K)$-feasible.
This discussion leads to the following observation.

\begin{observation}[\cite{Hart08}]
\label{obs:2}
If an equilibrium in the General Lotto game $\general(A/K,B/K)$ is $(A,B,K)$-feasible then the implementing strategy profile is an equilibrium in the Colonel Lotto game $\lotto(A, B; K)$. In this case every strategy implementing an optimal strategy of player A in $\general(A/K,B/K)$ is an optimal strategy of A in $\lotto(A,B;K)$ and an analogous fact holds for the optimal strategies of player B. 
Moreover, in this case, the Colonel Lotto game, $\lotto(A, B; K)$, and the General Lotto game, $\general(A/K,B/K)$, have the same value.
\end{observation}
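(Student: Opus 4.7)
The plan is to exploit the payoff equality $h_{\lotto}(\bm{x}^{\mathrm{A}},\bm{x}^{\mathrm{B}}) = H(X^{\mathrm{A}},X^{\mathrm{B}})$ whenever $(\bm{x}^{\mathrm{A}},\bm{x}^{\mathrm{B}})$ $(A,K)$- and $(B,K)$-implements $(\bm{p}^{\mathrm{A}},\bm{p}^{\mathrm{B}})$. The key asymmetry is that every feasible Colonel Lotto strategy implements a General Lotto strategy (so the set of deviations available in $\lotto$ is a subset of those available in $\general$), while the converse fails. Consequently, an equilibrium of $\general(A/K,B/K)$ that happens to be realized by a feasible profile of $\lotto(A,B;K)$ must also be an equilibrium of $\lotto$, because the players have strictly fewer deviations to consider.

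Concretely, let $(\bm{p}^{\mathrm{A}},\bm{p}^{\mathrm{B}})$ be a $(A,B,K)$-feasible equilibrium of $\general(A/K,B/K)$ with value $v$, and let $(\bm{x}^{\mathrm{A}},\bm{x}^{\mathrm{B}})$ be any implementing profile in $\lotto(A,B;K)$. First I would observe that $h_{\lotto}(\bm{x}^{\mathrm{A}},\bm{x}^{\mathrm{B}}) = H(X^{\mathrm{A}},X^{\mathrm{B}}) = v$. Next, for any deviation $\bm{y}^{\mathrm{A}} \in S_{\lotto}(A)$ of player~A in Colonel Lotto, let $\bm{q}^{\mathrm{A}} \in S_{\general}(A/K)$ be the probability distribution it $(A,K)$-implements. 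By definition $\bm{q}^{\mathrm{A}}$ is a legal strategy in $\general(A/K,B/K)$, so the General Lotto equilibrium condition yields $H(Y^{\mathrm{A}},X^{\mathrm{B}}) \leq v$, and by the payoff equality this gives $h_{\lotto}(\bm{y}^{\mathrm{A}},\bm{x}^{\mathrm{B}}) \leq v = h_{\lotto}(\bm{x}^{\mathrm{A}},\bm{x}^{\mathrm{B}})$. The symmetric argument applies to player B. Hence $(\bm{x}^{\mathrm{A}},\bm{x}^{\mathrm{B}})$ is a Nash equilibrium of $\lotto(A,B;K)$, and the value of $\lotto(A,B;K)$ coincides with $v$, the value of $\general(A/K,B/K)$.

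For the second claim, I would use that both games are finite/zero-sum-like with interchangeable max-min strategies. If $\bm{p}^{\mathrm{A}}$ is any optimal strategy of A in $\general(A/K,B/K)$ (so $H(X^{\mathrm{A}},Y^{\mathrm{B}}) \geq v$ for every strategy of B in $\general$), and $\bm{x}^{\mathrm{A}}$ is a strategy in $\lotto(A,B;K)$ that $(A,K)$-implements $\bm{p}^{\mathrm{A}}$, then for any $\bm{y}^{\mathrm{B}} \in S_{\lotto}(B)$ the implemented distribution $\bm{q}^{\mathrm{B}}$ is a legal strategy in $\general$, so $h_{\lotto}(\bm{x}^{\mathrm{A}},\bm{y}^{\mathrm{B}}) = H(X^{\mathrm{A}},Y^{\mathrm{B}}) \geq v$. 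Since $v$ is the value of $\lotto(A,B;K)$, this shows $\bm{x}^{\mathrm{A}}$ is optimal in $\lotto$. The same reasoning works for player B.

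There is not much of a hard step here; the argument is essentially bookkeeping, and the only delicate point is to remember the direction of the inclusion between feasible strategies in the two games. The statement would fail in the reverse direction (an equilibrium of $\lotto$ need not, via its implemented distributions, be an equilibrium of $\general$), precisely because $\general$ admits unbounded deviations that have no counterpart in $\lotto$; feasibility of the $\general$-equilibrium is exactly what is needed to rule this obstruction out.
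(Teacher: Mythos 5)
Your proposal is correct and follows essentially the same route as the paper: the paper presents this observation (citing Hart) as an immediate consequence of the preceding discussion, namely that the payoff map satisfies $h_{\mathcal{L}}(\bm{x}^{\mathrm{A}},\bm{x}^{\mathrm{B}}) = H(X^{\mathrm{A}},X^{\mathrm{B}})$ for implementing profiles and that Colonel Lotto strategies embed (via implementation) into a subset of the General Lotto strategy space, which is exactly the payoff-preserving restriction-of-deviations argument you spell out. Your explicit treatment of the two directions (equilibrium property and optimality of each implementing strategy) is just the careful bookkeeping the paper leaves implicit.
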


\subsection{The general approach}
\label{sec:approach}

By Observations~\ref{obs:1} and~\ref{obs:2}, solutions to the Colonel Blotto game $\blotto(A,B;K)$
can be obtained by first solving the General Lotto game $\general(A/K,B/K)$ and then $(A,K)$-implementing an optimal strategy for player A and $(B,K)$-implementing an optimal strategy for player B. 
We follow this approach in this paper. Before getting to finding implementations of optimal strategies in the General Lotto games, we recall the probability distributions that the optimal strategies in the General Lotto games are based on.
These strategies were defined in~\cite{Hart08} and~\cite{Dziubinski12}.

Given $m \geq 1$, let $\uo{m}$, $\ue{m}$, $\uoup{m}$ (in the case of $m \geq 2$),
$\w{j}{m}$ (for $j \in \{1,\ldots,m-1\}$) and $\vd{j}{m}$ (for $j \in \{1,\ldots,m\}$),
be vectors defined as follows:\footnote{
Throughout the paper we number the rows of matrices starting for $0$ and we number the columns starting from $1$.
}
\begin{align*}
\uo{m} & := [\underbrace{0,1,\ldots,0,1,0}_{2m+1}]^T,\\
\ue{m} & := [\underbrace{1,0,\ldots,1,0,1}_{2m+1}]^T,\\
\uoup{m} & := [\underbrace{0,0,1,0\ldots,1,0,0}_{2m+1}]^T,\\
\w{j}{m} & := [1,\underbrace{0,2,\ldots,0,2}_{2(j-1)},0,1,\underbrace{2,0,\ldots,2,0}_{2(m-j)} ]^T,\\
\vd{j}{m} & := [\underbrace{0,2,\ldots,0,2}_{2(j-1)},0,1,2\underbrace{0,2,\ldots,0,2}_{2(m-j)} ]^T.
\end{align*}
Given $m \geq 1$, let $\uovec{m}$, $\uevec{m}$, $\uoupvec{m}$ (in the case of $m \geq 2$),\footnote{
In the cases when $\uoup{m}$ or $\uoupvec{m}$ is multiplied by $0$, to simplify the notation, we will sometimes write $\uoup{m}$ or $\uoupvec{m}$, respectively, for $m = 1$.} $\wvec{j}{m}$ (for $j \in \{1,\ldots,m-1\}$) and $\vvec{j}{m}$ (for $j \in \{1,\ldots,m\}$),
be stochastic vectors defined as follows:
\begin{align*}
\uovec{m} & := \left(\frac{1}{m}\right) \uo{m}, \qquad\qquad
& \uevec{m} & := \left(\frac{1}{m+1}\right) \ue{m},\\
\uoupvec{m} & := \left(\frac{1}{m-1}\right) \uoup{m},
& \wvec{j}{m} & := \left(\frac{1}{2m}\right) \w{j}{m},\\
\vvec{j}{m} & := \left(\frac{1}{2m+1}\right) \vd{j}{m} & &.
\end{align*}

Lastly, let
\begin{equation*}
\mathcal{U}^m = \{\uevec{m}, \uovec{m}\}, \qquad\qquad
\mathcal{W}^m = \{\wvec{1}{m},\ldots, \wvec{m-1}{m}\},\qquad\qquad
\mathcal{V}^m = \{\vvec{1}{m},\ldots, \vvec{m}{m}\}
\end{equation*}
denote sets of stochastic vectors defined above (where $\mathcal{W}^1 = \varnothing$).
Given a vector $\bm{x}$, a set of vectors $\mathcal{Y}$, and $\lambda_1,\lambda_2 \in \mathbb{R}$, let
\begin{displaymath}
\lambda_1 \bm{x} + \lambda_2 \mathcal{Y} = \{\lambda_1 \bm{x} + \lambda_2 \bm{y} : \bm{y} \in \mathcal{Y} \}
\end{displaymath}
denote the set of distributions that can be obtained by linearly combining $\bm{x}$ and the vectors from $\mathcal{Y}$ with coefficients $\lambda_1$ and
$\lambda_2$, respectively. Similarly, given two sets of vectors $\mathcal{X}$ and $\mathcal{Y}$ as well as $\lambda_1,\lambda_2 \in \mathbb{R}$, let
\begin{displaymath}
\lambda_1 \mathcal{X} + \lambda_2 \mathcal{Y} = \{\lambda_1 \bm{x} + \lambda_2 \bm{y} : \bm{x} \in \mathcal{X} \textrm{ and } \bm{y} \in \mathcal{Y} \}
\end{displaymath}
denote the set of distributions obtained by linearly combining distributions from $\mathcal{X}$ and $\mathcal{Y}$ with coefficients $\lambda_1$ and $\lambda_2$,
respectively. Given a set of vectors $\mathcal{X}$ let $\conv{\mathcal{X}}$ denote the set of all convex combinations of vectors from $\mathcal{X}$.

A pure strategy of a player with $C$ units in a Colonel Lotto game with $K$ battlefields can be represented as a $1\times K$ row vector of non-negative integers that sum up to $C$. Let us denote the set of such vectors as $U(C)$.
A mixed strategy $\bm{\zeta}$ such that the probability $\zeta_{\bm{z}}$ of each unordered $K$-partition $\bm{z}$ in the support of $\bm{\zeta}$ is a rational number, $\zeta_{\bm{z}} = l_{\bm{z}}/k_{\bm{z}}$,  can be represented as a $L \times K$ matrix, where $L = \sum_{\bm{z}\in U(C)} q \lcm_{\bm{x}\in U(C)}(k_{\bm{x}}) l_{\bm{z}}/k_{\bm{z}}$, $q \geq 1$ is a natural number, $\lcm_{\bm{x}\in U(C)}(k_{\bm{x}})$ is the least common multiple of the denominators $(k_{\bm{x}})_{\bm{x}\in U(C)}$, and each row representing partition $\bm{x}$ is repeated $q \lcm_{\bm{x}\in U(C)}(l_{\bm{x}}) l_{\bm{z}} /k_{\bm{z}}$ times.

Given a $L \times K$ matrix of non-negative integers, $\mathbf{X}$, and a non-negative integer $i\in \mathbb{Z}_{\geq 0}$, let $\# \mathbf{X}(i)$ be the number of appearances of $i$ in $\mathbf{X}$. A vector $\bm{x}$ such that $x_i = \# \mathbf{X}(i)$ is called the \emph{cardinality vector} of $\mathbf{X}$ and is denoted by $\card(\mathbf{X})$.
A mixed strategy of a player with $C$ units in a Colonel Lotto game with $K$ battlefields, represented by a $L\times K$ matrix $\mathbf{X}$, gives rise to a strategy $\bm{z} = \card(\mathbf{X})/(LK)$ in a General Lotto game, where the player has budget $C/K$. In this case matrix $\mathbf{X}$ \emph{$(C,K)$-implements} $\bm{z}$. 

To implement optimal strategies in the General Lotto game we will construct matrices that implement them. 
Often these matrices will be constructed from simpler matrices by means of horizontal and vertical composition.
Given a $L \times K_1$ matrix $\mathbf{X}$ and a $L \times K_2$ matrix $\mathbf{Y}$, $\mathbf{X} | \mathbf{Y} = [\mathbf{X} | \mathbf{Y}]$ denotes a $L \times (K_1+K_2)$ matrix being the horizontal composition of matrices $\matr{X}$ and $\matr{Y}$. Similarly, given a $L_1 \times K$ matrix $\mathbf{X}$ and a $L_2 \times K$ matrix $\mathbf{Y}$, we use
\begin{displaymath}
\mathbf{X} \sslash \mathbf{Y} = \left[ \begin{array}{c} \mathbf{X} \\
                                                   \mathbf{Y} \end{array} \right]
\end{displaymath}
to denote a $(L_1+L_2)\times K$ matrix being the vertical composition of $\mathbf{X}$ and $\mathbf{Y}$. We also use the standard generalizations of these operators to multiple arguments,
$|_{i = 1}^{k} \mathbf{X}_i$ and $\sslash_{i = 1}^{k} \mathbf{X}_i$, where $\mathbf{X}_i$ are matrices satisfying the required dimensionality constraints.
Clearly $\card(\mathbf{X} | \mathbf{Y}) = \card(\mathbf{X}) + \card(\mathbf{Y})$ as well as 
$\card(\mathbf{X} \sslash \mathbf{Y}) = \card(\mathbf{X}) + \card(\mathbf{Y})$.

\cite{Hart08} gave necessary and sufficient conditions for $(C,K)$-implementability of strategies $\uevec{m}$ and
$\uovec{m}$.

\begin{proposition}[\cite{Hart08}]
\label{pr:hart}
Let $C \geq 1$ and $K \geq 2$ be natural numbers such that $\divides{K}{C}$ and let $m = C/K$.
\begin{enumerate}
\item $\uovec{m}$ is $(C,K)$-feasible if and only if $C \equiv K \pmod 2$.\label{p:hart:1}
\item $\uevec{m}$ is $(C,K)$-feasible if and only if $C$ is even.\label{p:hart:2}
\end{enumerate}
\end{proposition}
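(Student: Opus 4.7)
The plan is to prove necessity by a direct parity argument and sufficiency by explicit construction, using a composition operation to reduce arbitrary $K$ to small base cases.

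For necessity, if a matrix $\matr{X}$ $(C,K)$-implements $\uovec{m}$, then because the card vector of $\matr{X}$ is a positive multiple of $\uo{m}$, every entry of $\matr{X}$ lies in the support $\{1,3,\ldots,2m-1\}$ of $\uovec{m}$. Each row of $\matr{X}$ consists of $K$ odd numbers summing to $C$, and a sum of $K$ odd numbers has parity $K$; hence $C\equiv K\pmod 2$. For part (2), the entries lie in $\{0,2,\ldots,2m\}$, so every row sum is even and thus $C$ is even.

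For sufficiency, the key tool is a horizontal composition lemma: if $\matr{X}_1$ of size $L_1\times K_1$ $(K_1 m, K_1)$-implements $\uovec{m}$ and $\matr{X}_2$ of size $L_2\times K_2$ $(K_2 m, K_2)$-implements $\uovec{m}$, then the $L_1 L_2\times(K_1+K_2)$ matrix built by concatenating each row of $\matr{X}_1$ with each row of $\matr{X}_2$ $(m(K_1+K_2), K_1+K_2)$-implements $\uovec{m}$. This follows because row sums add, entries remain in the support, and the resulting card vector is $L_2\,\card(\matr{X}_1)+L_1\,\card(\matr{X}_2)$, a nonnegative combination of multiples of $\uo{m}$, hence still proportional to $\uo{m}$. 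The analogous statement holds for $\uevec{m}$.

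It then suffices to exhibit base cases. For $K=2$, the $m\times 2$ matrix with rows $(2j-1,\,2m-2j+1)$ for $j=1,\ldots,m$ $(C,2)$-implements $\uovec{m}$ for every $m\geq 1$, and the $(m+1)\times 2$ matrix with rows $(2j,\,2m-2j)$ for $j=0,\ldots,m$ $(2m,2)$-implements $\uevec{m}$; in both, each support value appears exactly twice. To reach arbitrary even $K$, one cross-products $K/2$ copies of the corresponding $K=2$ matrix. For odd $K$, one combines a single $K=3$ matrix with $(K-3)/2$ copies of the $K=2$ matrix, so the remaining task is a $K=3$ base case: for $\uovec{m}$ with $m$ odd, one selects a balanced family of triples $(2i-1,2j-1,2k-1)$ with $i,j,k\in\{1,\ldots,m\}$ and $i+j+k=3(m+1)/2$, closed under coordinate permutations and the involution $i\mapsto m+1-i$, in which each value in $\{1,\ldots,m\}$ appears equally often in each column; for $\uevec{m}$ with $m$ even, analogously with even triples summing to $3m$. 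The main obstacle is verifying that such a balanced $K=3$ family actually exists for every admissible $m$ — everything else is routine row counting — and this is where the combinatorial heart of the proof sits; once a symmetric family is written down, the uniform column marginal follows immediately from the two symmetries.
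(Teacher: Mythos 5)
Your necessity argument, your $K=2$ base matrices, and your composition lemma are all sound; the cross-product composition is a harmless variant of the paper's device (the paper instead concatenates matrices with the \emph{same} number of rows side by side, which is why all of Hart's base matrices are built to have $m+1$ rows). The genuine gap is the $K=3$ base case, which you yourself identify as the combinatorial heart and then do not supply: you only list properties a ``balanced family'' of triples should satisfy, and the reason you give for why the verification is then immediate is false. Closure under coordinate permutations and under the involution $i \mapsto m+1-i$ does \emph{not} imply that each value appears equally often in each column. Take $m=3$, so triples $(i,j,k)\in\{1,2,3\}^3$ with $i+j+k=3(m+1)/2=6$: the full family consists of the six permutations of $(1,2,3)$ together with $(2,2,2)$; it is closed under both symmetries, yet in each column the value $2$ appears three times while $1$ and $3$ appear only twice. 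So the two symmetries buy you only equality of the three column marginals and their invariance under reflection, never uniformity over $\{1,\ldots,m\}$; the entire difficulty is choosing the multiset of rows so that the counts come out uniform, and that choice is exactly what is missing from your proposal.

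For comparison, the paper closes this step by exhibiting the family explicitly: $\bm{RE}(m) = \bm{RE}^{\mathrm{I}}(m) \sslash \bm{RE}^{\mathrm{II}}(m)$, an $(m+1)\times 3$ matrix (for $m$ even) with rows $\left[\,2i,\; m+2i,\; 2m-4i\,\right]$ for $i=0,\ldots,m/2$ stacked over rows $\left[\,2i,\; m+2i+2,\; 2m-4i-2\,\right]$ for $i=0,\ldots,m/2-1$, for which one checks directly that $\card(\bm{RE}(m)) = 3\ue{m}$; the odd-support case then comes for free from the shift $\bm{RO}(m) = \bm{RE}(m-1)+\bm{1}$, since $\uo{m}$ is $\ue{m-1}$ shifted up by one place (a trick that would also spare you from treating $\uovec{m}$ and $\uevec{m}$ separately). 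To repair your proof you must either write down such an explicit $K=3$ family for every admissible $m$ or give a genuine existence argument; once that is in place, your composition lemma does finish the argument exactly as you describe.
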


To prove this proposition, \cite{Hart08} provided explicit descriptions of the implementing strategies. We recall these strategies below. Our presentation is based on the representation proposed in~\cite{Dziubinski17}, as it is useful for representing and constructing more complex strategies out of other strategies. It will also help to familiarize the reader with our approach to constructing optimal strategies later in the paper.

Given a natural number $m \geq 1$, let $\bm{E}(m)$ be $(m+1) \times 2$ matrix defined as follows:
\begin{displaymath}
\bm{E}(m) = \left[ \begin{array}{c c}
                       0 & 2m \\
                       2 & 2m-2 \\
                       \vdots & \vdots \\
                       2m-2 & 2 \\
                       2m & 0
                       \end{array}
                 \right],
\end{displaymath}
that is the $i$'th row, $e_{i} = \left[ \begin{array}{c c} 2i & 2(m-i) \end{array} \right]$.
Each row of matrix $\bm{E}(m)$ represents a bipartition, a strategy in the Colonel Lotto game with $K=2$ battlefields and a player with $2m$ units.
The whole matrix represents a mixed strategy, where each partition (row) is chosen with probability $1/m$.
Notice that $\card(\bm{E}(m)) = 2\ue{m}$ and $\card\left(|_{i = 1}^{l} \bm{E}(m)\right) = 2l\ue{m}$, for all $l \in \mathbb{Z}_{\geq 0}$.
Clearly $\left(|_{i = 1}^{l} \bm{E}(m)\right)/(2l(m+1)) = \uevec{m}$, for all $l \in \mathbb{Z}_{\geq 0}$.
Thus $|_{i = 1}^{\frac{K}{2}} \bm{E}(m)$ $(mK,K)$-implements $\uevec{m}$ in the case of $K$ being even.

Let $\bm{RE}(m) = \bm{RE}^{\mathrm{I}}(m) \sslash \bm{RE}^{\mathrm{II}}(m)$ be a $(m+1) \times 3$ matrix consisting of two blocks,
$\bm{RE}^{\mathrm{I}}(m)$ and $\bm{RE}^{\mathrm{II}}(m)$, where
\begin{displaymath}
\bm{RE}^{\mathrm{I}}(m) = \left[ \begin{array}{c c c}
                       0 & m & 2m \\
                       2 & m+2 & 2m-4 \\
                       \vdots & \vdots & \vdots \\
                       m & 2m & 0
                       \end{array}
                 \right],
\end{displaymath}
that is the $i$'th row, $re^{\mathrm{I}}_{i} = \left[ \begin{array}{c c c} 2i & m + 2i & 2m-4i \end{array} \right]$, and
\begin{displaymath}
\bm{RE}^{\mathrm{II}}(m) = \left[ \begin{array}{c c c}
                       0 & m+2 & 2m-2 \\
                       2 & m+4 & 2m-6 \\
                       \vdots & \vdots & \vdots \\
                       m-2 & 2m & 2
                       \end{array}
                 \right],
\end{displaymath}
that is the $i$'th row $re^{\mathrm{II}}_{i} = \left[ \begin{array}{c c c} 2i & m + 2i + 2 & 2m-4i-2 \end{array} \right]$.
It can be easily checked that $\card\left(\bm{RE}(m)\right) = 3\ue{m}$.
Hence $\card\left(\bm{RE}(m) | \left(|_{i = 1}^{l} \bm{E}(m)\right)\right)/((2l+3)(m+1)) = \uevec{m}$, for all $l \in \mathbb{Z}_{\geq 0}$.
Thus $\left(\bm{RE}(m) | \left(|_{i = 1}^{\frac{K-3}{2}} \bm{E}(m)\right)\right)$ $(mK,K)$-implements $\uevec{m}$ in the case of $K$ being odd and greater or equal to $3$.

Similarly, let $\bm{O}(m)$ be a $m \times 2$ matrix defined as $\bm{O}(m) = \bm{E}(m-1) + \bm{1}$, that is
row $o_{i} = \left[ \begin{array}{c c} 2i+1 & 2(m-i)+1 \end{array}\right]$, and let $\bm{RO}(m) = \bm{RE}(m-1) + \bm{1}$.
The cardinality vector for $\bm{O}(m)$ is the cardinality vector for $\bm{E}(m-1)$, $2\ue{m-1}$, `shifted up' by one place,
which is exactly $2\uo{m}$. Similarly, $\card(\bm{RO}(m)) = 3\uo{m}$. Thus $|_{i = 1}^{\frac{K}{2}} \bm{O}(m)$ $(mK,K)$-implements 
$\uovec{m}$ in the case of $K$ being even and $\left(\bm{RO}(m) | \left(|_{i = 1}^{\frac{K-3}{2}} \bm{O}(m)\right)\right)$ $(mK,K)$-implements 
$\uovec{m}$ in the case of $K$ being odd.

\subsection{The General Lotto solvable cases}
\label{sec:genlot}

In this section we solve all the non-symmetric cases of the Colonel Blotto game that have solutions that correspond to solutions in the General Lotto game. We split the analysis into two subcases of $K$ not dividing $A$ and and $K$ dividing $A$. This is because the corresponding General Lotto games, $\general(A/K,B/K)$, have different sets of optimal strategies depending on whether $A/K$ is an integer or not.

\subsubsection{The case of $\ndivides{K}{A}$}
Below we recall the results from~\cite{Hart08} and~\cite{Dziubinski12}, characterizing the optimal strategies and the value of the game in the General Lotto game $\general(a,b)$ with $a > b > 0$ and non-integer $a$. We collect these results in a single statement below.

\begin{theorem}[\cite{Hart08} and~\cite{Dziubinski12}]
\label{th:hart4}
Let $a =m + \alpha$ and $b \leq m$, where $m \geq 1$ is an integer and $0 < \alpha < 1$. Then the value of General Lotto game $\general(a,b)$ is
\begin{displaymath}
\val{\general(a,b)} = (1-\alpha)\frac{\lfloor a \rfloor - b}{\lfloor a \rfloor} + \alpha \frac{\lceil a \rceil - b}{\lceil a \rceil} = 1 - \frac{(1-\alpha)b}{m} - \frac{\alpha b}{m+1}.
\end{displaymath}
The optimal strategies are as follows:
\begin{enumerate}
\item Strategy $\bm{y} = (1 - b/m) \uvec{0} + (b/m) \uevec{m}$ is the unique optimal strategy of Player $\mathrm{B}$.
\item Strategy $\bm{x}$ is optimal for Player $\mathrm{A}$ if and only if
\begin{displaymath}
\bm{x} \in \conv{\mathcal{U}^{m,\alpha} \cup \mathcal{X}^{m,\alpha}}, \textrm{where}
\end{displaymath}
\begin{itemize}
\item $\mathcal{U}^{m,\alpha} = (1-\alpha)\mathcal{U}^m + \alpha \uovec{m+1}$, if $b = m$,
\item $\mathcal{U}^{m,\alpha} = \left\{(1-\alpha)\uovec{m} + \alpha \uovec{m+1}\right\}$, if $b < m$
\end{itemize}
and
\begin{itemize}
\item $\mathcal{X}^{m,\alpha} = \alpha\delta \mathcal{V}^{m} + \left(1 - \alpha\delta \right) \mathcal{U}^{m}$, if $0 < \alpha \leq \frac{m+1}{2m+1}$ and $b = m$,
\item $\mathcal{X}^{m,\alpha} = \alpha\delta \mathcal{V}^{m} + \left(1 - \alpha\delta \right) \uovec{m}$, if $0 < \alpha \leq \frac{m+1}{2m+1}$ and $b < m$,
\item $\mathcal{X}^{m,\alpha} = (1-\alpha)\sigma \mathcal{V}^{m} + \left(1 - (1-\alpha)\sigma \right) \uovec{m+1}$, if $\frac{m+1}{2m+1} < \alpha < 1$, where
\end{itemize}
\begin{equation*}
\delta = \frac{2m+1}{m+1},\quad \sigma = \frac{2m+1}{m}.
\end{equation*} 
\end{enumerate}
\end{theorem}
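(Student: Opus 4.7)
My plan is to view $\general(a,b)$ through the standard concave-envelope reformulation of Lotto-type payoffs. For any strategy $\bm{y}$ of player B with mean $b$, one writes
\begin{equation*}
H(X^{\mathrm{A}}, X^{\mathrm{B}}) = \Ex_{X^{\mathrm{A}}}\!\left[f_{\bm{y}}(X^{\mathrm{A}})\right], \qquad f_{\bm{y}}(k) = \Pr(X^{\mathrm{B}} < k) - \Pr(X^{\mathrm{B}} > k),
\end{equation*}
a non-decreasing step function on $\mathbb{Z}_{\geq 0}$. A best response of A solves $\max\{\Ex_\mu[f_{\bm{y}}] : \Ex_\mu[X] = a\}$, whose value equals $\hat{f}_{\bm{y}}(a)$, the concave envelope of $f_{\bm{y}}$ evaluated at $a$, attained by distributions $\mu$ whose support lies on a single affine piece of $\hat{f}_{\bm{y}}$ through contact points. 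Thus $\val{\general(a,b)} = \min_{\bm{y}} \hat{f}_{\bm{y}}(a)$, and the task reduces to identifying a saddle-point pair.

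First I would fix the candidate $\bm{y} = (1 - b/m)\uvec{0} + (b/m)\uevec{m}$ and compute $f_{\bm{y}}$ directly. Since $\bm{y}$ places mass $b/(m(m+1))$ on each of $2, 4, \ldots, 2m$ and the remaining mass on $0$, the values $f_{\bm{y}}(1), f_{\bm{y}}(2), \ldots, f_{\bm{y}}(2m+1)$ form an arithmetic progression with common difference $b/(m(m+1))$ terminating at $f_{\bm{y}}(2m+1) = 1$. The initial jump $f_{\bm{y}}(1) - f_{\bm{y}}(0) = 1 - b/m + b/(m(m+1))$ is at least this common slope (because $b \leq m$), so $\hat{f}_{\bm{y}}$ is linear on $[1, 2m+1]$ with slope $b/(m(m+1))$. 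Evaluating at $a = m + \alpha$ gives
\begin{equation*}
\hat{f}_{\bm{y}}(a) = f_{\bm{y}}(1) + (m+\alpha - 1)\cdot\frac{b}{m(m+1)} = 1 - \frac{(1-\alpha)b}{m} - \frac{\alpha b}{m+1},
\end{equation*}
matching the claimed value $v$, so $\bm{y}$ guarantees that A gets at most $v$.

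For each candidate $\bm{x}^* \in \mathcal{U}^{m,\alpha} \cup \mathcal{X}^{m,\alpha}$ I would verify two properties: first, that the support of $\bm{x}^*$ is contained in $\{1, 2, \ldots, 2m+1\}$ and its mean equals $a$, so that $\Ex[f_{\bm{y}}(X^{\mathrm{A}})] = \hat{f}_{\bm{y}}(a) = v$; and second, that the associated $g_{\bm{x}^*}(k) = \Pr(X^{\mathrm{A}} < k) - \Pr(X^{\mathrm{A}} > k)$ has concave envelope at $b$ bounded by $-v$, which by the symmetric argument means $\bm{x}^*$ guarantees at least $v$ against every B-strategy with mean $b$. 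Property one follows from the explicit forms of $\uovec{m}, \uovec{m+1}, \vvec{j}{m}$ together with a mean computation. For property two the mixture weights in $\mathcal{U}^{m,\alpha}$ and $\mathcal{X}^{m,\alpha}$ are calibrated precisely to make $g_{\bm{x}^*}$ linear through $(b, -v)$; for example $\bm{x}^* = (1-\alpha)\uovec{m} + \alpha\uovec{m+1}$ yields $g_{\bm{x}^*}(k) = -1 + k\bigl[(1-\alpha)/m + \alpha/(m+1)\bigr]$ on $\{0, 1, \ldots, 2m\}$, which passes through $-v$ at $k = b$ and is concave beyond. The constants $\delta = (2m+1)/(m+1)$ and $\sigma = (2m+1)/m$, and the threshold $(m+1)/(2m+1)$ separating the two subcases of $\mathcal{X}^{m,\alpha}$, emerge from requiring the mixtures involving $\mathcal{V}^m$ to remain in the probability simplex.

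The step I expect to be most delicate is the converse: uniqueness of $\bm{y}$ and completeness of $\conv{\mathcal{U}^{m,\alpha} \cup \mathcal{X}^{m,\alpha}}$ as the set of optimal A-strategies. For uniqueness of $\bm{y}$, I would argue that any perturbation (placing mass on an odd integer, or redistributing among the even support) either breaks the arithmetic progression of $f$ or creates a jump at an odd integer, in each case allowing A to concentrate mass at a suitable integer and push $\hat{f}(a)$ strictly above $v$; this requires a careful local analysis around each potential deviation. For completeness of A's optimal set, one must characterize the polytope of distributions on $\{1, \ldots, 2m+1\}$ with mean $a$ that satisfy the saddle-point indifference conditions read off from $g$, and then identify its extreme points with $\mathcal{U}^{m,\alpha} \cup \mathcal{X}^{m,\alpha}$. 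The threshold $\alpha = (m+1)/(2m+1)$ is precisely where the identity of these extreme points changes, so the polytope analysis must be split accordingly; this is where I expect most of the technical effort of \cite{Dziubinski12} to lie.
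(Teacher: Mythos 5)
First, a point of reference: the paper never proves Theorem~\ref{th:hart4} at all --- it is imported wholesale from \cite{Hart08} and \cite{Dziubinski12}, and the closest in-paper analogue is the appendix proof of Proposition~\ref{pr:general:constr}, which certifies optimality (not uniqueness or completeness) by direct tail-sum computations of $H(Z,Y)$; your concave-envelope reduction is the same content in different notation. Within that scope, your first half is correct: $f_{\bm{y}}$ is indeed arithmetic with common difference $b/(m(m+1))$ on $\{1,\ldots,2m+1\}$, the initial jump $1-b/(m+1)$ dominates that difference because $b\leq m$, so the envelope is the chord and $\hat{f}_{\bm{y}}(a)=1-(1-\alpha)b/m-\alpha b/(m+1)$; and your worked example correctly shows that $(1-\alpha)\uovec{m}+\alpha\uovec{m+1}$ guarantees this amount, which together pins down the value and one optimal strategy per player. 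But there are two slips in the sufficiency step. When $b=m$, your blanket claim that every candidate is supported in $\{1,\ldots,2m+1\}$ is false: $\mathcal{U}^{m,\alpha}$ and $\mathcal{X}^{m,\alpha}$ then contain mixtures involving $\uevec{m}$, which has mass at $0$; the repair is that for $b=m$ the initial jump equals the common difference, so $0$ joins the contact set. More seriously, the claim that the mixture weights make $g_{\bm{x}^*}$ \emph{linear} through $(b,-v)$ is wrong for the $\mathcal{V}^m$-based candidates: already for $m=1$, the candidate $\alpha\delta\vvec{1}{1}+(1-\alpha\delta)\uovec{1}$ puts mass $1-\alpha$ at $1$ and $\alpha$ at $2$, so $g$ has increments $1-\alpha$, then $1$, then $\alpha$, and is not even concave. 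Its concave envelope does pass through $(b,-v)$ (the envelope lies strictly above $g$ at the odd points), but establishing that for general $j$ and $m$ is a genuine computation you never perform, and it is exactly the non-trivial part of certifying $\mathcal{X}^{m,\alpha}$.

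The decisive gap is the converse. Uniqueness of $\bm{y}$ and completeness of $\conv{\mathcal{U}^{m,\alpha}\cup\mathcal{X}^{m,\alpha}}$ are what make this theorem a characterization rather than a value computation, and your proposal leaves both as plans: a local perturbation analysis for player B (which, to be a proof, must exclude arbitrary deviations, including ones moving mass above $2m+1$ or onto odd sites while rebalancing the mean) and an extreme-point identification for player A's polytope that is announced but not begun. These are precisely the contributions of \cite{Hart08} (uniqueness for B) and \cite{Dziubinski12} (the full optimal set for A), so deferring them means the statement is not proved. If you want a feasible route to uniqueness, equalization is more tractable than perturbation: any optimal $\bm{y}'$ must yield payoff exactly $v$ against every already-certified optimal A strategy, and the payoff formulas against $\vvec{j}{m}$ (quoted in the paper's appendix from \cite{Dziubinski12}) isolate the individual odd masses $p_{2j-1}$ of $\bm{y}'$ and its tail beyond $2m+1$; those equalities, combined with the cap $\hat{f}_{\bm{y}'}(a)\leq v$ and the mean constraint, give a system that pins $\bm{y}'$ down. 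The completeness claim for A has no comparably short route and genuinely requires the polytope analysis you postpone.
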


We start by establishing sufficient and necessary conditions for the unique optimal strategy of the weaker player $B$, $\bm{y} = (1 - b/m) \uvec{0} + (b/m) \uevec{m}$, to be $(B,K)$-feasible when $b = B/K$ and $m = \lfloor A/K\rfloor$.

\begin{proposition}
\label{pr:aimpl3}
Let $m \geq 1$, $K \geq 2$, and $B \leq Km$ be natural numbers and let $b = B/K$. Then 
\begin{equation*}
\left(1-\frac{B}{Km}\right) \uvec{0} + \frac{B}{Km} \uevec{m}
\end{equation*}
is $(B,K)$-feasible if and only if $B$ is even and $B \geq 2m$.
\end{proposition}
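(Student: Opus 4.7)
The proposition is an if-and-only-if; the proof splits into two directions.

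\textbf{Necessity.} Suppose $\bm{y} = (1 - B/(Km))\uvec{0} + (B/(Km))\uevec{m}$ admits a $(B,K)$-implementation via a matrix $\bm{X}$. The support of $\bm{y}$ is $\{0, 2, 4, \ldots, 2m\}$, so every entry of $\bm{X}$ is an even integer from this set. Since each row of $\bm{X}$ sums to $B$ with only even entries, $B$ is even. The positivity $y_{2m} = B/(Km(m+1)) > 0$ (using $B \geq 1$) forces $\bm{X}$ to contain the entry $2m$ in some row; as that row has non-negative entries summing to $B$, we obtain $B \geq 2m$.

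\textbf{Sufficiency.} Suppose $B$ is even and $2m \leq B \leq Km$, and construct an implementation. If $B = Km$, then $\bm{y} = \uevec{m}$, and Proposition~\ref{pr:hart}(\ref{p:hart:2}) provides one since $B$ is even. Otherwise $B < Km$; write $B = 2mk + 2r$ with $k = \lfloor B/(2m) \rfloor \geq 1$ and $r \in \{0, \ldots, m-1\}$. For $r = 0$, set $\bm{X}$ to be the $(m+1) \times K$ matrix formed by horizontally appending $k$ copies of $\bm{E}(m)$ followed by $K - 2k$ all-zero columns; each row sums to $2mk = B$, and $\card(\bm{X}) = 2k\ue{m} + (K - 2k)(m+1)\vers{0}$ equals $(m+1)K\bm{y}$, as one checks entrywise. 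For $r > 0$, the rows take the form $[\text{($k-1$ pairs from $\bm{E}(m)$)} \mid (2j,\, 2m + 2r - 2j) \mid (K - 2k \text{ zeros})]$, where the middle "adjusting pair" sums to $2m + 2r$ and is parametrized by $j \in \{r, \ldots, m\}$. Choosing the multiplicities of the pair indices over $L = m(m+1)$ rows so that the total cardinality vector matches the target $m(K(m+1) - B)\vers{0} + B\sum_{i=1}^{m}\vers{2i} = LK\bm{y}$ yields a valid implementation, and the numerical constraints ($B$ even, $B \geq 2m$, $B \leq Km$) guarantee that such a choice exists.

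\textbf{Main obstacle.} The core difficulty is the misaligned subcase $r > 0$: no uniform block construction produces rows summing to $B$. The construction above must thread the needle between the fixed row sum $B$ and the symmetric shape of the target cardinality vector, and verifying feasibility requires further sub-case analysis (particularly in edge cases such as $k = 1$, where there are no $\bm{E}(m)$-pairs at all and the whole structure is carried by the adjusting pair and zero padding). An alternative is to recast the problem as a transportation-type feasibility question---partitioning the target multiset of $m(K(m+1) - B)$ zeros together with $B$ copies each of $2, 4, \ldots, 2m$ into $L = m(m+1)$ groups of size $K$ summing to $B$---and verify a Hall-style condition under our hypotheses.
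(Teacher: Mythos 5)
Your necessity argument is correct and is essentially the paper's: every value in the support of the distribution is even and at most $2m$, and $2m$ occurs with positive probability, so any implementing matrix has all-even rows summing to $B$ and some row containing $2m$, giving $B$ even and $B \geq 2m$. The cases $B = Km$ and $r = 0$ (i.e.\ $2m \mid B$) are also handled correctly, much as in the paper. The gap is in the sufficiency argument for $r > 0$, which is where all the real work lies. Your row template forces exactly $K - 2k$ zero entries in every one of the $L = m(m+1)$ rows (the adjusting pair $(2j,\,2m+2r-2j)$ with $r \leq j \leq m$ has both entries positive when $r > 0$), so any matrix of this shape contains at least $m(m+1)(K-2k)$ zeros. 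But the target cardinality vector contains $m\bigl((m+1)K - B\bigr) = m(m+1)(K-2k) + 2m(k-r)$ zeros, which is \emph{strictly smaller} whenever $r > k$. Concretely, take $(m,K,B) = (5,3,14)$: then $k = 1$, $r = 2$, every row of your matrix is a permutation of $(2j,\,14-2j,\,0)$ with $j \in \{2,\ldots,5\}$, so exactly one third of all entries are zero, while the target frequency of zero is $y_0 = \frac{1}{15} + \frac{14}{15}\cdot\frac{1}{6} = \frac{2}{9} < \frac{1}{3}$. No choice of multiplicities can repair this, so the claim that ``the numerical constraints guarantee that such a choice exists'' is false for the structure you propose; the closing Hall-type reformulation is only suggested, not verified, so it does not fill the hole.

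The paper avoids exactly this obstruction by a different decomposition and a more flexible kernel. It writes $B = Lm + r$ with $L = \lfloor B/m\rfloor \geq 2$ and $0 \leq r < m$, splits on the parity of $L$, and reduces to implementing $(m-r)(m+1)\uvec{0} + (2m+r)\ue{m}$ in an $m(m+1)\times 3$ matrix ($L$ even) or $(m-r)(m+1)\uvec{0} + (3m+r)\ue{m}$ in an $m(m+1)\times 4$ matrix ($L$ odd), padding the remaining columns with whole $\bm{E}(m)$ pairs and all-zero columns. Inside these kernel matrices (the explicit block constructions $\bm{T}(m,r)$ and $\bm{S}(m,r)$) the zero entries are interleaved with positive entries across columns, and some rows contain no zero at all --- precisely the freedom your fixed ``zero padding plus one adjusting pair'' template lacks. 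Any repair of your approach must let the surplus zeros be absorbed into a kernel spanning at least three columns, which is in effect what the paper's block matrices are built to do.
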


By Proposition~\ref{pr:aimpl3}, if the Colonel Blotto game $\blotto(A,B;K)$ with $A > B$ and $\ndivides{K}{A}$ is solvable by the corresponding General Lotto game, then the number of units of the weaker player, $B$, is even
and it is at least twice the average number of units per battlefield, $m = \lfloor A/K\rfloor$, that the stronger player has.
In the following two propositions we establish optimal strategies in the General Lotto game for the stronger player A
which are $(A,K)$-feasible. 
Proposition~\ref{pr:aimpl1} covers the cases of $A$ and $K$ having the same parity.

\begin{proposition}
\label{pr:aimpl1}
Let $A > K \geq 2$ be natural numbers such that $K \nmid A$ and let $m = \lfloor A/K \rfloor$ and $\alpha = (A \bmod K)/K$. Then $(1-\alpha) \uovec{m} + \alpha \uovec{m+1}$ is $(A,K)$-feasible if and only if $A \equiv K \pmod 2$.
\end{proposition}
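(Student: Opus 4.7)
The plan is to prove the two directions separately. For necessity, I would observe that both $\uovec{m}$ and $\uovec{m+1}$ are supported on the odd integers $\{1,3,\ldots,2m+1\}$, hence so is the mixture. Any matrix that $(A,K)$-implements this mixture therefore consists of rows all of whose $K$ entries are odd, yet every row sums to $A$. Since the sum of $K$ odd integers has parity $K$, this forces $A \equiv K \pmod 2$.

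For sufficiency, assume $A \equiv K \pmod 2$, equivalently $(m-1)K + r$ is even, where $r = A \bmod K$. The strategy is horizontal composition of known building blocks. Given two implementations $M_1$ (of size $L_1 \times K_1$, row sum $A_1$, implementing $\bm{d}_1$) and $M_2$ ($L_2 \times K_2$, row sum $A_2$, implementing $\bm{d}_2$), one can replicate rows appropriately and form the Cartesian product composition to obtain a $(L_1 L_2) \times (K_1+K_2)$ matrix whose rows sum to $A_1 + A_2$ and whose cardinality vector realizes the frequency $(K_1 \bm{d}_1 + K_2 \bm{d}_2)/(K_1+K_2)$. With $K_1 = K-r$, $K_2 = r$, $M_1$ implementing $\uovec{m}$, and $M_2$ implementing $\uovec{m+1}$, this yields precisely $(1-\alpha)\uovec{m} + \alpha\uovec{m+1}$ with row sum $(K-r)m + r(m+1) = A$.

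In the generic case $2 \leq r \leq K-2$, a short parity check using $(m-1)K + r$ even shows that both $\uovec{m}$ is $((K-r)m,K-r)$-feasible and $\uovec{m+1}$ is $(r(m+1),r)$-feasible in the sense of Proposition~\ref{pr:hart}: if $m$ is odd then $r$ must be even (so both feasibility conditions hold), and if $m$ is even then $K-r$ must be even (again both hold). The factor matrices $M_1$ and $M_2$ can then be assembled from the $\bm{E}$, $\bm{O}$, $\bm{RE}$, and $\bm{RO}$ blocks recalled just after Proposition~\ref{pr:hart}.

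The remaining edge cases are $r = 1$ (which forces $K$ odd and $m$ even) and $r = K-1$ (which forces $K$ odd and $m$ odd). The above splitting fails there because a one-column matrix cannot implement any nontrivial $\uovec{}$ distribution. I would replace the failing single-column piece by a bespoke three-column block: for $r=1$, a block $T_A$ that $(3m+1,3)$-implements $(2/3)\uovec{m} + (1/3)\uovec{m+1}$ composed with $(K-3)/2$ copies of $\bm{O}(m)$; for $r = K-1$, a block $T_B$ that $(3m+2,3)$-implements $(1/3)\uovec{m} + (2/3)\uovec{m+1}$ composed with $(K-3)/2$ copies of $\bm{O}(m+1)$. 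The product composition identity then delivers the target mixture with row sum $A$, and reduces correctly to the pure three-column case $K=3$. The hardest step is exhibiting $T_A$ and $T_B$ explicitly for every admissible $m$: a natural pool of candidate rows is $[2i-1,\,2j-1,\,A-2i-2j+2]$ with $(i,j)$ in a carefully chosen index set, and one must verify by combinatorial bookkeeping that some nonnegative rational weighting of these rows assigns count $3m+2$ to each value in $\{1,3,\ldots,2m-1\}$ and count $m$ to the value $2m+1$ (and the analogous counts for $T_B$). This combinatorial verification is where the main obstacle lies.
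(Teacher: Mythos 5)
Your necessity argument and your reduction of the sufficiency direction are both sound, and your parity bookkeeping for the generic case $2 \leq r \leq K-2$ checks out: when $A \equiv K \pmod 2$, either $m$ is odd and $r$ is even, or $m$ is even and $K-r$ is even, and in both situations Proposition~\ref{pr:hart} (together with the recalled $\bm{O}$ and $\bm{RO}$ blocks) implements $\uovec{m}$ on $K-r$ columns and $\uovec{m+1}$ on $r$ columns separately; your product composition then glues the two factors. This is in fact a slightly different case split from the paper's: the paper splits on the parities of $K$ and $r$ and deploys a bespoke mixed three-column block in \emph{every} odd-$K$ case, whereas you push all cases with $2 \leq r \leq K-2$ through the known building blocks and confine the bespoke constructions to $r=1$ and $r=K-1$.

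The genuine gap is the existence of those bespoke blocks $T_A$ and $T_B$. You specify them correctly: $T_A$ must be an $m(m+1)\times 3$ matrix with all rows summing to $3m+1$ whose cardinality vector is $2(m+1)\uo{m} + m\uo{m+1}$ (count $3m+2$ at each of $1,3,\dots,2m-1$ and count $m$ at $2m+1$), needed for even $m$; and $T_B$ is the analogue with row sums $3m+2$ and cardinality vector $(m+1)\uo{m} + 2m\uo{m+1}$, needed for odd $m$. But you do not construct them, and their existence is not a routine ``combinatorial bookkeeping'' verification over a pool of candidate rows: it is exactly the technical core of the paper's proof, which spends essentially all of its length exhibiting these matrices explicitly, as vertical compositions of six parametric blocks $\bm{R}^{\mathrm{I}}(m),\dots,\bm{R}^{\mathrm{VI}}(m)$ with carefully chosen multiplicities in the even-$m$ case, and four such blocks in the odd-$m$ case. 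Note also that the gap is not marginal: for $K=3$ your generic case is empty (there $r \in \{1,2\} = \{1,K-1\}$), so every three-battlefield instance of the proposition rests entirely on the unproven existence of $T_A$ and $T_B$. Until those matrices, or some other existence argument for distributions with the stated cardinality vectors, are supplied, the sufficiency direction of your proof is incomplete.
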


Proposition~\ref{pr:aimpl2} covers all the cases except those where $K = 3$ and $A \in \{7,13,19\}$. Since in the omitted cases $A$ and $K$ are both odd, these cases are covered by Proposition~\ref{pr:aimpl1}.
 
\begin{proposition}
\label{pr:aimpl2}
Let $A > K \geq 2$ be natural numbers such that $K \nmid A$ and let $m = \lfloor A/K \rfloor$ and $\alpha = (A \bmod K)/K$.
Let $\delta$ and $\sigma$ be defined as in Theorem~\ref{th:hart4} and let 
\begin{align*}
\bm{x} & = \alpha \delta \vvvec{m} + (1 - \alpha \delta) \uovec{m}\textrm{ and}\\
\bm{y} & = \left(\frac{1-\alpha}{\alpha}\right) \bm{x} + \left(\frac{2\alpha - 1}{\alpha}\right) \left( (1 - \alpha) \uovec{m} + \alpha \uovec{m+1} \right) \\
        & = \left(\frac{m}{m+1}\right)\left( (1-\alpha) \sigma \vvvec{m} + (1 - (1-\alpha) \sigma) \uovec{m+1} \right) + \left(\frac{1}{m+1}\right)\left( (1 - \alpha)\uovec{m} + \alpha \uovec{m+1} \right),
\end{align*}
where $\vvvec{m} = \frac{1}{m}\sum_{i = 1}^{m} \vvec{i}{m}$.

\begin{enumerate}
\item If $\alpha \leq \frac{1}{2}$ and either $K \neq 3$ or $m \notin \{2,4,6\}$, then $\bm{x}$ is $(A,K)$-feasible.\label{p:aimpl2:1}
\item If $\alpha > \frac{1}{2}$, then $\bm{y}$ is $(A,K)$-feasible.\label{p:aimpl2:2}
\end{enumerate}
\end{proposition}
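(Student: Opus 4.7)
The plan is to prove both parts by explicit construction: for each choice of parameters, exhibit an $L \times K$ matrix with nonnegative integer entries whose rows are $K$-partitions of $A$ and whose normalised cardinality vector equals $\bm{x}$ in part~(\ref{p:aimpl2:1}) or $\bm{y}$ in part~(\ref{p:aimpl2:2}). I would follow the block-composition methodology used earlier in Section~\ref{sec:approach} to implement $\uovec{m}$ and $\uevec{m}$ via horizontal concatenation of $\bm{E}(m)$, $\bm{O}(m)$, $\bm{RE}(m)$, $\bm{RO}(m)$, extending it by introducing new building blocks tailored to the distributions $\vvec{j}{m}$ and, for part~(\ref{p:aimpl2:2}), also to $\uovec{m+1}$. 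The resulting concatenation is an $(A,K)$-implementing matrix provided the block widths sum to $K$, the fixed row sums of the blocks sum to $A$, and the cardinality contributions combine, after normalisation, to the target distribution.

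The first step is to define the new building blocks. Each $\vvec{j}{m}$ has support consisting of the even values in $\{0,2,\ldots,2m\}$ together with the single odd value $2j-1$, so each row of a corresponding block must place exactly one copy of $2j-1$ alongside even entries. A natural choice is a family of $3$-column blocks whose rows are triples summing to a fixed value near $3m$ and whose cardinality is a rational multiple of $3\vd{j}{m}$, or more usefully of a prescribed combination of the $\vd{j}{m}$'s with $\uo{m}$. Since $2\bar{\mu} \notin \mathbb{Z}$, where $\bar\mu$ denotes the mean of $\vvvec{m}$, these blocks cannot be $2$-column blocks; $4$- and $5$-column variants are needed for specific parities of $K$, and for part~(\ref{p:aimpl2:2}) analogous blocks built around $\uo{m+1}$ are required.

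To establish part~(\ref{p:aimpl2:1}), for a given pair $(K,m)$ with $K \nmid A$ and $\alpha \leq 1/2$ I would choose a combination of $\bm{O}(m)$ and $\bm{RO}(m)$ blocks together with just enough of the new $\vvvec{m}$-blocks so that the combined row sum equals $A = mK + r$ (with $r = A \bmod K$) and the normalised cardinality equals $\alpha\delta\vvvec{m} + (1-\alpha\delta)\uovec{m}$. The required number of each block type is determined by a small linear system parametrised by $\alpha$, $\delta$, and the parities of $K$ and $A$. For part~(\ref{p:aimpl2:2}) I would apply the same scheme using the second decomposition of $\bm{y}$ in the statement; this decomposition is expressed in terms of $\uovec{m+1}$ and therefore avoids the parity restriction $A\equiv K\pmod 2$ that would arise if one tried to assemble $\bm{y}$ by combining an implementation of $\bm{x}$ with an implementation of $(1-\alpha)\uovec{m}+\alpha\uovec{m+1}$ drawn from Proposition~\ref{pr:aimpl1}.

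The main obstacle is the combinatorial feasibility of this linear system: it must be solved in nonnegative integers while simultaneously respecting the total width $K$ and the total row sum $A$. The exceptional cases $K=3$, $m\in\{2,4,6\}$ in part~(\ref{p:aimpl2:1}) correspond precisely to situations in which the width constraint $K=3$ forces the use of a single $3$-column block whose limited degrees of freedom are insufficient to reproduce the required mixture $\alpha\delta\vvvec{m}+(1-\alpha\delta)\uovec{m}$ for small $m$. I would therefore verify that once $K\geq 4$ or $m \geq 8$ enough flexibility is available to solve the system, and that in part~(\ref{p:aimpl2:2}) the presence of the $\uovec{m+1}$-blocks supplies exactly the extra degree of freedom that eliminates the exceptional $(K,m)$ pairs, so that no parity-based exceptions are needed when $\alpha > 1/2$.
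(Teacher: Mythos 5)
Your overall plan---explicit implementing matrices built by horizontal and vertical block composition---is indeed the paper's methodology, but the one concrete structural claim your scheme rests on is false, and it would sink the proof. You assert that, because twice the mean of $\vvvec{m}$ is not an integer, the blocks carrying the $\vvec{j}{m}$ content ``cannot be $2$-column blocks.'' The mean obstruction only rules out implementing $\vvvec{m}$ \emph{in isolation}; the objects that actually need implementing are mixtures such as $\bm{x}$, whose mean is $A/K$ by the very definition of a General Lotto strategy, so no mean consideration obstructs $2$-column blocks for them. Worse, your claim contradicts the proposition itself: for $K=2$ (where $A=2m+1$, $\alpha=1/2$, and part~1 applies) \emph{any} implementation is an $L\times 2$ matrix, so under your claim part~1 would be false at $K=2$. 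The paper's entire construction pivots on exactly the point you exclude: its workhorse is the $m(m+1)\times 2$ matrix $\bm{R}(m)$ whose $i$'th part $[\,2i \;\; 2m-2i+1\,]$ is repeated $2i$ times ($i=1,\ldots,m$), which $(2m+1,2)$-implements $\vv{m}+\uo{m}$, i.e.\ implements $\bm{x}$ at $\alpha=1/2$. The generic case of part~1 ($K$ even, or $K$ odd with $K\neq 2r+1$ and $m$ odd) is settled by composing $r$ copies of $\bm{R}(m)$ with $\uo{m}$-implementations of total width $K-2r$, and part~2 likewise reuses $\bm{R}(m)$ for the $(K-r)$ copies of $\vv{m}+\uo{m}$ in $Y^m=(K-r)(\vv{m}+\uo{m})+(2r-K)m\uo{m+1}$; the $3$-, $4$- and $5$-column blocks are needed only for residual parity cases. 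Losing the $2$-column block means losing all of these cases, including ones your plan must cover.

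Two further inaccuracies would misdirect the constructions even where widths are not an issue. First, by the paper's definition the support of $\vvec{j}{m}$ is \emph{not} the even values together with the single odd value $2j-1$: it is the odd values $1,3,\ldots,2j-1$ together with the even values $2j,2j+2,\ldots,2m$ (your description is correct only for $j=1$), so rows of the putative blocks must pair odd entries with odd entries, not one odd entry with evens. Second, your resolution of the exceptional cases---``once $K\geq 4$ or $m\geq 8$ there is enough flexibility''---does not match the actual structure: for $K=3$ the decisive issue is the \emph{parity} of $m$, not its size. Odd $m$ works for every $m\geq 1$ (the paper's cases~(i) and~(ii)), and what fails is only $K=3$ with even $m\in\{2,4,6\}$; your stated coverage would leave out $K=3$ with $m\in\{3,5,7\}$, which the proposition does assert. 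Beyond these errors, the proposal defers all of the actual substance---the explicit blocks and the verification that their cardinality vectors combine to the target---to ``a small linear system'' that is never exhibited; for this proposition, that deferred part is essentially the whole proof.
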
 

By Propositions~\ref{pr:aimpl1} and~\ref{pr:aimpl2}, for any $K \geq 2$, $B \geq 0$, and $A \geq K$ such that $\ndivides{K}{A}$, there exists an $(A,K)$-feasible optimal strategy for the stronger player in the General Lotto game $\general(A/K,B/K)$. Hence we have established sufficient and necessary conditions on the values of $A$, $B$, and $K$, when $A > B$ and $\ndivides{K}{A}$, under which optimal strategies in the Colonel Blotto game $\blotto(A,B;K)$ correspond to optimal strategies in the General Lotto game. In consequence, we are able to determine the value of the Colonel Lotto and the Colonel Blotto games for these parameter values.

\begin{theorem}
\label{th:value:ndiv:even}
Let $K\geq 2$, $A > K$, and $1 \leq B \leq K\lfloor A/K \rfloor$ be natural numbers such that $\ndivides{K}{A}$. There exists an $(A,B,K)$-feasible profile of optimal strategies in General Lotto game $\general(A/K,B/K)$ if and only if $B$ is even and $B \geq 2\lfloor A/K \rfloor$. In this case the value of Colonel Lotto game $\lotto(A,B;K)$ and Colonel Blotto game $\blotto(A,B;K)$ is
\begin{align*}
\val{\blotto(A,B;K)} & = \frac{A-B}{A} - \frac{B}{A}\left(\frac{R(K-R)}{(A-R)(A+K-R)}\right),
\end{align*}
where $R = A\bmod K$.
\end{theorem}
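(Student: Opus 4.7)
The plan is to invoke Observations~\ref{obs:1} and~\ref{obs:2} to reduce the problem to locating an $(A,B,K)$-feasible equilibrium of the General Lotto game $\general(A/K,B/K)$, and then pull back its value and strategies to $\blotto(A,B;K)$. Writing $a = A/K = m + \alpha$ with $m = \lfloor A/K\rfloor$ and $\alpha = R/K$ for $R = A\bmod K$, and $b = B/K$, Theorem~\ref{th:hart4} supplies both the value of $\general(a,b)$ and a full description of its optimal strategies; in particular, the weaker player's optimum is the unique distribution $\bm{y} = (1 - b/m)\uvec{0} + (b/m)\uevec{m}$.

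For the feasibility characterisation, the \emph{only if} direction follows from the uniqueness of B's optimum: any $(A,B,K)$-feasible equilibrium of $\general(A/K,B/K)$ must in particular have $\bm{y}$ being $(B,K)$-feasible, which by Proposition~\ref{pr:aimpl3} forces $B$ to be even and $B \geq 2m$. For the \emph{if} direction, given such $B$, Proposition~\ref{pr:aimpl3} makes $\bm{y}$ $(B,K)$-feasible, so it remains only to exhibit an $(A,K)$-feasible optimum for the stronger player. I split on the parity of $A - K$ and on the size of $\alpha$: if $A \equiv K \pmod 2$, Proposition~\ref{pr:aimpl1} delivers $(1-\alpha)\uovec{m} + \alpha\uovec{m+1}$; otherwise, Proposition~\ref{pr:aimpl2}(\ref{p:aimpl2:2}) handles $\alpha > 1/2$, while Proposition~\ref{pr:aimpl2}(\ref{p:aimpl2:1}) handles $\alpha \leq 1/2$ except when $K = 3$ and $m \in \{2,4,6\}$. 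In that residual subcase, however, $K = 3$ together with $\alpha = R/K \leq 1/2$ forces $R = 1$, so $A = 3m+1$ is odd and $A \equiv K \pmod 2$, putting these cases back under Proposition~\ref{pr:aimpl1}. Hence an $(A,K)$-feasible optimum for the stronger player always exists, and the joint profile is feasible iff $B$ is even and $B \geq 2m$.

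Once feasibility is established, Observations~\ref{obs:1} and~\ref{obs:2} give $\val{\blotto(A,B;K)} = \val{\lotto(A,B;K)} = \val{\general(A/K,B/K)} = 1 - (1-\alpha)b/m - \alpha b/(m+1)$. Substituting $\alpha = R/K$, $b = B/K$, $Km = A - R$ and $K(m+1) = A + K - R$ collapses the right-hand side to $1 - B(A+K-2R)/((A-R)(A+K-R))$, and the elementary identity $A(A+K-2R) = (A-R)(A+K-R) + R(K-R)$, verified by direct expansion, rewrites this as $\frac{A-B}{A} - \frac{B}{A}\cdot\frac{R(K-R)}{(A-R)(A+K-R)}$, matching the claim. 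The main obstacle is the exhaustive case analysis showing that the stronger player always admits an $(A,K)$-feasible optimum; all of that work is packaged into Propositions~\ref{pr:aimpl1} and~\ref{pr:aimpl2}, after which the proof is a routine check and an algebraic identity.
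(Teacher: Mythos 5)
Your proposal is correct and follows essentially the same route as the paper: reduction via Observations~\ref{obs:1} and~\ref{obs:2}, the uniqueness of player B's optimum in Theorem~\ref{th:hart4} combined with Proposition~\ref{pr:aimpl3} for the feasibility characterisation, Propositions~\ref{pr:aimpl1} and~\ref{pr:aimpl2} (with the same observation that the excluded $K=3$, $A\in\{7,13,19\}$ cases have $A \equiv K \pmod 2$ and so fall under Proposition~\ref{pr:aimpl1}) for the stronger player, and the same algebraic substitution $Km = A-R$, $K(m+1) = A+K-R$ to obtain the stated value.
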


The value of the game is equal to the value of the game in the continuous case, $(A-B)/A$, minus a reminder resulting from indivisibility of the number of units of the stronger player by the number of battlefields. If the number of units of the weaker player is even, it allows him to benefit from this indivisibility and secure higher payoff than in the case of continuous resources.  

\subsubsection{The case of $\divides{K}{A}$}
Like in the previous section, we first recall the results from~\cite{Hart08} and~\cite{Dziubinski12}, characterizing the optimal strategies and the value of the game in the General Lotto game $\general(a,b)$ with $a > b > 0$ and an integer $a$. We collect these results in a single statement below. To save space, we restrict the statement to parts of the characterization that we use in this paper.

\begin{theorem}[\cite{Hart08} and~\cite{Dziubinski12}]
\label{th:hart2}
Let $m > b > 0$, where $m$ is an integer. Then the value of General Lotto game $\general(m,b)$ is
\begin{displaymath}
\val{\general(m,b)} = \frac{m-b}{m} = 1 - \frac{b}{m}.
\end{displaymath}
The optimal strategies are as follows:
\begin{enumerate}
\item Strategy $\uovec{m}$ is the unique optimal strategy of Player $\mathrm{A}$.\label{p:hart2:1}
\item The strategies $(1-b/m)\uvec{0} + (b/m)\bm{z}$ with $\bm{z} \in \conv{\mathcal{U}^{m}}$ are optimal strategies of Player $\mathrm{B}$.\label{p:hart2:2}
\item If $b = (2m-1)/2$ and $m \geq 2$, then strategy $(1-b/m) \uvec{0} + (b/m)\bm{z}$ with $\bm{z} = m/(2m-1) \uovec{m} + (m-1)/(2m-1)\uoupvec{m}$ is optimal for Player $\mathrm{B}$.\label{p:hart2:3}
\item If $m - 1 \geq b > 0$, then the strategy $\bm{y}$ is optimal for Player $\mathrm{B}$ if and only if
\begin{displaymath}
\bm{y} = \left(1 - \frac{b}{m} \right) \uvec{0} + \left(\frac{b}{m}\right) \bm{z}, \textrm{ with $\bm{z} \in \conv{\mathcal{U}^m \cup \mathcal{W}^m \cup \{\uoupvec{m}\}}$.}
\end{displaymath}\label{p:hart2:4}
\item If $Y$ is a random variable distributed according to an optimal strategy of player B, then $\Pr(Y \geq 2m-2) > 0$.\footnote{
This part of the theorem is not directly stated in \cite{Hart08} or~\cite{Dziubinski12} but is an immediate consequence of the full characterization obtained in these works.}\label{p:hart2:5}
\end{enumerate}
\end{theorem}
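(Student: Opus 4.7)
The plan is to verify the value and the sufficient direction of the strategy characterization by the saddle-point argument for the infinite zero-sum game $\general(m,b)$, then outline how the necessity direction for Player~B's characterization is extracted. The statement compiles results proved in \cite{Hart08} and \cite{Dziubinski12}.

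First I would establish the value $(m-b)/m$ together with the optimality of $\uovec{m}$ for Player~A. Let $X \sim \uovec{m}$, i.e.\ $X$ uniform on the odd integers $\{1, 3, \ldots, 2m-1\}$. A direct case analysis on $y$ (splitting on parity of $y$ and on whether $y \leq 2m$) gives, for every non-negative integer $y$,
\[
\Pr(X > y) - \Pr(X < y) \geq \frac{m - y}{m},
\]
with equality on $\{0, 1, \ldots, 2m\}$ and strict inequality for $y > 2m$. Taking expectations against any Player~B distribution $Y$ of mean $b$ yields $H(X, Y) \geq (m-b)/m$. For the matching reverse inequality, I would take $Y = (1 - b/m)\uvec{0} + (b/m)\uevec{m}$; a direct computation shows that $\Pr(x > Y) - \Pr(x < Y)$ equals the affine function $\phi(x) = 1 - b(2m+1-x)/(m(m+1))$ for $x \in \{1, \ldots, 2m+1\}$, and is strictly smaller at $x = 0$ and at $x \geq 2m+2$. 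Since $\phi(m) = (m-b)/m$, every A distribution of mean $m$ achieves at most $(m-b)/m$ against this $Y$. This pins down the value and gives a first family of B-optimal strategies; by linearity in $\bm{z}$ it extends to part~\ref{p:hart2:2} after the same check against $\bm{z} = \uovec{m}$.

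Sufficiency in parts~\ref{p:hart2:3}--\ref{p:hart2:4} is analogous: for each generator $\uoupvec{m}$ and $\wvec{j}{m}$ I would verify an affine upper envelope through $(m, (m-b)/m)$ for the per-pure-strategy payoff function. For instance, the envelope against $(1-b/m)\uvec{0} + (b/m)\uoupvec{m}$ takes the form $1 - b(2m-1-x)/(m(m-1))$, tight on $\{1, \ldots, 2m-1\}$ and strict at $x = 0$ when $b < m-1$ and at $x \geq 2m$; part~\ref{p:hart2:3} is obtained by computing the mean of the specific mixture given (which turns out to be uniform on $\{1, \ldots, 2m-1\}$ scaled by $b/m$) and verifying the envelope at $b = (2m-1)/2$. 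Uniqueness of $\uovec{m}$ for Player~A is then extracted by combining several such envelopes: those from $\uevec{m}$, $\uovec{m}$, and $\uoupvec{m}$ already constrain A's support to $\{1, \ldots, 2m-1\}$, and a further test against a $\wvec{j}{m}$-based B strategy breaks the remaining parity freedom, forcing the uniform distribution on the odd integers.

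The main obstacle is the converse necessity direction in part~\ref{p:hart2:4}: showing that every optimal B distribution is a convex combination of the listed generators. Following \cite{Dziubinski12}, complementary slackness forces any optimal $Y$ to be supported on the tight coordinates of the $\uovec{m}$-dual inequality---namely $\{0, 1, \ldots, 2m\}$---and further tightness conditions obtained by testing against each pure A strategy attaining the value restrict the support structure. Combined with the mean constraint and normalization, these identify the extreme points of the resulting polytope of admissible distributions as exactly $\uvec{0}$ together with $\uovec{m}$, $\uevec{m}$, $\uoupvec{m}$, and $\wvec{j}{m}$ (for $j = 1, \ldots, m-1$), yielding the claimed hull description. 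Part~\ref{p:hart2:5} is then immediate: each generator places positive probability on some value in $\{2m-2, 2m-1, 2m\}$, a property preserved by positive convex combinations and by the prefactor $b/m > 0$.
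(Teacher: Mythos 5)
Your value computation and the sufficiency checks are sound: the envelope inequality $\Pr(X>y)-\Pr(X<y)\ge (m-y)/m$ for $X\sim\uovec{m}$, the affine tight-set computations against the generator-based strategies of player B, and the observation that the part~\ref{p:hart2:3} mixture makes $Y$ uniform on $\{0,1,\ldots,2m-1\}$ are all correct, and this is the standard two-sided guarantee argument. (Note, for the record, that the paper itself contains no proof of this theorem: it is stated as a compilation of results cited from the two referenced works, so the only basis for comparison is whether your argument is self-contained and correct.) The genuine gap is in your uniqueness claim for part~\ref{p:hart2:1}. The tests you list force exactly this much: the envelopes from $\uevec{m}$, $\uovec{m}$, $\uoupvec{m}$ pin the support of an optimal $X$ inside $\{1,\ldots,2m-1\}$, and the $\wvec{j}{m}$-based envelope is tight on $\{1,\ldots,2m\}\setminus\{2j\}$ and slack only at $2j$, so it merely forces $\Pr(X=2j)=0$. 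What remains is \emph{any} distribution supported on the odd integers $\{1,3,\ldots,2m-1\}$ with mean $m$, and for $m\ge 3$ this is not a singleton. Concretely, for $m=3$ the distribution with $\Pr(X=1)=\Pr(X=5)=\tfrac12$ passes every one of your tests with equality, yet it is not optimal: against $Y=(1-\tfrac{b}{2})\uvec{0}+\tfrac{b}{2}\uvec{2}$, which has mean $b$, it earns only $1-\tfrac{b}{2}<1-\tfrac{b}{3}$. So ``breaking the parity freedom'' does not force uniformity, and your list of test strategies cannot establish uniqueness.

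To close this gap you need test strategies outside the family of optimal ones, for instance the two-point mixtures $\lambda\uvec{0}+(1-\lambda)\uvec{2i}$ (and $\lambda\uvec{2i}+(1-\lambda)\uvec{2m}$ when $2i\le b$), chosen with mean $b$. Each such test yields the constraint $\Pr(X\le 2i-1)\le i/m$; on the other hand, for $X$ supported on the odd integers of $\{1,\ldots,2m-1\}$ the identity $\Ex X=m$ is equivalent to $\sum_{i=1}^{m-1}\Pr(X\le 2i-1)=(m-1)/2$, which together with the $m-1$ inequalities forces $\Pr(X\le 2i-1)=i/m$ for every $i$, i.e.\ $X=\uovec{m}$. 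Two smaller points: your derivation of part~\ref{p:hart2:5} from part~\ref{p:hart2:4} only covers $b\le m-1$, whereas for $m-1<b<m$ part~\ref{p:hart2:4} does not apply and the claim genuinely requires the full characterization of the cited works (this is exactly why the paper relegates it to a footnote); and the necessity half of part~\ref{p:hart2:4} is only gestured at via complementary slackness --- identifying the extreme points of the polytope of optimal strategies is the substantive content of the cited characterization, not something your envelope computations deliver.
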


Feasibility of optimal strategies for player A in the case of $\divides{K}{A}$ is fully characterized by point~\ref{p:hart:1} of Proposition~\ref{pr:hart}. It is necessary that $A$ and $K$ have the same parity.
In the case of player B, by point~\ref{p:hart2:5} of Theorem~\ref{th:hart2}, if an optimal strategy is $(B,K)$-feasible then $B \geq 2m-2$. 
In addition, if $B$ is even and $B \geq 2m$, where $m = A/K$ and $b = B/K$, then $(1 - b/m) \uvec{0} + (b/m) \uevec{m}$ is $(B,K)$-feasible, by Proposition~\ref{pr:aimpl3}.
In addition, since, for $m \geq 2$,
\begin{align*}
\left(1 - \frac{b}{m}\right)\uvec{0} + \frac{b}{m}\uoupvec{m} & = 
\left(1 - \frac{b}{m-1} + \frac{b}{m(m-1)}\right)\uvec{0} + \frac{b}{m(m-1)}\uoup{m}\\
& = \left(1 - \frac{b}{m-1}\right)\uvec{0} + \frac{b}{m(m-1)}\left(\uvec{0} + \uoup{m}\right)\\
& = \left(1 - \frac{b}{m-1}\right) + \frac{b}{m(m-1)}\left(\ue{m-1}\right)\\
& = \left(1 - \frac{b}{m-1}\right) + \frac{b}{m-1}\uevec{m-1}
\end{align*}
so, by Proposition~\ref{pr:aimpl3}, $(1 - b/m) \uvec{0} + (b/m) \uoupvec{m}$ is $(B,K)$-feasible 
if $B$ is even, $B \geq 2m-2$, and $m \geq 2$. Thus we have established that in the case of even $B$ there exist $(B,K)$-feasible optimal strategies for player B in the General Lotto game $\general(A/K,B/K)$ as long as either $B\geq 2A/K$ or $B\geq 2A/K-2$ and $A \geq 2K$.

For the case of odd $B$ we have the following two propositions. The first one of them establishes $(B,K)$-feasibility of an optimal strategy for player B in the case of $B = 2m-1$ and $m \geq 1$.

\begin{proposition}
\label{pr:aimpl5}
Let $m \geq 1$, $K \geq 2$ be natural numbers. If  $B = 2m-1$ then
\begin{equation*}
\bm{y} = \left(1-\frac{B}{Km}\right) \uvec{0} + \frac{B}{Km} \left(\frac{m}{B} \uovec{m} + \left(1-\frac{m}{B}\right)\uoupvec{m}\right)
\end{equation*}
is $(B,K)$-feasible.
\end{proposition}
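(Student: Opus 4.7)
The plan is to give an explicit $(B,K)$-implementation of $\bm{y}$ by a single small matrix. First I would compute the weights of $\bm{y}$ on each non-negative integer. A short calculation shows that $y_0 = (Km - 2m + 1)/(Km)$ and $y_i = 1/(Km)$ for every $i \in \{1, 2, \ldots, 2m-1\}$: the contribution at odd $i$ is $(B/Km)(m/B)(1/m) = 1/(Km)$, and the contribution at even $i$ is $(B/Km)((m-1)/B)(1/(m-1)) = 1/(Km)$. Thus $\bm{y}$ is a point mass of weight $(Km-2m+1)/(Km)$ at $0$ mixed with the uniform distribution on $\{1, \ldots, B\}$.

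Next I would construct an $m \times K$ matrix $\matr{X}$ with row sums equal to $B$ and cardinality vector equal to $mK\,\bm{y}$: each value $i \in \{1, \ldots, 2m-1\}$ has to appear exactly once and $0$ has to appear $mK - (2m-1)$ times. The natural grouping $\{2m-1\},\ \{1, 2m-2\},\ \{2, 2m-3\},\ \ldots,\ \{m-1, m\}$ partitions $\{1, \ldots, 2m-1\}$ into $m$ blocks, each summing to $B = 2m-1$, and I would take the rows of $\matr{X}$ to be these blocks padded with zeros, i.e.
\[
\matr{X} = \left[\begin{array}{ccccc}
2m-1 & 0 & 0 & \cdots & 0\\
1 & 2m-2 & 0 & \cdots & 0\\
2 & 2m-3 & 0 & \cdots & 0\\
\vdots & \vdots & \vdots & & \vdots\\
m-1 & m & 0 & \cdots & 0
\end{array}\right].
\]
Every row sums to $B$, so $\matr{X}$ represents a (uniform-over-rows) mixed strategy of player B in $\lotto(\cdot,B;K)$. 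A direct count yields $(K-1) + (m-1)(K-2) = mK - (2m-1)$ zeros and exactly one copy of each $i \in \{1, \ldots, 2m-1\}$, so $\card(\matr{X}) = mK\,\bm{y}$, which is precisely the condition for $\matr{X}$ to $(B,K)$-implement $\bm{y}$.

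There is no substantial obstacle once the uniform-plus-atom structure of $\bm{y}$ is observed; the only care required is the boundary case $m=1$, where $\matr{X}$ degenerates to the single row $(1,0,\ldots,0)$ and the coefficient multiplying $\uoupvec{m}$ in the statement of the proposition vanishes, consistently with the notational convention for $\uoupvec{1}$ noted before the proposition.
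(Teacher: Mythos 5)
Your proof is correct and takes essentially the same approach as the paper: the paper likewise reduces $\bm{y}$ to the observation that each value in $\{0,1,\ldots,2m-1\}$ must appear exactly once (plus extra zeros) and implements it by the $m\times 2$ matrix with rows $[\,i-1,\;2m-i\,]$, $i\in\{1,\ldots,m\}$, padded with $K-2$ zero columns, which coincides with your $\matr{X}$ up to swapping the two entries of its first row. The weight computation, the pairing of values summing to $B$, the zero count, and the $m=1$ boundary remark all match the paper's argument.
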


Notice that in the case of $m = 1$, strategy $\bm{y}$ in Proposition~\ref{pr:aimpl5} is equal to
$(1 - b/m) \uvec{0} + (b/m) \uovec{m}$ and it is optimal by point~\ref{p:hart2:2} of Theorem~\ref{th:hart2}.
In the case of $K = 2$, $m \geq 2$, and $B = 2m-1$, we have $B/K = (2m-1)/2$ and, consequently, strategy $\bm{y}$ in Proposition~\ref{pr:aimpl5} is optimal by point~\ref{p:hart2:3} of Theorem~\ref{th:hart2}. 
In the case of $K \geq 2$, $m \geq 2$, and $B = 2m-1$, we have $B/K \leq m-1$ and, consequently, $\bm{y}$ is optimal by point~\ref{p:hart2:4} of Theorem~\ref{th:hart2}.

The second proposition establishes $(B,K)$-feasibility of an optimal strategy for player B in the case of odd $B$ such  that $B > 2m$.

\begin{proposition}
\label{pr:aimpl4}
Let $m \geq 1$, $K \geq 2$, and $2m < B \leq Km$ be natural numbers and let $b = B/K$. If $B$ is odd then
\begin{equation}
\bm{y} = \left(1-\frac{B}{Km}\right) \uvec{0} + \frac{B}{Km} \left(\frac{m}{B} \uovec{m} + \left(1-\frac{m}{B}\right)\uevec{m}\right)
\end{equation}
is $(B,K)$-feasible.
\end{proposition}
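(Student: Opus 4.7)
The plan is to exhibit an explicit $L \times K$ integer matrix $\mathbf{M}$ whose every row sums to $B$ and whose cardinality vector equals $LK\bm{y}$. I begin by expanding
\[
\bm{y} = \left(1-\frac{B}{Km}\right)\uvec{0} + \frac{1}{K}\uovec{m} + \frac{B-m}{Km}\uevec{m},
\]
which reveals that $\bm{y}$ places mass $\tfrac{1}{Km}$ on each odd value $2i-1$ ($i=1,\ldots,m$), mass $\tfrac{B-m}{Km(m+1)}$ on each positive even value $2i$, and the remaining mass $1-\tfrac{B+1}{K(m+1)}$ on $0$. Taking $L=m(m+1)$ makes all targeted cell counts integers: $m+1$ copies of each odd value, $B-m$ copies of each positive even value, and $m\bigl((m+1)K-B-1\bigr)$ copies of $0$, which sum to $LK$.

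A short reduction handles the dependence on $K$. The assumption $B>2m$ forces $K\ge 3$, since $K=2$ would give $B\le Km=2m$. If $\mathbf{M}^{(3)}$ is an $L\times 3$ matrix that $(B,3)$-implements $\bm{y}$ for the same $B$ and $m$, then appending $K-3$ zero columns produces an $L\times K$ matrix with unchanged row sums. The odd and positive-even cardinalities are unaffected, while the zero cardinality grows by $L(K-3)=m(m+1)(K-3)$, which is precisely the amount needed to pass from the $K=3$ target $m(3m+2-B)$ to the general target $m\bigl((m+1)K-B-1\bigr)$. It therefore suffices to construct $\mathbf{M}^{(3)}$.

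For $K=3$ I would use rows of the form (odd value, even value, even value), organized into $m$ super-rows indexed by $j\in\{0,\ldots,m-1\}$; super-row $j$ consists of $m+1$ rows whose odd column equals $2j+1$ and whose remaining two columns form a $2$-partition of $B-2j-1$ with parts in $\{0,2,\ldots,2m\}$. Such partitions exist because $B-2j-1$ is a non-negative even integer bounded above by $B-1\le 3m-1$, so two admissible parts suffice. The aggregate task then reduces to choosing, for each super-row, a multiset of $m+1$ such $2$-partitions so that the accumulated residual cell counts hit the targets $B-m$ copies of each positive even value and $m(3m+2-B)$ copies of $0$. The subcase $B=2m+1$ is handled by the assignment $(2j+1,\,2m-2j,\,0)$ for $j=0,\ldots,m-1$, which trivially matches the targets with $L=m$; larger values $s=B-2m\in\{3,5,\ldots\}$ with $s\le m$ require a case-based scheme that invokes the $\bm{E}(m)$ and $\bm{O}(m)$ building blocks introduced earlier in the paper.

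The principal obstacle is the cardinality balancing for $K=3$ when $s>1$: the per-super-row average target of $(B-m)/m$ positive-even cells per value is typically non-integral, so different super-rows must use distinct $2$-partition profiles whose local discrepancies cancel out globally. I would carry this out via an explicit parametric recipe indexed by $j$ and by the residue class of $s$, and then verify all cardinality matches by direct computation. The odd-$B$ hypothesis is used to ensure that $B-2j-1$ is even (so the residuals are even partitions), while the bounds $2m<B\le Km$ ensure the residuals are strictly positive and admissible.
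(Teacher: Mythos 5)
Your expansion of $\bm{y}$ and the resulting cell counts (with $m(m+1)$ rows: $m+1$ copies of each odd value, $B-m$ copies of each positive even value, $m\bigl((m+1)K-B-1\bigr)$ zeros) are correct, as is the observation that $K\geq 3$. However, the reduction to $K=3$ by appending zero columns is a genuine error, not just a gap. The zero count your three-column core must realize is $m(3m+2-B)$, which is negative whenever $B>3m+2$, a range the proposition must cover since $B$ can be as large as $Km$ (concretely: $m=2$, $K=5$, $B=9$ requires $-2$ zeros in the core). Zero columns can only add mass at $0$; they cannot absorb budget, and a three-column matrix with entries in $\{0,1,\ldots,2m\}$ simply cannot carry row sums near $Km$. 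The paper's decomposition is structurally different precisely for this reason: it writes $B=Lm+r$ and peels off $(L-2)/2$ or $(L-3)/2$ \emph{pairs} of columns, each filled with the block $\sslash_{i=1}^{m}\bm{E}(m)$, which absorbs $2m$ of budget per row while contributing $2m\ue{m}$ to the cardinality vector; only the residual budget $2m+r$ (or $3m+r$) goes into a $3$- (or $4$-) column core. Your scheme has no such budget-absorbing step. Relatedly, the boundary case $B=Km$ (forcing $K$ and $m$ both odd) has no mass at $0$ whatsoever, so no construction with appended zero columns can reach it; the paper handles it by invoking Theorem~\ref{th:symmetric}, which states that $\lambda\uovec{m}+(1-\lambda)\uevec{m}$ is $(Km,K)$-feasible for $\lambda\geq 1/K$.

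The second gap is that even in the regime where your reduction would be sound ($2m<B\leq 3m+2$), the core construction is only a plan. You verify the easy subcase $B=2m+1$ via the rows $(2j+1,\,2m-2j,\,0)$, but for $B-2m\geq 3$ you defer to ``a case-based scheme'' and ``an explicit parametric recipe \ldots verified by direct computation.'' That deferred step is the entire technical content of the statement: the paper resolves it by taking the explicit block matrices $\bm{S}(m,r+1)$ and $\bm{T}(m,r+1)$ built in the proof of Proposition~\ref{pr:aimpl3} (the even-$B$ case, with $r$ replaced by $r+1$) and decrementing one carefully chosen entry per row by $1$, which lowers every row sum from $B+1$ to $B$ and converts exactly one positive even entry per row into an odd one, producing the $(m+1)\uo{m}$ component of the target cardinality vector while preserving the rest. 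Without this decrementing device, or an equally explicit substitute, the argument does not close.
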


By Propositions~\ref{pr:aimpl5} and~\ref{pr:aimpl4}, for any $K \geq 2$, $B \geq 2A/K-2$, and $A \geq K$ such that $\divides{K}{A}$, there exists a $(B,K)$-feasible optimal strategy for the weaker player in the General Lotto game $\general(A/K,B/K)$. Hence we have established sufficient and necessary conditions on the values of $A$, $B$, and $K$, when $A > B$ and $\divides{K}{A}$, under which optimal strategies in the Colonel Blotto game $\blotto(A,B;K)$ correspond to optimal strategies in the General Lotto game. In consequence, we are able to determine the value of the Colonel Lotto and the Colonel Blotto games for these parameter values.

\begin{theorem}
\label{th:value:div}
Let $K\geq 2$, $A \geq K$, and $1 \leq B \leq A$ be natural numbers such that $\divides{K}{A}$. There exists an $(A,B,K)$-feasible profile of optimal strategies in General Lotto game $\general(A/K,B/K)$ if and only if $A \equiv K \pmod 2$ and $B \geq 2A/K-2$. The value of Colonel Lotto game $\lotto(A,B;K)$ and Colonel Blotto game $\blotto(A,B;K)$ is
\begin{equation}
\val{\blotto(A,B;K)} = \val{\lotto(A,B;K)} = \frac{A - B}{A} = 1 - \frac{B}{A}.
\end{equation}
\end{theorem}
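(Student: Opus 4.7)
The plan is to prove the characterization as an iff together with the value formula, splitting the work along the two players. Write $m = A/K$ and $b = B/K$; throughout the argument I can draw on Theorem~\ref{th:hart2}, Proposition~\ref{pr:hart}, and the feasibility results of Propositions~\ref{pr:aimpl3}, \ref{pr:aimpl5}, and~\ref{pr:aimpl4}, which have been lined up precisely for this purpose.

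For the necessity direction, I would argue separately for each player. If an $(A,B,K)$-feasible equilibrium profile exists, then by part~\ref{p:hart2:1} of Theorem~\ref{th:hart2} the unique optimal strategy of player A is $\uovec{m}$, so this vector itself must be $(A,K)$-feasible; part~\ref{p:hart:1} of Proposition~\ref{pr:hart} then yields $A \equiv K \pmod 2$. On the B-side, part~\ref{p:hart2:5} of Theorem~\ref{th:hart2} guarantees that any optimal strategy of player B places positive probability on a value at least $2m-2$, and since a $(B,K)$-feasible distribution is supported on $\{0,1,\ldots,B\}$, the constraint $B \geq 2m-2$ follows.

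For sufficiency, assume both conditions hold. Player A's strategy $\uovec{m}$ is feasible by Proposition~\ref{pr:hart}. For player B I will proceed by cases on the parity and magnitude of $B$. When $B$ is even and $B \geq 2m$, the strategy $(1 - b/m)\uvec{0} + (b/m)\uevec{m}$ is optimal by part~\ref{p:hart2:2} of Theorem~\ref{th:hart2} and feasible by Proposition~\ref{pr:aimpl3}. When $B$ is even with $B = 2m-2$ (so $m \geq 2$), the optimal strategy $(1-b/m)\uvec{0} + (b/m)\uoupvec{m}$ of part~\ref{p:hart2:4} is, via the identity displayed immediately above the theorem, the same as $(1 - b/(m-1))\uvec{0} + (b/(m-1))\uevec{m-1}$, whose feasibility follows from Proposition~\ref{pr:aimpl3} applied with $m-1$ in place of $m$. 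When $B$ is odd, the condition $B \geq 2m-2$ together with parity forces either $B = 2m-1$, handled by Proposition~\ref{pr:aimpl5} with optimality taken from the remarks after it, or $B \geq 2m+1$, handled by Proposition~\ref{pr:aimpl4}. With an $(A,B,K)$-feasible optimal profile in hand, Observation~\ref{obs:2} transports the value of $\general(m,b)$, which Theorem~\ref{th:hart2} states is $1 - b/m = (A-B)/A$, to the Colonel Lotto game $\lotto(A,B;K)$, and Observation~\ref{obs:1} transports it further to $\blotto(A,B;K)$.

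The proof reduces to checking the cases once the split is laid out; the only mildly delicate point is the boundary case $B = 2m-2$, where one must resist using $\uevec{m}$ (which is not $(B,K)$-feasible for such small $B$) and instead exploit the $\uoupvec{m}$-to-$\uevec{m-1}$ rewrite so that Proposition~\ref{pr:aimpl3} becomes applicable at the shorter horizon.
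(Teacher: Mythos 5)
Your proposal is correct and follows essentially the same route as the paper: the paper's own argument (given in the discussion preceding the theorem) establishes necessity via point~1 of Proposition~\ref{pr:hart} for player A and point~\ref{p:hart2:5} of Theorem~\ref{th:hart2} for player B, and sufficiency by exactly your case split --- even $B \geq 2m$ via Proposition~\ref{pr:aimpl3}, even $B = 2m-2$ via the $\uoupvec{m}$-to-$\uevec{m-1}$ identity, odd $B = 2m-1$ via Proposition~\ref{pr:aimpl5}, and odd $B \geq 2m+1$ via Proposition~\ref{pr:aimpl4} --- before transporting the value through Observations~\ref{obs:2} and~\ref{obs:1}. The only cosmetic difference is that the paper leaves this assembled in the surrounding text rather than as a displayed proof, and, like yours, it implicitly treats optimality of the Proposition~\ref{pr:aimpl4} strategy as following from point~\ref{p:hart2:2} of Theorem~\ref{th:hart2}.
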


When $\divides{A}{K}$ and $A$ has the same parity as $K$, the value of the game is equal to the value of the game in the continuous case. Divisibility of the number of units and the same parity of this number as the parity of the number of battlefields allows the stronger player to secure the same payoff as in the case of continuous resources.

\subsection{Beyond the General Lotto solvable cases}
\label{sec:beyond}
In this section we introduce a constrained variant of the General Lotto game that allows for solving additional cases of the Colonel Lotto and the Colonel Blotto game when $B \leq \lfloor A/K \rfloor$ and $\ndivides{K}{A}$. In this case, by Theorem~\ref{th:value:ndiv:even}, optimal strategies of player B in the General Lotto game $\general(A/K,B/K)$ require him to only assign even numbers to the battlefields with positive probability. When the number of units is odd, in any mixed strategy of the player, with probability $1$, at least one battlefield is assigned an odd number of units. Hence if $\bm{p}$ is the strategy in the General Lotto game associated with a mixed strategy of such a player, it satisfies $\sum_{i \geq 0} p_{2i+1} \geq 1/K$. This motivates introduction of the following constrained variant of the General Lotto game.

Given real numbers $a\in \mathbb{R}_{>0}$, $b\in \mathbb{R}_{>0}$, and $c\in \mathbb{R}_{>0}$, let $\widetilde{\general}(a,b;c)$ be a General Lotto game where the set of strategies of player B is restricted to probability distributions which pick an odd number with probability at least $c$ in total, that is
\begin{equation*}
S_{\widetilde{\general}}(b;c) = \left\{\bm{p} \in S_{\general}(b) : \sum_{i \geq 0} p_{2i+1} \geq c\right\}.
\end{equation*}

Since any strategy in the General Lotto game associated with a mixed strategy of player B with even number of units satisfies $\sum_{i \geq 0} p_{2i+1} \geq 1/K$, the following analogue of Observation~\ref{obs:2} follows.

\begin{observation}
\label{obs:3}
Let $B\geq 2$ be an even integer. 
If an equilibrium in the General Lotto game $\widetilde{\general}(A/K,B/K;1/K)$ is $(A,B,K)$-feasible then the implementing strategy profile is an equilibrium in the Colonel Lotto game $\lotto(A, B; K)$. In this case, every strategy implementing an optimal strategy of player A in $\widetilde{\general}(A/K,B/K;1/K)$ is an optimal strategy of A in $\lotto(A,B;K)$ and an analogous fact holds for the optimal strategies of player B. 
Moreover, in this case, the Colonel Lotto game, $\lotto(A, B; K)$, and the General Lotto game, $\general(A/K,B/K;1/K)$, have the same value.
\end{observation}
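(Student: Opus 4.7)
The plan is to replay the argument that establishes Observation~\ref{obs:2}, replacing $\general(A/K,B/K)$ by the constrained game $\widetilde{\general}(A/K,B/K;1/K)$, and to isolate the one place where the restriction of player B's strategy set matters. Concretely, I would fix an $(A,B,K)$-feasible equilibrium $(X^*,Y^*)$ of $\widetilde{\general}(A/K,B/K;1/K)$ and pick any Colonel Lotto strategies $\bm{x}^*$ and $\bm{y}^*$ that $(A,K)$-implement $X^*$ and $(B,K)$-implement $Y^*$ respectively; the goal is to verify the two best-response inequalities for $(\bm{x}^*,\bm{y}^*)$ in $\lotto(A,B;K)$.

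Checking the best-response condition on player A's side is routine. For any Colonel Lotto strategy $\bm{x}$ of A, the implemented distribution $X$ lies in $S_{\general}(A/K)$, which is exactly the (unrestricted) strategy set of A in $\widetilde{\general}(A/K,B/K;1/K)$. Combining the equilibrium inequality $H(X,Y^*) \leq H(X^*,Y^*)$ with the identity $h_{\mathcal{L}}(\bm{x},\bm{y}^*) = H(X,Y^*)$ used in the proof of Observation~\ref{obs:2} immediately yields $h_{\mathcal{L}}(\bm{x},\bm{y}^*) \leq h_{\mathcal{L}}(\bm{x}^*,\bm{y}^*)$.

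The main step, and the only place where the hypothesis on $B$ enters, is the analogous check for player B: I need every Colonel Lotto strategy $\bm{y}$ of B to implement a distribution lying in the restricted set $S_{\widetilde{\general}}(B/K;1/K)$, because only then is the equilibrium inequality $H(X^*,Y) \leq H(X^*,Y^*)$ directly available. As argued in the motivation preceding the statement, this inclusion is a parity count: under the hypothesis on $B$, every $K$-partition of $B$ must contain at least one part of the required parity, so the frequency vector $\card(\matr{Y})/(LK)$ of any implementing matrix $\matr{Y}$ assigns probability at least $1/K$ to odd integers. Once this is in hand, the payoff identity $h_{\mathcal{L}}(\bm{x}^*,\bm{y}) = H(X^*,Y)$ converts the equilibrium inequality into $h_{\mathcal{L}}(\bm{x}^*,\bm{y}) \leq h_{\mathcal{L}}(\bm{x}^*,\bm{y}^*)$, so $(\bm{x}^*,\bm{y}^*)$ is a Nash equilibrium of $\lotto(A,B;K)$.

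The remaining assertions then reduce to standard features of finite zero-sum games. The value of $\lotto(A,B;K)$ equals $h_{\mathcal{L}}(\bm{x}^*,\bm{y}^*) = H(X^*,Y^*)$, which in turn equals $\val{\widetilde{\general}(A/K,B/K;1/K)}$. Interchangeability of equilibrium strategies in $\lotto(A,B;K)$ then propagates optimality across implementations: any Colonel Lotto strategy $(A,K)$-implementing an optimal strategy of A in $\widetilde{\general}$ achieves the equilibrium payoff against $\bm{y}^*$ and is therefore itself optimal in $\lotto(A,B;K)$, and the symmetric statement gives the claim for player B.
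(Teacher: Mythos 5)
Your proposal follows essentially the same route as the paper: the paper gives no separate proof of Observation~\ref{obs:3}, treating it as immediate from the discussion preceding it, and that discussion is precisely your key step --- every Colonel Lotto strategy of player B implements a distribution lying in the constrained set $S_{\widetilde{\general}}(B/K;1/K)$, after which the argument behind Observation~\ref{obs:2} applies verbatim (player A's side needs no change, since his strategy set in $\widetilde{\general}$ is unconstrained).

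Two caveats. First, the parity count you invoke is valid only when $B$ is \emph{odd}: a sum of $K$ even numbers is even, so odd $B$ forces at least one odd part in every $K$-partition, whereas for even $B$ the partition $(B,0,\ldots,0)$ has no odd part and the inclusion into $S_{\widetilde{\general}}(B/K;1/K)$ fails. The hypothesis ``$B\geq 2$ even'' in the statement is evidently a typo for ``odd'' --- the paper's own motivating paragraph argues the count for an odd number of units, and the observation is used (in Theorem~\ref{th:value:ndiv:odd}) for odd $B$ --- but your phrase ``the required parity'' silently papers over this; under the hypothesis as literally written, your crucial step is false. Second, your closing paragraph propagates optimality by saying that an implementation of an optimal strategy of A ``achieves the equilibrium payoff against $\bm{y}^*$ and is therefore itself optimal''; that criterion is insufficient in general (in matching pennies a pure strategy achieves the value against the opponent's uniform optimal strategy without being optimal). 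The clean route is: if $X'$ is optimal for A and $Y^*$ is optimal for B in the zero-sum game $\widetilde{\general}(A/K,B/K;1/K)$, then $(X',Y^*)$ is an equilibrium of that game; it is $(A,B,K)$-feasible by assumption, so the first part of the observation makes any implementing profile an equilibrium of $\lotto(A,B;K)$, whence the implementing strategy of A is optimal there.
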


To solve the Colonel Blotto games $\blotto(A,B;K)$ with even $B \leq \lfloor A/K \rfloor$ and $\ndivides{K}{A}$, we first find the value and optimal strategies in the general Lotto games $\general(a,b;c)$
with non-integer $a>1$, $b < \lfloor a \rfloor$, and $c \in (0,b/(m+1))$.

\begin{proposition}
\label{pr:general:constr}
Let $a =m + \alpha$ and $b \leq m$, where $m \geq 1$ is an integer and $0 < \alpha < 1$, and $0 < c \leq b/(m+1)$. Then the value of the constrained General Lotto game $\widetilde{\general}(a,b;c)$ is
\begin{displaymath}
\val{\widetilde{\general}(a,b)} = 1 - \frac{(1-\alpha)b}{m} - \frac{\alpha b}{m+1} + \frac{c\max(\alpha,1-\alpha)}{m(m+1)}.
\end{displaymath}

Let $\delta$ and $\sigma$ be defined as in Theorem~\ref{th:hart4} and let $\vvvec{m} = \frac{1}{m}\sum_{i = 1}^{m} \vvec{i}{m}$.
\begin{enumerate}
\item If $0<\alpha < 1/2$ then strategy $\bm{x} = \alpha \delta \vvvec{m} + (1 - \alpha \delta) \uovec{m}$ is optimal for player A and strategy
\begin{equation*}
\bm{y} = \left(1 - \frac{b}{m}\right)\uvec{0} + \frac{b}{m}\left(\frac{cm}{b} \uovec{m} + \left(1-\frac{cm}{b}\right)\uevec{m}\right)
\end{equation*}
is optimal for player B.\label{p:general:constr:1}
\item If $1/2 \leq \alpha < 1$ then strategy 
\begin{equation*}
\bm{x} = \left(\frac{1-\alpha}{\alpha}\right) \left(\alpha \delta \vvvec{m} + (1 - \alpha \delta) \uovec{m}\right) + \left(\frac{2\alpha - 1}{\alpha}\right) \left( (1 - \alpha) \uovec{m} + \alpha \uovec{m+1} \right)
\end{equation*}
is optimal for player A and strategy
\begin{equation*}
\bm{y} = \left(1 - \frac{b-c}{m}\right)\uvec{0} + \frac{b-c}{m}\left(\frac{mc}{b-c} \uovec{m+1} + \left(1 - \frac{mc}{b-c}\right)\uevec{m}\right)
\end{equation*}
is optimal for player B.\label{p:general:constr:2}
\end{enumerate}
\end{proposition}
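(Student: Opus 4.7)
The plan is to verify a saddle-point for the pair $(\bm{x},\bm{y})$ by LP duality, using the standard reformulation of each player's best-response problem as a linear program in the distribution of his strategy. For a fixed $\bm{x}$, Player B's best response in $\widetilde{\general}(a,b;c)$ solves: minimize $\Ex[g_A(Y)]$, where $g_A(y) := \Pr(X>y) - \Pr(X<y)$ with $X\sim \bm{x}$, over $\bm{p}\in\Delta(\mathbb{Z}_{\geq 0})$ with $\sum y\, p_y = b$ and $\sum_{y\text{ odd}} p_y \geq c$. By LP duality the value equals $\max(v + \lambda b + \mu c)$ subject to $\mu\geq 0$ and $v + \lambda y + \mu\mathbf{1}[y\text{ odd}] \leq g_A(y)$ for every integer $y\geq 0$. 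Player A's problem against a fixed $\bm{y}$ is dual to the symmetric one without parity constraint.

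First I would verify feasibility: that $\bm{x}$ is a distribution with mean $m+\alpha$ and that $\bm{y}$ is a distribution with mean $b$ placing mass exactly $c$ on the odd integers. The mean identities reduce to a linear combination of the means of $\uvec{0}$, $\uovec{m}$, $\uovec{m+1}$, $\uevec{m}$ and $\vvec{j}{m}$; the odd-mass identity uses that $\uovec{m}$ and $\uovec{m+1}$ are supported on odd integers while $\uevec{m}$ is supported on even integers. Second, I would compute $H(\bm{x},\bm{y})$ directly by decomposing into pairwise terms $H(\uovec{m},\uovec{m})$, $H(\uovec{m},\uevec{m})$, $H(\vvec{j}{m},\uevec{m})$, $H(\uovec{m+1},\uevec{m})$, and so on, recovering $V := 1 - (1-\alpha)b/m - \alpha b/(m+1) + c\max(\alpha,1-\alpha)/(m(m+1))$. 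Several of these pairings can be read off the unconstrained game of Theorem~\ref{th:hart4}, which saves work.

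The heart of the proof is producing the dual certificates. For the lower bound on Player A's side in Case~1, I would take $\lambda$ equal to the dual multiplier for the unconstrained $\general(a,b)$ problem (which is already tight on $\{0,2,\ldots,2m\}$ for $g_A$) and $\mu := \max(\alpha,1-\alpha)/(m(m+1))$, then check that $v + \lambda y + \mu\mathbf{1}[y\text{ odd}] \leq g_A(y)$ at every $y\in\mathbb{Z}_{\geq 0}$. Because $g_A$ is piecewise constant with kinks only at integers in the support of $\bm{x}$, the verification reduces to a finite case analysis at the $O(m)$ integers in $\{0,1,\ldots,2m+1\}$ (or $\{0,1,\ldots,2m+2\}$ in Case~2); outside this range $g_A(y) = \pm 1$ and the inequality holds trivially. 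Complementary slackness then forces equality on the support of $\bm{y}$, giving $v+\lambda b+\mu c = V$. The upper bound on Player B's side is analogous but uses only the mean constraint: I compute $g_B(x) := \Pr(Y<x)-\Pr(Y>x)$ with $Y\sim\bm{y}$, exhibit $\lambda'$ so that $v' + \lambda' x \geq g_B(x)$ with equality on the support of $\bm{x}$, and read off $v'+\lambda' a = V$.

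The main obstacle is the finite case check of the dual inequality in Step~3: $g_A(y)$ is an alternating sum of contributions from the supports of $\vvvec{m}$, $\uovec{m}$, and (in Case~2) $\uovec{m+1}$, so careful bookkeeping is needed to confirm both that the inequality holds strictly at integers outside $\supp(\bm{y})$ and that it is tight on $\supp(\bm{y})$. Case~2 parallels Case~1 with $\uovec{m+1}$ replacing $\uovec{m}$ as the odd-support component of $\bm{y}$ and a correspondingly shifted envelope, so once the template is set up for one case the other follows by the same bookkeeping with shifted indices.
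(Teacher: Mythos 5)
Your architecture---verify a saddle point by exhibiting, for each player's best-response LP, a dual certificate that is feasible everywhere and tight on the opponent's support---is sound, and it is in substance the same argument as the paper's: the paper bounds $H(X,Y)$ from below for $X\sim\bm{x}$ against every feasible $Y$, and $H(Y,X)$ from below for $Y\sim\bm{y}$ against every mean-$(m+\alpha)$ strategy $X$, using closed-form expressions for $H$ of the building blocks; that is exactly weak duality with the certificate kept implicit. The genuine gap is in the constants you commit to. You assert that computing $H(\bm{x},\bm{y})$ ``recovers'' the displayed value with coefficient $\max(\alpha,1-\alpha)$, and you set $\mu:=\max(\alpha,1-\alpha)/(m(m+1))$. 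Both claims are false, and false in a way your own complementary-slackness requirement detects: tightness of $v+\lambda y+\mu\mathbf{1}[y\text{ odd}]\le g_A(y)$ at the odd atoms of $\bm{y}$ (those of $\uovec{m}$ in Case~1, of $\uovec{m+1}$ in Case~2) forces $\mu=\alpha/(m(m+1))$ in Case~1 and $\mu=(1-\alpha)/(m(m+1))$ in Case~2, i.e.\ $\mu=\min(\alpha,1-\alpha)/(m(m+1))$. Concretely, take $m=1$, $\alpha=0.4$, $b=1$, $c=0.3$: then $\bm{x}$ places mass $0.6,\,0.4$ on $1,\,2$, and $\bm{y}$ places mass $0.35,\,0.3,\,0.35$ on $0,\,1,\,2$; a direct computation gives $H(\bm{x},\bm{y})=0.26=1-\tfrac{(1-\alpha)b}{m}-\tfrac{\alpha b}{m+1}+\tfrac{\alpha c}{m(m+1)}$, not the $\max$ value $0.29$, and with your $\mu$ (and the unconstrained multipliers $v=1$, $\lambda=-0.8$) the dual constraint at $y=1$ reads $1-0.8+0.3=0.5>g_A(1)=0.4$, so your certificate is infeasible.

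What trapped you is that the proposition's displayed formula is itself inconsistent with the paper's own proof: the proof establishes the guarantees $1-\tfrac{(1-\alpha)b}{m}-\tfrac{\alpha b}{m+1}+\tfrac{\alpha c}{m(m+1)}$ for $0<\alpha<1/2$ and $1-\tfrac{(1-\alpha)b}{m}-\tfrac{\alpha b}{m+1}+\tfrac{(1-\alpha)c}{m(m+1)}$ for $1/2\le\alpha<1$, i.e.\ the value carries $\min(\alpha,1-\alpha)$, not $\max(\alpha,1-\alpha)$. The example above confirms that the $\min$ version is the true value of $\widetilde{\general}(a,b;c)$ there (both best-response problems have value exactly $0.26$, and a zero-sum game has a unique value), so the statement as printed cannot be proved by any argument; the same $\max$-for-$\min$ slip propagates to the $\max(R,K-R)$ term in Theorem~\ref{th:value:ndiv:odd}. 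Had you carried out Step~2 rather than asserting its outcome, the computation would have forced this correction on you---committing to the printed constant without doing the computation is precisely where your proof fails. With $\max$ replaced by $\min$ throughout, in $V$ and in $\mu$, your certificate construction goes through and is a legitimate, essentially equivalent packaging of the paper's own estimates.
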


By Proposition~\ref{pr:aimpl2}, optimal strategies of player A in the constrained General Lotto game $\widetilde{\general}(A/K,B/K;1/K)$ are $(A,K)$-implementable for any $A > K \geq 2$ such that $K \nmid A$, except the cases of $K = 3$ and $A\in (7,13,19)$. Notice that these strategies are also optimal for player A when player B has an even number of units.

$(B,K)$-implementability of the optimal strategy of player B in the General Lotto game $\widetilde{\general}(A/K,B/K;1/K)$ in the case of $A \bmod K < K$ (strategy $\bm{y}$ defined in point~\ref{p:general:constr:1} of Proposition~\ref{pr:general:constr}), $K\geq 2$, and $2\lfloor A/K\rfloor < B \leq K\lfloor A/K\rfloor$ is established by Proposition~\ref{pr:aimpl4}. $(B,K)$-implementability of the optimal strategy of player B in the case of $A \bmod K \geq K$ (strategy $\bm{y}$ defined in point~\ref{p:general:constr:2} of Proposition~\ref{pr:general:constr}) is established in the proposition below.

\begin{proposition}
\label{pr:aimpl6}
Let $m \geq 1$, $K \geq 2$, and $2m+1 \leq B \leq Km$ be natural numbers and let $b = B/K$. If $B$ is odd then
\begin{equation}
\left(1-\frac{B-1}{Km}\right) \uvec{0} + \frac{B-1}{Km} \left(\left(\frac{m}{B-1}\right) \uovec{m+1} + \left(1-\frac{m}{B-1}\right)\uevec{m}\right)
\end{equation}
is $(B,K)$-feasible.
\end{proposition}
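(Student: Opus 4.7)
My plan is to exhibit an explicit $L \times K$ matrix $\matr{X}$ that $(B,K)$-implements $\bm{y}$. First I would expand the target distribution componentwise: $P(0) = 1 - B/(K(m+1))$ (combining the atom at $0$ with the $\uevec{m}$-contribution), $P(2j+1) = 1/(K(m+1))$ for $j \in \{0,\ldots,m\}$, and $P(2j) = (B-1-m)/(Km(m+1))$ for $j \in \{1,\ldots,m\}$. Choosing $L = m(m+1)$, the target cardinality counts become $\card(\matr{X})_0 = m(K(m+1)-B)$, $\card(\matr{X})_{2j+1} = m$, and $\card(\matr{X})_{2j} = B-1-m$, all non-negative integers under the hypotheses $2m+1 \leq B \leq Km$.

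I would then build $\matr{X}$ as a vertical composition $\matr{X}_0 \sslash \matr{X}_1 \sslash \cdots \sslash \matr{X}_m$, where block $\matr{X}_i$ is an $m \times K$ matrix in which every row carries the value $2i+1$ in one fixed column and whose remaining $K-1$ entries form a partition of $B - (2i+1)$ into even parts from $\{0,2,\ldots,2m\}$. The row sums are then automatically $B$, and block $\matr{X}_i$ contributes exactly $m$ copies of $2i+1$, matching the odd-value part of the target count. The even sub-block of each $\matr{X}_i$ would be designed along the lines of the construction used in the proof of Proposition~\ref{pr:aimpl4}: horizontally composed shifted copies of the elementary block $\matr{E}(m)$ (row sum $2m$, cardinality $2\ue{m}$) padded with zero columns, with the composition chosen so that each of the $m$ rows attains sum $B - 2i - 1$. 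Feasibility of each such per-row even partition is guaranteed by the inequality $B - 2i - 1 \leq 2m(K-1)$, which itself follows from $B \leq Km$.

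The hard part is distributing the target even counts across the $m+1$ blocks so that the per-row sum constraints and the aggregate per-value totals are met simultaneously: block $\matr{X}_i$ contains $m(K-1)$ even positions summing to $m(B-2i-1)$, and across all blocks each positive even value $2j$ must appear $B-1-m$ times while $0$ must appear $m(K(m+1)-B)$ times. I would resolve this via an explicit balanced assignment that places larger even values preferentially in blocks with smaller index $i$ (hence larger row sums $B - 2i - 1$) and fills the remaining positions with zeros; such an assignment exists throughout the given parameter range and can be obtained by a transportation-type argument or by a greedy procedure. Once the block structure is fixed, verifying $\card(\matr{X})/(LK) = \bm{y}$ reduces to a direct tally of each value. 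The overall approach closely parallels the proof of Proposition~\ref{pr:aimpl4}, the essential new feature being the extra block $\matr{X}_m$ that accommodates the new odd value $2m+1$.
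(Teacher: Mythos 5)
Your reduction to a counting problem is sound: the componentwise probabilities and the target cardinality vector for an $m(m+1)\times K$ implementing matrix are computed correctly, the totals are consistent, and the block layout $\matr{X}_0 \sslash \cdots \sslash \matr{X}_m$ with the odd value $2i+1$ placed once per row of block $\matr{X}_i$ is compatible with the target counts. (Indeed this structure is forced: since $B$ is odd every row contains at least one odd entry, and the target demands exactly $m(m+1)$ odd entries in $m(m+1)$ rows, hence exactly one per row.)

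The gap is the final step, which is where essentially all of the difficulty of the proposition lies. You must fill the $m(m+1)(K-1)$ even positions with values from $\{0,2,\ldots,2m\}$ so that (a) the $K-1$ even entries in each row of block $i$ sum to exactly $B-2i-1$, and (b) each positive even value $2j$ appears exactly $B-1-m$ times overall while $0$ appears exactly $m(K(m+1)-B)$ times. You assert that such a ``balanced assignment'' exists and ``can be obtained by a transportation-type argument or by a greedy procedure,'' but neither is available off the shelf. This is not a transportation problem: each row carries two simultaneous linear constraints (an entry-count constraint and a weighted-sum constraint) on top of exact aggregate value multiplicities, so the constraint matrix is not totally unimodular and integrality does not come for free; worse, the general problem of partitioning a multiset into groups of prescribed sizes and prescribed sums contains 3-partition, so no generic existence argument can work --- any proof must exploit the specific uniform structure of these counts. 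What your sketch actually verifies (per-row feasibility $B-2i-1\le 2m(K-1)$ and consistency of the aggregate totals) does not imply joint feasibility, and you state no greedy invariant or induction that would. By contrast, this is exactly the work the paper's proof does explicitly: it writes $B-1=Lm+r$, disposes of $K-L-1$ columns with zeros and $L-2$ or $L-3$ columns with copies of $\bm{E}(m)$, and then constructs the remaining hard kernel (an $m(m+1)\times 3$ or $m(m+1)\times 4$ matrix with the exact value counts) by modifying, entry by entry, the matrices $\bm{S}(m,r)$ and $\bm{T}(m,r)$ built in the proof of Proposition~\ref{pr:aimpl3}. Until you exhibit such an assignment, or prove its existence by an argument tailored to these counts, the proof is incomplete precisely at its core.
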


The results regarding optimal strategies obtained above allow us to complete the characterization of the value of the Colonel Blotto game $\blotto(A,B;K)$ for the cases where $2\lfloor A/K\rfloor < B \leq K \lfloor A/K\rfloor$, $\ndivides{K}{A}$, and $B$ odd. We state this in the theorem below.

\begin{theorem}
\label{th:value:ndiv:odd}
Let $K\geq 2$, $A > K$, and $2\lfloor A/K \rfloor < B \leq K\lfloor A/K \rfloor$ be natural numbers such that $\ndivides{K}{A}$, $B$ is odd, and either $K\neq 3$ or $A \notin \{7,13,19\}$. The value of Colonel Lotto game $\lotto(A,B;K)$ and Colonel Blotto game $\blotto(A,B;K)$ is
\begin{align*}
\val{\blotto(A,B;K)} & = \frac{A-B}{A} - \frac{B}{A}\left(\frac{R(K-R)}{(A-R)(A+K-R)}\right) + \frac{\max(R,K-R)}{(A-R)(A+K-R)},
\end{align*}
where $R = A\bmod K$.
\end{theorem}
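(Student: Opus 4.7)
The plan is to solve $\blotto(A,B;K)$ via the constrained General Lotto game $\widetilde{\general}(A/K, B/K; 1/K)$. First I would invoke the odd-$B$ analogue of Observation~\ref{obs:3}: since $B$ is odd, any mixed strategy of player B in the Colonel Lotto game must assign an odd number to at least one battlefield with probability $1$, so the associated General Lotto distribution $\bm{p}$ satisfies $\sum_{i\geq 0} p_{2i+1} \geq 1/K$. Consequently the strategies of player B in $\lotto(A,B;K)$ embed into $S_{\widetilde{\general}}(B/K; 1/K)$, and any $(A,B,K)$-feasible equilibrium of $\widetilde{\general}(A/K, B/K; 1/K)$ lifts to an equilibrium of $\lotto(A,B;K)$, equivalently of $\blotto(A,B;K)$ by Observation~\ref{obs:1}, with the same value.

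Next I would apply Proposition~\ref{pr:general:constr} with $a = A/K$, $b = B/K$, $c = 1/K$, $m = \lfloor A/K\rfloor$, and $\alpha = R/K$; note $c \leq b/(m+1)$ follows from $B \geq 2m+1 \geq m+1$. The proposition supplies equilibrium strategies $\bm{x}$ and $\bm{y}$, split into the subcases $\alpha < 1/2$ and $\alpha \geq 1/2$. For player A, the strategy $\bm{x}$ is exactly the one from Theorem~\ref{th:hart4}, and its $(A,K)$-feasibility is established by Proposition~\ref{pr:aimpl2}; the exceptional set in that proposition ($K=3$ and $m\in\{2,4,6\}$, forcing $\alpha\leq 1/2$) corresponds precisely to $A\in\{7,13,19\}$ (with $R=1$), which the theorem excludes. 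The nearby cases $A\in\{8,14,20\}$ have $R=2>K/2$, so they fall in the $\alpha>1/2$ branch where Proposition~\ref{pr:aimpl2} applies without restriction. For player B, the strategy $\bm{y}$ in the $\alpha<1/2$ branch coincides verbatim with the one in Proposition~\ref{pr:aimpl4}, and in the $\alpha\geq 1/2$ branch with the one in Proposition~\ref{pr:aimpl6}; the hypothesis $2m<B\leq Km$ supplies the range assumptions of both propositions.

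Finally I would simplify the value formula from Proposition~\ref{pr:general:constr},
\begin{equation*}
\val{\widetilde{\general}(A/K,B/K;1/K)} \,=\, 1 - \frac{(1-\alpha)b}{m} - \frac{\alpha b}{m+1} + \frac{c\max(\alpha, 1-\alpha)}{m(m+1)},
\end{equation*}
into the claimed closed form. Substituting $\alpha = R/K$, $m = (A-R)/K$, $b = B/K$, $c = 1/K$, a short computation using the identity $(K-R)(A+K-R)+R(A-R) = K(A+K-2R)$ gives $(1-\alpha)b/m + \alpha b/(m+1) = B(A+K-2R)/[(A-R)(A+K-R)]$, while the correction term reduces to $\max(R,K-R)/[(A-R)(A+K-R)]$. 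Applying the further identity $(A-R)(A+K-R) + R(K-R) = A(A+K-2R)$ rewrites the first piece as $(A-B)/A - (B/A)\cdot R(K-R)/[(A-R)(A+K-R)]$, producing the stated formula. The main obstacle is purely organizational: verifying that the two branches of Proposition~\ref{pr:general:constr} mesh cleanly with Propositions~\ref{pr:aimpl4} and~\ref{pr:aimpl6} on the B side, and that the exclusion list in Proposition~\ref{pr:aimpl2} matches the theorem's exclusion list on the A side; no new technical ideas are required beyond the machinery already developed in Sections~\ref{sec:genlot} and~\ref{sec:beyond}.
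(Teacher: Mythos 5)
Your proposal is correct and follows essentially the same route as the paper: the theorem is assembled exactly from Observation~\ref{obs:3} (correctly read for odd $B$, which is what the motivating argument actually supports), Proposition~\ref{pr:general:constr} with $c=1/K$, the feasibility results of Propositions~\ref{pr:aimpl2}, \ref{pr:aimpl4}, and~\ref{pr:aimpl6}, and the algebraic simplification of the value, whose identities $(K-R)(A+K-R)+R(A-R)=K(A+K-2R)$ and $(A-R)(A+K-R)+R(K-R)=A(A+K-2R)$ you verify correctly. Your matching of the exceptional set in Proposition~\ref{pr:aimpl2} ($K=3$, $m\in\{2,4,6\}$, $R=1$) with the theorem's exclusion $A\in\{7,13,19\}$, and of the two branches of Proposition~\ref{pr:general:constr} with the two feasibility propositions for player B, is exactly the gluing the paper performs.
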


The value of the game is equal to the value of the game in the continuous case, $(A-B)/A$, minus the reminder resulting from indivisibility of the number of units of the stronger player by the number of battlefields, already established in Theorem~\ref{th:value:ndiv:even}, plus the additional
quantity resulting from odd number of units of player B. When the number of units of player B
is sufficiently low,
\begin{equation*}
2\left\lfloor \frac{A}{K} \right\rfloor < B < A\frac{\max(R,K-R)}{R(K-R)},
\end{equation*}
where $R = A\bmod K$, then this quantity exceeds the loss due to the indivisibility of the number of units by the number of battlefields and payoff to player A is greater than in the continuous variant of the game.

\subsection{The case of $1 \leq  B \leq \lfloor A/K \rfloor$}
\label{sec:low}

In this case $A \geq K$, as $\lfloor A/K \rfloor \geq 1$. It is easy to see that when $B < \lfloor A/K \rfloor = m$ then
it is a dominant strategy for player A to assigned $m+1$ units at $A\bmod K$ battlefields and $m$ units at the remaining battlefields.
Player A wins at every battlefields regardless of the assignment chosen by player B. The value of the game is equal to $1$. 
If $B = \lfloor A/K \rfloor$ then player B is able to obtain payoff greater than $-1$ by assigning $B$ units to one battlefield, chosen uniformly at random, and assigning zero units to the remaining battlefields. Best response to this strategy by player A is choosing a set of $A\bmod K$ battlefields uniformly at random, assigning $m+1$ units there, and assigning $m$ units at the remaining battlefields (clearly this strategy is dominant in the case of $B < m$).
We characterize optimal strategies of the players and the value of the game in the proposition below.

\begin{proposition}
Let $A \geq K$ and $1 \leq  B \leq \lfloor A/K \rfloor$. Then the value of Colonel Blotto game $\blotto(A,B;K)$ is
\begin{equation*}
\val{\blotto(A,B;K)} = \begin{cases}
                        1 - \frac{1 - \frac{A\bmod K}{K}}{K} = \frac{A\bmod K}{K^2}, & \textrm{if $B = m$}\\
                        1, & \textrm{otherwise.}
                       \end{cases}
\end{equation*}
\begin{enumerate}
\item The mixed strategy which picks a subset of $K - A \bmod K$ out of $K$ battlefields uniformly at random and allocates 
$\lfloor A/K \rfloor$ units there and, in the case of $A \bmod K > 0$, allocates $\lceil A/K \rceil$ units in the remaining $A \bmod K$ battlefields is optimal for player A.
\item If $B = \lfloor A/K \rfloor$ then the mixed strategy which picks a battlefield uniformly at random and allocates $B$ units there and allocates $0$ units to the remaining battlefields is optimal for player B.
\item If $B < \lfloor A/K \rfloor$ then any assignments of units is an optimal stratetgy for player B.
\end{enumerate}
\end{proposition}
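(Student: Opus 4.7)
The plan is to handle the two cases $B < m$ and $B = m$ separately, where $m = \lfloor A/K \rfloor$.

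For $1 \leq B < m$, the proposed strategy of player $\mathrm{A}$ places at least $m$ units on every battlefield. Since player $\mathrm{B}$ has a total of $B < m$ units, every pure strategy of $\mathrm{B}$ places strictly fewer than $m$ units on each individual battlefield, so $\mathrm{A}$ wins every battlefield and collects payoff $1$. Since $1$ is the maximum possible payoff, every strategy of $\mathrm{B}$ is automatically a best response, and the value of the game is $1$.

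For $B = m$ I would verify directly that the two proposed mixed strategies form a Nash equilibrium by separately bounding the expected payoff against arbitrary pure strategies of the opponent. Let $R = A \bmod K$ and set $v = 1 - (K-R)/K^2$ as the candidate value. To show that the proposed strategy of $\mathrm{A}$ secures at least $v$, I would note that every battlefield receives at least $m$ units under this strategy and at most $m$ units under any pure strategy of $\mathrm{B}$ (since $\mathrm{B}$'s total budget is only $m$), so $\mathrm{A}$ never loses a battlefield. If $\mathrm{B}$ pools all $m$ units on one battlefield, the remaining $K-1$ battlefields are won outright by $\mathrm{A}$, and the contested one is tied with probability $(K-R)/K$ and won with probability $R/K$, giving expected payoff exactly $v$. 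If $\mathrm{B}$ spreads the units across two or more battlefields, every $b_i < m$ and the payoff is $1 \geq v$.

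For the matching upper bound, I would fix an arbitrary pure strategy $(a_1,\ldots,a_K)$ of $\mathrm{A}$ and compute the expected payoff against the proposed strategy of $\mathrm{B}$. The per-battlefield contribution reduces to a stepwise function $f(a_j)$ whose only positive increments occur at $a_j = 1$, $a_j = m$, and $a_j = m+1$; maximising $\sum_j f(a_j)$ under the budget constraint $\sum_j a_j = Km + R$ is then elementary, and the maximum equals $v$, yielding the matching bound and the value $\mathrm{val}\,\blotto(A,B;K) = v$.

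The main obstacle is the last optimisation step, but because $f$ has only three positive jumps and the surplus above $Km$ is merely $R < K$, inspection of the marginal increments shows that any maximiser places at least one unit on every battlefield (to collect the jump at $1$) and spends the remaining surplus pushing battlefields up to $m$ or $m+1$; this forces a distribution consisting of $R$ values of $m+1$ and $K-R$ values of $m$, matching precisely the proposed strategy of $\mathrm{A}$ and confirming its optimality.
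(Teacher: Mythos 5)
Your handling of $B < \lfloor A/K \rfloor$ and your lower bound for $B = m$ (player A's security level) are correct, and your plan is structurally parallel to the paper's: the paper passes to the General Lotto game and proves the two one-sided bounds there, while you stay inside the Blotto game, your per-battlefield function $f$ being exactly the paper's decomposition $H(Y,X) = \frac{K-1}{K}H(\uvec{0},X) + \frac{1}{K}H(\uvec{m},X)$ in disguise. The genuine gap is your last optimisation step. You correctly list the jumps of $f$ --- $(K-1)/K^2$ at $1$, and $1/K^2$ at each of $m$ and $m+1$ --- but the conclusion that a maximiser must push every battlefield up to $m$ or $m+1$ does not follow, because the jumps have very different prices: going from $1$ to $m$ costs $m-1$ units and buys $1/K^2$, while going from $m$ to $m+1$ costs a single unit for the same gain. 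Whenever $m \geq 3$ and $R \leq K-3$ (where $R = A \bmod K$), demoting one battlefield from $m$ down to $1$ (which frees $m-1$ units and loses only $1/K^2$) and promoting at least two other battlefields from $m$ to $m+1$ strictly increases $\sum_j f(a_j)$, so your claimed maximiser is not optimal and the true maximum exceeds $v$.

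Concretely, take $K=4$, $A=16$, $B=m=4$, $R=0$, so $v = 3/4$. Against B's pooling strategy the allocation $(5,5,5,1)$ earns $3\cdot\frac{4}{16}+\frac{2}{16} = \frac{7}{8} > v$: with probability $3/4$ player B pools against a $5$ and A wins all four battlefields, and with probability $1/4$ B pools against the $1$ and A still wins the other three. So the inequality you are trying to prove --- that B's pooling holds A to $v$ --- is false, and no repair of the optimisation step can rescue it; indeed one can check that $\val{\blotto(16,4;4)} = 5/6$, secured for B by mixing the partitions $\{4,0,0,0\}$ and $\{2,2,0,0\}$ with weights $2/3$ and $1/3$, so for these parameters neither proposed strategy is optimal and the stated value is wrong (the statement is even internally inconsistent, since $1 - \frac{1 - R/K}{K} \neq \frac{R}{K^2}$ for $K \geq 2$). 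You should also know that the paper's own proof stumbles at exactly the corresponding step: it bounds $\Pr(X \geq m)+\Pr(X \geq m+1)$ by a small linear program and asserts the optimum is $1+\alpha$, attained at $q_m = 1-\alpha$, $q_{m+1}=\alpha$; but for $m \geq 2$ the true optimum of that program is larger (e.g.\ mass $\frac{m+\alpha}{m+1}$ at $m+1$ and the rest at $0$), for precisely the reason your optimisation fails. So this is not a gap you could have closed by following the paper more closely; the upper-bound half of the claim needs a genuinely different strategy for player B.
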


\begin{proof}
It is immediate to see that the strategy of player A is dominant in the case of $B < \lfloor A/K \rfloor$. Any response to that strategy of player B yields him the maximal possible payoff $0$. 

Suppose that $B = \lfloor A/K \rfloor$. Let $m = \lfloor A/K \rfloor$, $\alpha = \frac{A \bmod K}{K}$, and $b = B/K$. Consider the General Lotto game $\general(m+\alpha,b)$. The strategy in the General Lotto game that corresponds to the strategy of player A is
\begin{equation*}
\bm{x} = \frac{K-r}{K}\uvec{m} + \frac{r}{K}\uvec{m+1}
\end{equation*}
and the strategy that corresponds to the strategy of player B is
\begin{equation*}
\bm{y} = \frac{K-1}{K}\uvec{0} + \frac{r}{K}\uvec{m}.
\end{equation*}

Let $X$ be a non-negative integer valued random variable distributed with $\bm{x}$ and $Y$ be a non-negative integer valued random variable with $\Ex(Y) = m$ and $\Pr(Y > m) = 0$. Let $p_i = \Pr(Y = i)$. Then
\begin{align*}
H(X,Y) & = \frac{K-r}{K}H(\uvec{m},Y) + \frac{r}{K}H(\uvec{m+1},Y) = \frac{K-r}{K}\left(1 - p_m\right) + \frac{r}{K}\\
       & = 1 - \frac{K-r}{K}p_m,
\end{align*}
as  $\Pr(Y > m) = 0$ and, consequently, $H(\uvec{m+1},Y) = 1$. Since $\Ex(Y) = b = B/K = m/K$ so 
\begin{equation*}
p_m = \frac{\sum_{i\geq 1} ip_i - \sum_{\substack{i\geq 1 \\ i\neq m} ip_i}}{m} = \frac{\Ex(Y) - \sum_{\substack{i\geq 1 \\ i\neq m}} ip_i}{m} \leq \frac{b}{m} = \frac{1}{K}.
\end{equation*}
Hence 
\begin{align*}
H(X,Y) \geq 1 - \frac{K-r}{K}\frac{1}{K} = \frac{r}{K^2}.
\end{align*}

On the other hand, let $Y$ be a non-negative integer valued random variable distributed with $\bm{y}$ and $X$ be a non-negative integer valued random variable with $\Ex(X) = m + \alpha$. Let $q_i = \Pr(X = i)$. Then
\begin{align*}
H(Y,X) & = \frac{K-1}{K}\left(q_0 - 1\right) + \frac{1}{K}H(\uvec{m},X) = \frac{K-1}{K}\left(q_0 - 1\right) + \frac{1}{K}\left(\Pr(X < m) - \Pr(X > m)  \right)\\
       & = \frac{K-1}{K}\left(q_0 - 1\right) + \frac{1}{K}\left(1 - (\Pr(X \geq m) + \Pr(X \geq m+1))  \right),
\end{align*}
Since $\Ex(X) = m+\alpha$ so $\Pr(X \geq m) + \Pr(X \geq m+1)$ is bounded from above by
\begin{align*}
& \max\quad q_m + 2\sum_{i\geq m+1} q_i \quad \textrm{s.t.}\\
& \qquad \sum_{i \geq 1} iq_i = m+\alpha,\\
& \qquad \sum_{i \geq 0} q_i = 1,\\
& \qquad q_i \geq 0, \textrm{for all $i \geq 0$}.
\end{align*}
This is maximised by setting $q_i = 0$ for all $i \in \{1,\ldots,m-1\}$ and, since for all $i\geq m+2$, $q_i$ is multiplied by
a greater coefficient than $q_m$ and $q_{m+1}$ in the first constraint, by setting $q_i = 0$ for all $i \geq m+2$.
Given that the first and the second constraint yield $q_m = 1-\alpha$, $q_{m+1} = \alpha$, the solutions of the maximisation problem is $1 + \alpha$. Hence $\Pr(X \geq m) + \Pr(X \geq m+1) \leq 1 + \alpha$ and
\begin{align*}
H(Y,X) & \geq -\frac{K-1}{K} - \frac{\alpha}{K} = -1 + \frac{K-r}{K}\frac{1}{K} = -\frac{r}{K^2},
\end{align*}
with equality when $q_m = 1-\alpha$, $q_{m+1} = \alpha$, and $q_i = 0$, for all $i \notin \{m,m+1\}$.

This shows that the payoff that strategy $\bm{x}$ secures for player A is equal to minus the payoff that strategy $\bm{y}$ secures for player B. Since the game is zero-sum, this proves that $\bm{x}$ and $\bm{y}$ are optimal strategies of players A and B, respectively,
and that $r/K^2$ is the value of the game.
\end{proof}

%

\section{Discussion}
\label{sec:disc}

This paper provides explicit formulas for the value of the game as well as equilibrium strategies for a large range of parameters of the Colonel Blotto game in the asymmetric case of $B < A$. Explicit formulas for the value of the game help understanding the advangage/disadvantege related to the numbers and differences in resources as well as the marginal benefits of additional resources. In particular, the formula for the value of the game obtained in Theorems~\ref{th:value:ndiv:even}, \ref{th:value:ndiv:odd} and~\ref{th:value:ndiv:odd} reveals that divisibility of the number of resources by the number of battlefields is important for the stronger player and the parity of the number of resources is important for the weaker player. The inability of the stronger player to partition his resources evenly across the battlefields is beneficial to the weaker player, as long as he has an even number of resources. An odd number of resources of the weaker player is beneficial to the stronger player in this case. The formulas given in this theorems account exactly for these effects.
Explicit characterization of equilibrium strategies provides insight into marginal distributions of equilibrium strategies. From this we can conclude additional properties of optimal strategies, like maximal numbers of resources used by the players, the fractions of different numbers of resources used in equilibrium assignments, etc. 

Such qualitative properties of the value of the game and equilibrium strategies of the players would be hard two obtain by using numerical methods based on existing algorithms for solving the game. The value of the game can be approximated well by these algorithms, however, noticing the aforementioned effects of divisibility of the number resources by the number of battlefields or parity of the number of resources would require sampling the solutions of the game for a large space of game parameters. Uncovering the properties of equilibrium strategies is harder, because such numerical solutions allow us to obtain $\varepsilon$-equilibria, which do not have to be close to the actual equilibria of the game (c.f.~\cite{EtessamiYannakakis10} for the computational problem of finding strategy profiles which are $\varepsilon$-close to actual equilibria).

The matrices representing optimal strategies constructed in this paper are of the size at most $m(m+1) \times K$, where $m = \lfloor A/K \rfloor$. Hence these equilibrium strategies allow for relatively small representation. It is interesting to compare the size of these strategies with computational results regarding the Colonel Blotto game. The method of~\cite{BehnezhadDehghaniDerakhshanHajiaghayiSeddighin22} (which has the lowest proven bounds for the computation of solutions of the Colonel Blotto game) requires solving a linear program that has $\Theta(A^2/K)$ constraints and $\Theta(A^2/K)$ variables. They proved that this size is minimal in the sense that the extension complexity of the polytope of the optimal strategies of a player with $C$ units is $\Theta(C^2/K)$.\footnote{
It is important to note that the method of~\cite{BehnezhadDehghaniDerakhshanHajiaghayiSeddighin22} allows for solving Colonel Blotto in a much more general variant, where the values of battlefields are heterogeneous.}
Computing the optimal strategies requires solving this linear program. Solutions obtained in this paper show that in the case where the values of battlefields are homogeneous the problem can be solve in time that is proportional to the extension complexity of the polytope of optimal strategies of player A.

\bibliographystyle{te}
\bibliography{biblio}

\newpage

\begin{appendix}

\section{Proofs for the General Lotto solvable cases}

\subsection{Proofs for the case of $\ndivides{K}{A}$}

\begin{proof}[Proof of Proposition~\ref{pr:aimpl3}]
Let $\bm{x} = (1-b/m) \uvec{0} + (b/m)  \uevec{m}$. 

For the left to right implication notice that all values that appear with positive frequencies under $\bm{x}$ are even. Hence for any integer $K$, the sum of $K$ even numbers is even as well. Thus if $\bm{x}$ is $(B,K)$-feasible then $B$ is even. In addition, the value $2m$ appears with positive frequency under $\bm{x}$. Hence if $\bm{x}$ is $(B,K)$-feasible then $B\geq 2m$.

For the right to left implication suppose that $B$ is even and let $r = B \bmod m$ and $L = \lfloor B/m \rfloor$ so that $B = Lm + r$. In addition, since $B \geq 2m$ so $L\geq 2$ and since $B \leq Km$ so either $L < K$ or $L = K$ and $r = 0$.

Notice that in the case of $L = K$ and $r = 0$, $\bm{x} = \uevec{m}$ and, since $B$ is even, so by point~\ref{p:hart:2} of Proposition~\ref{pr:hart} $\bm{x}$ is $(B,K)$-feasible.

For the remaining part of the proof we assume that $L < K$. We start with rewriting $\bm{x}$ as follows:
\begin{equation*}
\bm{x} = \left(\frac{1}{Km(m+1)}\right) \left((m(K-L)-r)(m+1)  \uvec{0} + (Lm+r) \ue{m} \right).
\end{equation*}
Let $X^m = (m(K-L)-r)(m+1)  \uvec{0} + (Lm+r) \ue{m}$.
To show that $\bm{x}$ is $(B,K)$-feasible, we will construct $m(m+1) \times K$ matrices such that
$X^m$ is a cardinality vector for them. 

We consider two cases separately: (i)~$L\geq 3$ and $L$ odd and (ii)~$L\geq 2$ and $L$ even.

\noindent\textbf{Case~(i): $L \geq 3$ and $L$ odd.}

In this case $X^m$ can be rewritten as $X^m = (m-r)(m+1)  \uvec{0} + (3m+r) \ue{m} + m(m+1)(K-L-1)  \uvec{0} + (L-3)m\ue{m}$.
Since $L \geq 3$ and $L$ is odd so, by point~\ref{p:hart:2} of Proposition~\ref{pr:hart}, $(L-3)m\ue{m}$ is $((L-3)m,L-3)$-feasible and can be implemented by $m(m+1) \times (L-3)$ matrix $\sslash_{i = 1}^{m} |_{j = 1}^{\frac{L-3}{2}} \bm{E}(m)$.
Also, since $K > L$ so $m(m+1)(K-L-1)  \uvec{0}$ is $(0,K-L-1)$-feasible and can be implemented by $m(m+1) \times (K-L-1)$ matrix that contains zeros only.
Thus it is enough to construct a $m(m+1) \times 3$ matrix $\bm{S}(m)$ that $(3m+r,4)$-implements $Z(m) = (m-r)(m+1)  \uvec{0} + (3m+r) \ue{m}$.
Notice that $Z(m) = [(m-r)(m+1) + 3m+r,0,3m+r,0,3m+r,\ldots,0,3m+r]$, that is $Z_{2i+1} = 0$, for $i \in \{0,\ldots,m-1\}$, $Z_{0} = (m-r)(m+1) + 3m+r$, and $Z_{2i} = 3m+r$, for $i \in \{1,\ldots,m\}$.
Moreover, since $B$ is even and $L$ is odd so $r \equiv m \pmod 2$.

Let 
$\bm{S}(m,r) = \bm{S}^{\mathrm{I}}(m,r) \sslash \bm{S}^{\mathrm{II}}(m,r) \sslash \bm{S}^{\mathrm{III}}(m,r) \sslash \bm{S}^{\mathrm{IV}}(m,r) \sslash \bm{S}^{\mathrm{V}}(m,r) \sslash \left( \sslash_{j = 1}^{\frac{m-r+2}{2}} \bm{S}^{\mathrm{VI},j}(m,r) \right)$ $\sslash \left( \sslash_{j = 1}^{\frac{m-r+2}{2}} \bm{S}^{\mathrm{VII},j}(m,r) \right)$ be a $m(m+1) \times 4$ matrix consisting of $\frac{m}{2} + 3$ blocks, defined as follows.

A $(m+r-2) \times 4$ block
\begin{equation}
\bm{S}^{\mathrm{I}}(m,r) = \left[ \begin{array}{l c c c c l}
                             \{ & 2 & m+r-2 & 0 & 2m & \}^{2} \\
                             \{ & 4 & m+r-4 & 0 & 2m & \}^{2} \\
                             & \vdots & \vdots & \vdots& \vdots & \\
                             \{ & m+r-2 & 2 & 0 & 2m & \}^{2}
                       \end{array} \right],
\end{equation}
with the $i$'th part ($i = 1,\ldots,\frac{m+r-2}{2}$) of the form $\left[ \begin{array}{c c c c} 2i & m+r-2i & 0 & 2m \end{array}\right]$, repeated $2$ times.

A $\frac{(m+r-2)(m+r-4)}{8} \times 4$ block
\begin{equation}
\bm{S}^{\mathrm{II}}(m,r) = \left[ \begin{array}{l c c c c l}
                             \{ & 2 & m+r-2 & 2 & 2m-2 & \}^{\frac{m+r}{2}-2} \\
                             \{ & 4 & m+r-4 & 4 & 2m-4 & \}^{\frac{m+r}{2}-3} \\
                             & \vdots & \vdots & \vdots& \vdots & \\
                             \{ & m+r-4 & 4 & m+r-4 & m-r+4 & \}^{1}
                       \end{array} \right],
\end{equation}
with the $i$'th part ($i = 1,\ldots,\frac{m+r-4}{2}$) of the form $\left[ \begin{array}{c c c c} 2i & m+r-2i & 2i & 2m-2i \end{array}\right]$, repeated $\frac{m+r}{2}-i-1$ times.

A $\frac{(m+r-2)(m+r-4)}{8} \times 4$ block
\begin{equation}
\bm{S}^{\mathrm{III}}(m,r) = \left[ \begin{array}{l c c c c l}
                             \{ & m+r-2 & 2 & 2 & 2m-2 & \}^{\frac{m+r}{2}-2} \\
                             \{ & m+r-4 & 4 & 4 & 2m-4 & \}^{\frac{m+r}{2}-3} \\
                             & \vdots & \vdots & \vdots& \vdots & \\
                             \{ & 4 & m+r-4 & m+r-4 & m-r+4 & \}^{1}
                       \end{array} \right],
\end{equation}
with the $i$'th part ($i = 1,\ldots,\frac{m+r-4}{2}$) of the form $\left[ \begin{array}{c c c c} m+r-2i & 2i & 2i & 2m-2i \end{array}\right]$, repeated $\frac{m+r}{2}-i-1$ times.

A $\frac{(m-r+2)(m-r+4)}{8} \times 4$ block
\begin{equation}
\bm{S}^{\mathrm{IV}}(m,r) = \left[ \begin{array}{l c c c c l}
                             \{ & m+r & 2m & 0 & 0 & \}^{\frac{m-r}{2}+1} \\
                             \{ & m+r+2 & 2m-2 & 0 & 0 & \}^{\frac{m-r}{2}} \\
                             & \vdots & \vdots & \vdots& \vdots & \\
                             \{ & 2m & m+r & 0 & 0 & \}^{1}
                       \end{array} \right],
\end{equation}
with the $i$'th part ($i = 0,\ldots,\frac{m-r}{2}$) of the form $\left[ \begin{array}{c c c c} m+r+2i & 2m-2i & 0 & 0 \end{array}\right]$, repeated $\frac{m-r}{2}-i+1$ times.

A $\frac{(m-r+2)(m-r+4)}{8} \times 4$ block
\begin{equation}
\bm{S}^{\mathrm{V}}(m,r) = \left[ \begin{array}{l c c c c l}
                             \{ & 2m & m+r & 0 & 0 & \}^{\frac{m-r}{2}+1} \\
                             \{ & 2m-2 & m+r+2 & 0 & 0 & \}^{\frac{m-r}{2}} \\
                             & \vdots & \vdots & \vdots& \vdots & \\
                             \{ & m+r & 2m & 0 & 0 & \}^{1}
                       \end{array} \right],
\end{equation}
with the $i$'th part ($i = 0,\ldots,\frac{m-r}{2}$) of the form $\left[ \begin{array}{c c c c} 2m-2i & m+r+2i & 0 & 0 \end{array}\right]$, repeated $\frac{m-r}{2}-i+1$ times.

A $\frac{m-r+2}{2} \times 4$ block
\begin{equation}
\bm{S}^{\mathrm{VI,j}}(m,r) = \left[ \begin{array}{l c c c c l}
                             \{ & 2 & 2m-2j & 0 & m+r+2j-2 & \}^{1} \\
                             \{ & 4 & 2m-2j-2 & 0 & m+r+2j-2 & \}^{1} \\
                             & \vdots & \vdots & \vdots& \vdots & \\
                             \{ & m+r-2 & m-r-2j+4 & 0 & m+r+2j-2 & \}^{1}
                       \end{array} \right],
\end{equation}
with the $i$'th part ($i = 1,\ldots,\frac{m+r-2}{2}$) of the form $\left[ \begin{array}{c c c c} 2i & 2m-2j-2i+2 & 0 & m+r+2j-2 \end{array}\right]$, repeated $1$ time.

A $\frac{m-r+2}{2} \times 4$ block
\begin{equation}
\bm{S}^{\mathrm{VII,j}}(m,r) = \left[ \begin{array}{l c c c c l}
                             \{ & m+r+2j-2 & 0 & 2m-2j & 2  & \}^{1} \\
                             \{ & m+r+2j-2 & 0 & 2m-2j-2 & 4 & \}^{1} \\
                             & \vdots & \vdots & \vdots& \vdots & \\
                             \{ & m+r+2j-2 & 0 & m-r-2j+4 & m+r-2 & \}^{1}
                       \end{array} \right],
\end{equation}
with the $i$'th part ($i = 1,\ldots,\frac{m+r-2}{2}$) of the form $\left[ \begin{array}{c c c c} m+r+2j-2 & 0 & 2m-2j-2i+2 & 2i \end{array}\right]$, repeated $1$ time.

\noindent\textbf{Case~(ii): $L\geq 2$ and $L$ even.}

In this case $X^m$ can be rewritten as $X^m = (m-r)(m+1)  \uvec{0} + (2m+r) \ue{m} + m(m+1)(K-L-1)  \uvec{0} + (L-2)m\ue{m}$.
Since $L \geq 2$ and $L$ is even so, by point~\ref{p:hart:2} of Proposition~\ref{pr:hart}, $(L-2)m\ue{m}$ is $((L-2)m,L-2)$-feasible and can be implemented by $m(m+1) \times (L-2)$ matrix $\sslash_{i = 1}^{m} |_{j = 1}^{\frac{L-2}{2}} \bm{E}(m)$.
Also, since $K > L$ so $m(m+1)(K-L-1)  \uvec{0}$ is $(0,K-L-1)$-feasible and can be implemented by $m(m+1) \times (K-L-1)$ matrix that contains zeros only.
Thus it is enough to construct a $m(m+1) \times 3$ matrix $\bm{S}(m)$ that $(2m+r,3)$-implements $Z(m) = (m-r)(m+1)  \uvec{0} + (2m+r) \ue{m}$.
Notice that $Z(m) = [(m-r)(m+1) + 2m+r,0,2m+r,0,2m+r,\ldots,0,2m+r]$, that is $Z_{2i+1} = 0$, for $i \in \{0,\ldots,m-1\}$, $Z_{0} = (m-r)(m+1) + 2m+r$, and $Z_{2i} = 2m+r$, for $i \in \{1,\ldots,m\}$.
Moreover, since $B$ is even and $L$ is even so $r$ is even as well.

Let 
$\bm{T}(m,r) = \bm{T}^{\mathrm{I}}(m,r) \sslash \left( \sslash_{j=1}^{\frac{r}{2}-1} \bm{T}^{\mathrm{II},j}(m,r) \right) \sslash \left( \sslash_{j = 1}^{\frac{r}{2}-1} \bm{T}^{\mathrm{III},j}(m,r) \right)$ $\sslash \left( \sslash_{j = 1}^{\lfloor\frac{r}{4}\rfloor - 1} \bm{T}^{\mathrm{IV},j}(m,r) \right)$ $\sslash \left( \sslash_{j = 1}^{\lceil\frac{r}{4}\rceil - 1} \bm{T}^{\mathrm{V},j}(m,r) \right)\allowbreak \sslash \left( \sslash_{j = 1}^{\lceil\frac{r}{4}\rceil-1} \bm{T}^{\mathrm{VI},j}(m,r) \right) \sslash \left(\sslash_{j=\lceil\frac{r}{4}\rceil}^{\frac{r}{2}-2} \bm{T}^{\mathrm{VII},j}(m,r) \right)$ be a $m(m+1) \times 3$ matrix consisting of blocks defined as follows.

A $m(m-r+3) \times 3$ block (defined for $r \geq 2$)
\begin{equation}
\bm{T}^{\mathrm{I}}(m,r) = \left[ \begin{array}{l c c c l}
                             \{ & 0 & r & 2m & \}^{l(0)} \\
                             \{ & 0 & r+2 & 2m-2 & \}^{l(1)} \\
                             & \vdots & \vdots & \vdots & \\
                             \{ & 0 & 2m & r & \}^{l(m-\frac{r}{2})}
                       \end{array} \right],
\end{equation}
with the $i$'th part ($i = 0,\ldots,m-\frac{r}{2}$) of the form $\left[ \begin{array}{c c c} 0 & r+2i & 2m-2i \end{array}\right]$, repeated $l(i)$ times, where

\begin{equation*}
l(i) = \begin{cases}
       m-r+i+4, & \textrm{if $i \leq \frac{r}{2}-3$ and $i \leq m-r+2$}\\
       2m-2r+6, & \textrm{if $m-r+3 \leq i \leq \frac{r}{2}-3$}\\
       m-\frac{r}{2}+2, & \textrm{if $\frac{r}{2}-2 \leq i \leq m-r+2$}\\
       2m-\frac{3r}{2}-i+4, & \textrm{if $i \geq \frac{r}{2}-2$ and $i \geq m-r+3$}\\
       \end{cases}
\end{equation*}

A $(2m-r+2) \times 3$ block (defined for $r \geq 4$ and $j \in \{1,\ldots,\frac{r}{2}-1\}$
\begin{equation}
\bm{T}^{\mathrm{II,j}}(m,r) = \left[\begin{array}{l c c c l}
                             \{ & 2j & r & 2m-2j & \}^{2} \\
                             \{ & 2j & r+2 & 2m-2j-2 & \}^{2} \\
                             & \vdots & \vdots & \vdots & \\
                             \{ & 2j & 2m & r-2j & \}^{2}
                       \end{array} \right],
\end{equation}
with the $i$'th part ($i = 0,\ldots,m-\frac{r}{2}$) of the form $\left[ \begin{array}{c c c} 2j & r+2i & 2m-2j-2i \end{array}\right]$, each repeated $2$ times.

A $2\lceil\frac{j}{2}\rceil \times 3$ block (defined for $r \geq 4$ and $j \in \{1,\ldots,\frac{r}{2}-1\}$)
\begin{equation}
\bm{T}^{\mathrm{III},j}(m,r) = \left[ \begin{array}{l c c c l}
                             \{ & 2j & r-2j & 2m & \}^2 \\
                             \{ & 2j & r-2j+2 & 2m-2 & \}^2 \\
                             & \vdots & \vdots & \vdots & \\
                             \{ & 2j & r-2\lfloor\frac{j}{2}\rfloor-2 & 2m-2\lceil\frac{j}{2}\rceil+2 & \}^2
                       \end{array} \right],
\end{equation}
with the $i$'th part ($i = 0,\ldots,\lceil\frac{j}{2}\rceil-1$) of the form $\left[ \begin{array}{c c c} 2j & r-2j+2i & 2m-2i \end{array}\right]$, each repeated $2$ times.

A $(r-4j-2) \times 3$ block (defined for $r \geq 6$ and $j \in \{1,\ldots,\lfloor\frac{r}{4}\rfloor-1\}$)
\begin{equation}
\bm{T}^{\mathrm{IV},j}(m,r) = \left[ \begin{array}{l c c c l}
                             \{ & 2 & 2m+r-2j-2 & 2j & \}^2 \\
                             \{ & 4 & 2m+r-2j-4 & 2j & \}^2 \\
                             & \vdots & \vdots & \vdots & \\
                             \{ & 2r-4j-4 & 2m-r+2j+4 & 2j & \}^2
                       \end{array} \right],
\end{equation}
with the $i$'th part ($i = 0,\ldots,\frac{r}{2}-2j-2$) of the form $\left[ \begin{array}{c c c} r+2i & 2m-2j-2i & 2j \end{array}\right]$, each repeated $2$ times.

A $(r-4j) \times 3$ block (defined for $r \geq 6$ and $j \in \{1,\ldots,\lceil\frac{r}{4}\rceil-1\}$)
\begin{equation}
\bm{T}^{\mathrm{V},j}(m,r) = \left[ \begin{array}{l c c c l}
                             \{ & 2j & 2 & 2m+r-2j-2 & \}^2 \\
                             \{ & 2j & 4 & 2m+r-2j-4 & \}^2 \\
                             & \vdots & \vdots & \vdots & \\
                             \{ & 2j & 2r-4j-2 & 2m-r+2j+4 & \}^2
                       \end{array} \right],
\end{equation}
with the $i$'th part ($i = 0,\ldots,\frac{r}{2}-2j-1$) of the form $\left[ \begin{array}{c c c} 2j & r+2i & 2m-2j-2i \end{array}\right]$, each repeated $2$ times.

A $2j \times 3$ block (defined for $r \geq 6$ and $j \in \{1,\ldots,\lceil\frac{r}{4}\rceil-1\}$
\begin{equation}
\bm{T}^{\mathrm{VI,j}}(m,r) = \left[ \begin{array}{l c c c l}
                             \{ & r-2j & 2m & 2j & \}^{2} \\
                             \{ & r-2j+2 & 2m-2 & 2j & \}^{2} \\
                             & \vdots & \vdots & \vdots & \\
                             \{ & r-2 & 2m-2j+2 & 2j & \}^{2}
                       \end{array} \right],
\end{equation}
with the $i$'th part ($i = 0,\ldots,j-1$) of the form $\left[ \begin{array}{c c c} r-2j+2i & 2m-2i & 2j \end{array}\right]$, each repeated $2$ times.

A $(r-2j-2) \times 3$ block (defined for $r \geq 8$ and $j\in \{\lceil\frac{r}{4}\rceil,\ldots,\frac{r}{2}-2\}$
\begin{equation}
\bm{T}^{\mathrm{VII,j}}(m,r) = \left[ \begin{array}{l c c c l}
                             \{ & r-2j & 2m & 2j & \}^{2} \\
                             \{ & r-2j+2 & 2m-2 & 2j & \}^{2} \\
                             & \vdots & \vdots & \vdots & \\
                             \{ & 2r-4j-4 & 2m-r+2j+4 & 2j & \}^{2}
                       \end{array} \right],
\end{equation}
with the $i$'th part ($i = 0,\ldots,\frac{r}{2}-j-2$) of the form $\left[ \begin{array}{c c c} r-2j+2i & 2m-2i & 2j \end{array}\right]$, each repeated $2$ times.

\end{proof}

\begin{proof}[Proof of Proposition~\ref{pr:aimpl1}]
Let $\bm{x} = (1-\alpha) \uovec{m} + \alpha \uovec{m+1}$. 

For the left to right implication notice that all values that appear with positive frequencies under $\bm{x}$ are odd. Hence if $K$ is even, the sum of $K$ odd numbers is even as well. Thus if $\bm{x}$ is $(A,K)$-feasible then $A$ is even. On the other hand, if $K$ is odd then the sum of $K$ odd numbers is odd and so, if $\bm{x}$ is $(A,K)$-feasible then $A$ is odd. Hence $A \equiv K \pmod 2$

For the right to left implication, suppose that $A\equiv K \pmod 2$. Let $r = A \bmod K$, so that $\alpha = \frac{r}{K}$.
We start with rewriting $\bm{x}$ as follows:
\begin{equation}
\bm{x} = \left(\frac{1}{Km(m+1)}\right) \left((K-r)(m+1)\uo{m} + rm\uo{m+1} \right).
\end{equation}
Let $X^m = (K-r)(m+1)\uo{m} + rm\uo{m+1}$.
To show that $\bm{x}$ is $(A,K)$-feasible we will construct $m(m+1) \times K$ matrices such that
$X^m$ is a cardinality vector for them. We consider three cases separately: (i)~$K$ is even and $r$ is even, (ii)~$K$ is odd and $r$ is odd, and (iii)~$K$ is odd and $r$ is even.

\noindent\textbf{Case~(i): $K$ is even and $r$ is even.}

Since $K$ is even and $r$ is even, so is $K-r$. Moreover, $m(K-r) \equiv (K - r) \pmod 2$ and $(m+1)r \equiv r \pmod 2$. By point~\ref{p:hart:1} of Proposition~\ref{pr:hart}, $(K - r)(m+1)\uo{m}$ is $(m(K-r),K-r)$-feasible 
and can be implemented by $m(m+1) \times (K-r)$ matrix $S(m) = \sslash_{i = 1}^{m+1} |_{j = 1}^{\frac{K-r}{2}} \bm{O}(m)$. 
Similarly, by point~\ref{p:hart:1} of Proposition~\ref{pr:hart}, $rm\uo{m+1}$ is $((m+1)r,r)$-feasible 
and can be implemented by $m(m+1) \times r$ matrix $T(m) = \sslash_{i = 1}^{m} |_{j = 1}^{\frac{r}{2}} \bm{O}(m+1)$. Matrix $S(m) | T(m)$ implements $X^m$, which shows that $X^m$ is $(Km+r,K)$-feasible. Since $A = Km+r$ so it shows that $X^m$ is $(A,K)$-feasible.

\noindent\textbf{Case~(ii): $K$ is odd, $r$ is odd.}
Since $A = Km + r$ and $A\equiv K \pmod 2$ so $A$ is odd and $m$ is even.
Since $K$ is odd and $r$ is odd so $K-r$ is even. Moreover, since $1 \leq r \leq K-1$ so $K-r \geq 2$.
Rewrite $X^m$ as $X^m = 2(m+1)\uo{m} + m\uo{m+1} + (K-r-2)(m+1)\uo{m} + (r-1)m\uo{m+1}$.
Since $K-r-2$ is even so, by point~\ref{p:hart:1} of Proposition~\ref{pr:hart}, $(K-r-2)(m+1)\uo{m}$ is $(m(K-r-2),K-r-2)$-feasible and can be implemented by $m(m+1) \times (K-r-2)$ matrix $S(m) = \sslash_{i = 1}^{m+1} |_{j = 1}^{\frac{K-r}{2}-1} \bm{O}(m)$. Similarly, since $r-1$ is even so, by point~\ref{p:hart:1} of Proposition~\ref{pr:hart}, $(r-1)m\uo{m+1}$ is $((r-1)(m+1),r-1)$-feasible and can be implemented by $m(m+1) \times (r-1)r$ matrix $T(m) = \sslash_{i = 1}^{m} |_{j = 1}^{\frac{r-1}{2}} \bm{O}(m+1)$. Thus it is enough to construct a $m(m+1) \times 3$ matrix $\bm{R}(m)$ that $(3m+1,3)$-implements $Z^m = 2(m+1)\uo{m} + m\uo{m+1}$, and then
$\bm{R}(m) | S(m) | T(m)$ would $(Km+r,K)$-implement $X^m$.
Notice that $Z^m = [0,3m+2,0,3m+2,0,\ldots,3m+2,0,m]$, that is $Z_{2i+1} = 3m+2$, for $i \in \{0,\ldots,m-1\}$,
$Z_{2m+1} = m$, and  $Z_{2i} = 0$, for $i \in \{0,\ldots,m\}$.
Let $\bm{R}(m)$ be a $m(m+1) \times 3$ matrix consisting of four blocks, 
$\bm{R}(m) = \bm{R}^{\mathrm{I}}(m) \sslash \bm{R}^{\mathrm{II}}(m) \sslash \bm{R}^{\mathrm{III}}(m) \sslash \bm{R}^{\mathrm{IV}}(m) \sslash \bm{R}^{\mathrm{V}}(m) \sslash \bm{R}^{\mathrm{VI}}(m)$, defined below. 

A $\frac{m(m-2)}{2} \times 3$ block
\begin{equation}
\bm{R}^{\mathrm{I}}(m) = \left[ \begin{array}{l c c c r}
                             \{ & 1 & 2m-1 & m+1 & \}^{m-2} \\
                             \{ & 3 & 2m-5 & m+3 & \}^{m-2} \\
                             & \vdots & \vdots & \vdots & \\
                             \{ & m-1 & 3 & 2m-1 & \}^{m-2}
                       \end{array} \right],
\end{equation}
with the $i$'th part of the form $\left[ \begin{array}{c c c} 2i+1 & 2m - 4i - 1 & m + 2i + 1 \end{array}\right]$,
where $i \in \{0,\ldots,\frac{m}{2}-1\}$.

A $\frac{(m-2)^2}{2} \times 3$ block
\begin{equation}
\bm{R}^{\mathrm{II}}(m) = \left[ \begin{array}{l c c c l}
                             \{ & 3 & 2m-3 & m+1 & \}^{m-2} \\
                             \{ & 5 & 2m-7 & m+3 & \}^{m-2} \\
                             & \vdots & \vdots & \vdots & \\
                             \{ & m-1 & 5 & 2m-3 & \}^{m-2}
                       \end{array} \right],
\end{equation}
with the $i$'th part of the form $\left[ \begin{array}{c c c} 2i + 3 & 2m - 4i - 3 & m + 2i + 1 \end{array}\right]$,
where $i \in \{0,\ldots,\frac{m}{2}-2\}$. 

A $(m-2) \times 3$ block
\begin{equation}
\bm{R}^{\mathrm{III}}(m) = \left[ \begin{array}{l c c c l}
                             \{ & 3 & m-1 & 2m-1 & \}^2 \\
                             \{ & 5 & m-3 & 2m-1 & \}^2 \\
                             & \vdots & \vdots & \vdots & \\
                             \{ & m-1 & 3 & 2m-1 & \}^2
                       \end{array} \right],
\end{equation}
with the $i$'th part of the form $\left[ \begin{array}{c c c} 2i + 3 & m - 2i - 1 & 2m-1 \end{array}\right]$,
where $i \in \{0,\ldots,\frac{m}{2}-2\}$.

A $m \times 3$ block
\begin{equation}
\bm{R}^{\mathrm{IV}}(m) = \left[ \begin{array}{l c c c r}
                             \{ & 1 & 2m-1 & m+1 & \}^2 \\
                             \{ & 1 & 2m-3 & m+3 & \}^2 \\
                             & \vdots & \vdots & \vdots & \\
                             \{ & 1 & m+1 & 2m-1 & \}^2
                       \end{array} \right],
\end{equation}
with the $i$'th part of the form $\left[ \begin{array}{c c c} 1 & 2m - 2i - 1 & m + 2i + 1 \end{array}\right]$,
where $i \in \{0,\ldots,\frac{m}{2}-1\}$.

A $m \times 3$ block
\begin{equation}
\bm{R}^{\mathrm{V}}(m) = \left[ \begin{array}{l c c c l}
                             \{ & 1 & m-1 & 2m+1 & \}^2 \\
                             \{ & 3 & m-3 & 2m+1 & \}^2 \\
                             & \vdots & \vdots & \vdots & \\
                             \{ & m-1 & 1 & 2m+1 & \}^2
                       \end{array} \right],
\end{equation}
with the $i$'th part of the form $\left[ \begin{array}{c c c} 2i + 1 & m - 2i - 1 & 2m+1 \end{array}\right]$,
where $i \in \{0,\ldots,\frac{m}{2}-1\}$

A $m \times 3$ block
\begin{equation}
\bm{R}^{\mathrm{VI}}(m) = \left[ \begin{array}{l c c c r}
                             \{ & 1 & 2m-1 & m+1 & \}^2 \\
                             \{ & 1 & 2m-3 & m+3 & \}^2 \\
                             & \vdots & \vdots & \vdots & \\
                             \{ & 1 & m+1 & 2m-1 & \}^2
                       \end{array} \right],
\end{equation}
with the $i$'th part of the form $\left[ \begin{array}{c c c} 1 & 2m - 2i - 1 & m + 2i + 1 \end{array}\right]$,
where $i \in \{0,\ldots,\frac{m}{2}-1\}$.

Clearly, each row of $\bm{R}(m)$ sums up to $3m+1$. It is straightforward to verify that $\card(\bm{R}(m)) = Z^m$.

\noindent\textbf{Case~(iii): $K$ is odd, $r$ is even.}
Since $A = Km + r$ and $A\equiv K \pmod 2$ so $A$ is odd and $m$ is odd.
Since $K$ is odd, $r$ is even, and $r \leq K-1$ so $K-r$ is odd and $K - r \geq 1$. Moreover, since $r$ is even and $r \geq 1$ so $r \geq 2$.
Rewrite $X^m$ as $X^m = (m+1)\uo{m} + 2m\uo{m+1} + (K-r-1)(m+1)\uo{m} + (r-2)m\uo{m+1}$.
Since $K-r-1$ is even so, by point~\ref{p:hart:1} of Proposition~\ref{pr:hart}, $(K-r-1)(m+1)\uo{m}$ is $(m(K-r-1),K-r-1)$-feasible and can be implemented by $m(m+1) \times (K-r-1)$ matrix $S(m) = \sslash_{i = 1}^{m+1} |_{j = 1}^{\frac{K-r-1}{2}} \bm{O}(m)$. Similarly, since $r-2$ is even so, by point~\ref{p:hart:1} of Proposition~\ref{pr:hart}, $(r-2)m\uo{m+1}$ is $((r-2)(m+1),r-2)$-feasible and can be implemented by $m(m+1) \times (r-1)r$ matrix $T(m) = \sslash_{i = 1}^{m} |_{j = 1}^{\frac{r}{2}-1} \bm{O}(m+1)$. Thus it is enough to construct a $m(m+1) \times 3$ matrix $\bm{R}(m)$ that $(3m+2,3)$-implements $Z^m = (m+1)\uo{m} + 2m\uo{m+1}$, and then
$\bm{R}(m) | S(m) | T(m)$ would $(Km+r,K)$-implement $X^m$.
Notice that $Z^m = [0,3m+1,0,3m+1,0,\ldots,3m+1,0,2m]$, that is $Z_{2i+1} = 3m+1$, for $i \in \{0,\ldots,m-1\}$,
$Z_{2m+1} = 2m$, and  $Z_{2i} = 0$, for $i \in \{0,\ldots,m\}$.
Let $\bm{R}(m)$ be a $m(m+1) \times 3$ matrix consisting of four blocks, 
$\bm{R}(m) = \bm{R}^{\mathrm{I}}(m) \sslash \bm{R}^{\mathrm{II}}(m) \sslash \bm{R}^{\mathrm{III}}(m) \sslash \bm{R}^{\mathrm{IV}}(m)$, defined below. 

A $\frac{(m+1)(m-1)}{2} \times 3$ block
\begin{equation}
\bm{R}^{\mathrm{I}}(m) = \left[ \begin{array}{l c c c r}
                             \{ & 1 & 2m-1 & m+2 & \}^{m-1} \\
                             \{ & 3 & 2m-5 & m+4 & \}^{m-1} \\
                             & \vdots & \vdots & \vdots & \\
                             \{ & m & 1 & 2m+1 & \}^{m-1}
                       \end{array} \right],
\end{equation}
with the $i$'th part of the form $\left[ \begin{array}{c c c} 2i+1 & 2m - 4i - 1 & m + 2i + 1 \end{array}\right]$, where $i \{0,\ldots,\frac{m-1}{2}\}$.
 
A $\frac{(m-1)^2}{2} \times 3$ block
\begin{equation}
\bm{R}^{\mathrm{II}}(m) = \left[ \begin{array}{l c c c l}
                             \{ & 3 & 2m-3 & m+2 & \}^{m-1} \\
                             \{ & 5 & 2m-7 & m+4 & \}^{m-1} \\
                             & \vdots & \vdots & \vdots & \\
                             \{ & m & 3 & 2m-1 & \}^{m-1}
                       \end{array} \right],
\end{equation}
with the $i$'th part of the form $\left[ \begin{array}{c c c} 2i + 3 & 2m - 4i - 3 & m + 2i + 2 \end{array}\right]$, where $i \in \{0,\ldots,\frac{m-3}{2}\}$. 

A $(m-1) \times 3$ block
\begin{equation}
\bm{R}^{\mathrm{III}}(m) = \left[ \begin{array}{l c c c l}
                             \{ & 3 & m-2 & 2m+1 & \}^2 \\
                             \{ & 5 & m-4 & 2m+1 & \}^2 \\
                             & \vdots & \vdots & \vdots & \\
                             \{ & m & 1 & 2m+1 & \}^2
                       \end{array} \right],
\end{equation}
with the $i$'th part of the form $\left[ \begin{array}{c c c} 2i + 3 & m - 2i - 2 & 2m+1 \end{array}\right]$, where $i\in \{0,\ldots,\frac{m-3}{2}\}$.

A $(m+1) \times 3$ block
\begin{equation}
\bm{R}^{\mathrm{IV}}(m) = \left[ \begin{array}{l c c c r}
                             \{ & 1 & 2m-1 & m+2 & \}^2 \\
                             \{ & 1 & 2m-3 & m+4 & \}^2 \\
                             & \vdots & \vdots & \vdots & \\
                             \{ & 1 & m & 2m+1 & \}^2
                       \end{array} \right],
\end{equation}
with the $i$'th part of the form $\left[ \begin{array}{c c c} 1 & 2m - 2i - 1 & m + 2i + 2 \end{array}\right]$, where $i \in \{0,\ldots,\frac{m-1}{2}\}$.

Clearly, each row of $\bm{R}(m)$ sums up to $3m+2$. It is straightforward to verify that $\card(\bm{R}(m)) = Z^m$.
\end{proof}

\begin{proof}[Proof of Proposition~\ref{pr:aimpl2}]
Let $r = A \bmod K$, so that $\alpha = \frac{r}{K}$.

For point~\ref{p:aimpl2:1}, suppose that $\alpha \leq 1/2$, in which case $K \geq 2r$.
We start with rewriting $\bm{x}$ as follows:
\begin{equation}
\bm{x} = \left(\frac{1}{Km(m+1)}\right) \left(r\vv{m} + (K(m+1) - r(2m+1))\uo{m} \right),
\end{equation}
where $\vv{m} = \sum_{i = 1}^m \vd{i}{m}$. Let $X^m = r\vv{m} + (K(m+1) - r(2m+1))\uo{m}$.
To show that $\bm{x}$ is $(A,K)$-feasible we will construct $m(m+1) \times K$ matrices such that
$X^m$ is a cardinality vector for them. We consider five cases separately: (i)~either $K$ is even, or $K$ is odd and $K \neq 2r + 1$ and $m$ is odd, (ii)~$K = 2r+1$ and $m$ is odd, (iii)~$K$ is odd, $K\geq 5$,  $r=1$, and $m$ is even, (iv)~$K$ is odd, $K\geq 5$, $r\geq 2$, and $m$ is even, and (v)~$K$ is odd, $m$ is even, and $m \geq 8$. This covers all the cases of $K\geq 2$, $m = \lfloor A/K \rfloor$, and $r = A\bmod K$ except the case of $K = 3$ and $m \in \{2,4,6\}$.

\noindent\textbf{Case~(i): either $K$ is even, or $K$ is odd and $K \neq 2r + 1$ and $m$ is odd}

Rewrite $X^m$ as $X^m = r(\vv{m} + \uo{m}) + (K - 2r)(m+1)\uo{m}$. By the assumptions of this case, either $K - 2r = 0$ or $K - 2r \geq 2$.
Moreover, $m(K-2r) \equiv K - 2r \pmod 2$. By point~\ref{p:hart:1} of Proposition~\ref{pr:hart}, $(K - 2r)\uo{m}$ is $(m(K-2r),K-2r)$-feasible
and can be implemented by $m(m+1) \times (K-2r)$ matrix $\sslash_{i = 1}^{m+1} |_{j = 1}^{\frac{K}{2} - r} \bm{O}(m)$, if $K$ is even,
or $m(m+1) \times (K-2r)$ matrix $\sslash_{i = 1}^{m+1} \left(\bm{RO}(m) | \left(|_{i = 1}^{\frac{K-3}{2} - r} \bm{O}(m)\right)\right)$,
if $K$ is odd.
Thus it is enough to construct a $m(m+1) \times 2$ matrix $\bm{R}(m)$ that $(2m+1,2)$-implements $Z(m) = \vv{m} + \uo{m}$, and then
$|_{i = 1}^{r} \bm{R}(m)$ would $((2m+1)r,2r)$-implement $rZ(m)$.
Notice that $Z(m) = [0,2m,2,2m-2,4,\ldots,2,2m]$, that is $Z_{2i+1} = 2m - 2i$, for $i \in \{0,\ldots,m-1\}$,
and  $Z_{2i} = 2i$, for $i \in \{0,\ldots,m\}$.
Let $\bm{R}(m)$ be a $m(m+1) \times 2$ matrix defined as follows:
\begin{equation}
\bm{R}(m) = \left[ \begin{array}{l c c l}
                       \{ & 2 & 2m-1 & \}^{2} \\
                       \{ & 4 & 2m-3 & \}^{4} \\
                       & \vdots & \vdots & \\
                       \{ & 2m-2 & 3 & \}^{2m-2} \\
                       \{ & 2m & 1 & \}^{2m}
                       \end{array}
                 \right],
\end{equation}
with the $i$'th part of the form $\left[ \begin{array}{c c} 2i & 2m - 2i + 1 \end{array}\right]$. It is easy to see that $\bm{R}(m)$
$(2m+1,2)$-implements $Z(m)$. This completes proof of case~(i).

\noindent\textbf{Case~(ii): $K = 2r+1$ and $m$ is odd}

In this case $X^m$ can be rewritten as $X^m = \vv{m} + (m+2)\uo{m} + (r-1)(\vv{m} + \uo{m})$. As was shown above, $(r-1)(\vv{m} + \uo{m})$
is $((2m+1)(r-1), 2(r-1))$-feasible and can be implemented in a $m(m+1) \times 2(r-1)$ matrix $|_{i = 1}^{r-1} \bm{R}(m)$
Thus we need to show that $Z(m) = \vv{m} + (m+2)\uo{m}$ is $(3m+1,3)$-feasible and can be implemented in a $m(m+1) \times 3$ matrix.
Notice that $Z(m) = [0,3m+1,2,3m-1,4,\ldots,m+3,2m]$, that is $Z_{2i+1} = 3m - 2i + 1$, for $i \in \{0,\ldots,m-1\}$, and $Z_{2i} = 2i$, for $i \in \{0,\ldots,m\}$.
Let $\bm{S}(m) = \bm{S}^{\mathrm{I}}(m) \sslash \bm{S}^{\mathrm{II}}(m) \sslash \left( \sslash_{j = 0}^{\frac{m-1}{2}} \bm{S}^{\mathrm{III},j}(m) \right) \sslash \left( \sslash_{j = 0}^{\frac{m-3}{2}} \bm{S}^{\mathrm{IV},j}(m) \right)$
be a $m(m+1) \times 3$ matrix consisting of $m+1$ blocks, defined as follows.
A $\frac{(m-1)(m+1)}{4} \times 3$ block
\begin{equation}
\bm{S}^{\mathrm{I}}(m) = \left[ \begin{array}{l c c c l}
                             \{ & 2m-3 & m-2 & 6 & \}^2 \\
                             \{ & 2m-5 & m-4 & 10 & \}^4 \\
                             & \vdots & \vdots & \vdots & \\
                             \{ & m & 1 & 2m & \}^{m-1}
                       \end{array} \right],
\end{equation}
with the $i$'th part ($i = 1,\ldots,\frac{m-1}{2}$) of the form $\left[ \begin{array}{c c c} 2m-2i-1 & m - 2i & 4i+2 \end{array}\right]$, repeated $2i$ times.
A $\frac{(m-1)(m+1)}{4} \times 3$ block
\begin{equation}
\bm{S}^{\mathrm{II}}(m) = \left[ \begin{array}{l c c c l}
                             \{ & 2m-1 & m-2 & 4 & \}^2 \\
                             \{ & 2m-3 & m-4 & 8 & \}^4 \\
                             & \vdots & \vdots & \vdots & \\
                             \{ & m+2 & 1 & 2m-2 & \}^{m-1}
                       \end{array} \right],
\end{equation}
with the $i$'th part ($i = 1,\ldots,\frac{m-1}{2}$) of the form $\left[ \begin{array}{c c c} 2m-2i-1 & m - 2i & 4i \end{array}\right]$, repeated $2i$ times.
A $(m-2j+1) \times 3$ block
\begin{equation}
\bm{S}^{\mathrm{III},j}(m) = \left[ \begin{array}{l c c c l}
                             \{ & 2m-2j-1 & m-2j & 4j+2 & \}^2 \\
                             \{ & 2m-2j-1 & m-2j-2 & 4j+4 & \}^2 \\
                             & \vdots & \vdots & \vdots & \\
                             \{ & 2m-2j-1 & 1 & m+2j+1 & \}^2
                       \end{array} \right],
\end{equation}
with the $i$'th part ($i = 0,\ldots,\frac{m-1}{2}-j$) of the form $\left[ \begin{array}{c c c} 2m-2j-1 & m - 2j-2i & 4j+2i+2 \end{array}\right]$.
A $(m-2j-1) \times 3$ block
\begin{equation}
\bm{S}^{\mathrm{IV},j}(m) = \left[ \begin{array}{l c c c l}
                             \{ & m & 2j+1 & 2m-2j & \}^2 \\
                             \{ & m-2 & 2j+3 & 2m-2j & \}^2 \\
                             & \vdots & \vdots & \vdots & \\
                             \{ & 2j+3 & m-2 & 2m-2j & \}^2
                       \end{array} \right],
\end{equation}
with the $i$'th part ($i = 1,\ldots,\frac{m-1}{2}-j$) of the form $\left[ \begin{array}{c c c} m-2i+2 & 2j+2i-1 & 2m-2j \end{array}\right]$.
Clearly each row of $\bm{S}(m)$ sums up to $3m+1$ and it is straightforward to verify that $\bm{S}(m)$ $m(m+1) \times 3$ implements
$Z(m)$ (we provide the detailed calculations in the Appendix).

\noindent\textbf{Case~(iii): $K$ is odd, $K \geq 5$, $r = 1$, and $m$ is even}

In this case $X^m$ can be rewritten as $X^m = \vv{m} + (3m+4)\uo{m} + (K - 5)(m+1)\uo{m}$.
Moreover, since $K$ is odd so  $m(K-5) \equiv K - 5 \pmod 2$. By point~\ref{p:hart:1} of Proposition~\ref{pr:hart}, $(K - 5)\uo{m}$ is $(m(K-5),K-5)$-feasible and can be implemented by $m(m+1) \times (K-5)$ matrix $\sslash_{i = 1}^{m+1} |_{j = 1}^{\frac{K-5}{2}} \bm{O}(m)$.
Thus it is enough to construct a $m(m+1) \times 5$ matrix $\bm{S}(m)$ that $(5m+1,5)$-implements $Z(m) = \vv{m} + (3m+4)\uo{m}$.
Notice that $Z(m) = [0,5m+3,2,5m+1,4,\ldots,3m+5,2m]$, that is $Z_{2i+1} = 5m + 3 - 2i$, for $i \in \{0,\ldots,m-1\}$,
and  $Z_{2i} = 2i$, for $i \in \{0,\ldots,m\}$.
Let $\bm{S}(m) = \bm{S}^{\mathrm{I}}(m) \sslash \bm{S}^{\mathrm{II}}(m) \sslash \bm{S}^{\mathrm{III}}(m) \sslash \bm{S}^{\mathrm{IV}}(m) \sslash \bm{S}^{\mathrm{V}}(m) \sslash \bm{S}^{\mathrm{VI}}(m) \sslash \bm{S}^{\mathrm{VII}}(m) \sslash \bm{S}^{\mathrm{VIII}}(m)$ be a $m(m+1) \times 5$ matrix consisting of $8$ blocks, defined as follows:

A $m \times 5$ block
\begin{equation}
\bm{S}^{\mathrm{I}}(m) = \left[ \begin{array}{l c c c c c r}
                             \{ & 1 & 2m - 1 & 1 & 2m-1 & m+1 & \}^2 \\
                             \{ & 1 & 2m - 1 & 1 & 2m-3 & m+3 & \}^2 \\
                             & \vdots & \vdots & \vdots & \vdots & \vdots & \\
                             \{ & 1 & 2m - 1 & 1 & m+1 & 2m-1 & \}^2
                       \end{array} \right],
\end{equation}
with the $i$'th part ($i = 0,\ldots,(m-2)/2$) of the form $\left[ \begin{array}{c c c c c} 1 & 2m - 1 & 1 & 2m - 2i - 1 & m + 2i + 1 \end{array}\right]$, repeated $2$ times.

A $m \times 5$ block
\begin{equation}
\bm{S}^{\mathrm{II}}(m) = \left[ \begin{array}{l c c c c c l}
                             \{ & 2m & 1 & 1 & m-1 & 2m & \}^2 \\
                             \{ & 2m & 1 & 3 & m-3 & 2m & \}^2 \\
                             & \vdots & \vdots & \vdots & \vdots & \vdots & \\
                             \{ & 2m & 1 & m-1 & 1 & 2m & \}^2
                       \end{array} \right],
\end{equation}
with the $i$'th part ($i = 0,\ldots,(m-2)/2$) of the form $\left[ \begin{array}{c c c c c} 2m & 1 & 2i + 1 & m - 2i - 1 & 2m \end{array}\right]$, repeated $2$ times.

A $\frac{m^2}{4} \times 5$ block
\begin{equation}
\bm{S}^{\mathrm{III}}(m) = \left[ \begin{array}{l c c c c c r}
                             \{ & 2m-1 & 1 & 1 & 2m-1 & m+1 & \}^{1} \\
                             \{ & 2m-5 & 5 & 3 & 2m-5 & m+3 & \}^{3} \\
                             & \vdots & \vdots & \vdots & \vdots & \vdots & \\
                             \{ & 3 & 2m-3 & m-1 & 3 & 2m-1 & \}^{m-1}
                       \end{array} \right],
\end{equation}
with the $i$'th part ($i = 0,\ldots,(m-2)/2$) of the form $\left[ \begin{array}{c c c c c} 2m - 4i - 1 & 4i + 1 & 2i + 1 & 2m - 4i - 1 & m + 2i + 1 \end{array}\right]$, repeated $2i + 1$ times. 

A $\frac{(m-2)^2}{4} \times 5$ block
\begin{equation}
\bm{S}^{\mathrm{IV}}(m) = \left[ \begin{array}{l c c c c c r}
                             \{ & 2m-2 & 3 & 1 & 2m-2 & m+1 & \}^{m-3} \\
                             \{ & 2m-6 & 7 & 3 & 2m-6 & m+3 & \}^{m-5} \\
                             & \vdots & \vdots & \vdots & \vdots & \vdots & \\
                             \{ & 6 & 2m-5 & m-3 & 6 & 2m-3 & \}^{1}
                       \end{array} \right],
\end{equation}
with the $i$'th part ($i = 0,\ldots,(m-4)/2$) of the form $\left[ \begin{array}{c c c c c} 2m - 4i - 2 & 4i + 3 & 2i + 1 & 2m - 4i - 2 & m + 2i + 1 \end{array}\right]$, repeated $m - 2i - 3$ times. 

A $\frac{m(m-2)}{4} \times 3$ block
\begin{equation}
\bm{S}^{\mathrm{V}}(m) = \left[ \begin{array}{l c c c c c l}
                             \{ & 2m-3 & 3 & 3 & 2m-3 & m+1 & \}^{2} \\
                             \{ & 2m-7 & 7 & 5 & 2m-7 & m+3 & \}^{4} \\
                             & \vdots & \vdots & \vdots & \\
                             \{ & 5 & 2m-5 & m-1 & 5 & 2m-3 & \}^{m-2}
                       \end{array} \right],
\end{equation}
with the $i$'th part ($i = 0,\ldots,(m-4)/2$) of the form $\left[ \begin{array}{c c c c c} 2m - 4i - 3 & 4i + 3 & 2i + 3 & 2m - 4i - 3 & m + 2i + 1 \end{array}\right]$, repeated $2i+2$ times. 

A $\frac{(m-2)(m-4)}{4} \times 3$ block
\begin{equation}
\bm{S}^{\mathrm{VI}}(m) = \left[ \begin{array}{l c c c c c l}
                             \{ & 2m-4 & 5 & 3 & 2m-4 & m+1 & \}^{m-4} \\
                             \{ & 2m-8 & 9 & 5 & 2m-8 & m+3 & \}^{m-6} \\
                             & \vdots & \vdots & \vdots & \\
                             \{ & 8 & 2m-7 & m-3 & 8 & 2m-5 & \}^{2}
                       \end{array} \right],
\end{equation}
with the $i$'th part ($i = 0,\ldots,(m-6)/2$) of the form $\left[ \begin{array}{c c c c c} 2m - 4i - 4 & 4i + 5 & 2i + 3 & 2m - 4i - 4 & m + 2i + 1 \end{array}\right]$, repeated $m-2i-4$ times. 

A $(m-1) \times 3$ block
\begin{equation}
\bm{S}^{\mathrm{VII}}(m) = \left[ \begin{array}{l c c c c c l}
                             \{ & 2 & 2m-1 & m-1 & 2 & 2m-1 & \}^1\\
                             \{ & 4 & 2m-3 & m-3 & 4 & 2m-1 & \}^2 \\
                             & \vdots & \vdots & \vdots & \vdots & \vdots & \\
                             \{ & m-2 & m+3 & 3 & m-2 & 2m-1 & \}^2 \\
                             \{ & m & m+1 & 1 & m & 2m-1 & \}^2
                       \end{array} \right],
\end{equation}
with the $i$'th part ($i = 0,\ldots,(m-2)/2$) of the form $\left[ \begin{array}{c c c c c} 2i+2 & 2m-2i-1 & m-2i-1 & 2i+2 & 2m - 1 \end{array}\right]$, repeated $2$ times except the first one, which is repeated once. 

A $(m-2) \times 3$ block
\begin{equation}
\bm{S}^{\mathrm{VIII}}(m) = \left[ \begin{array}{l c c c c c r}
                             \{ & m+2 & m-1 & 1 & m+2 & 2m-3 & \}^2 \\
                             \{ & m+4 & m-3 & 1 & m+4 & 2m-5 & \}^2 \\
                             & \vdots & \vdots & \vdots & \\
                             \{ & 2m-2 & 3 & 1 & 2m-2 & m+1 & \}^2
                       \end{array} \right],
\end{equation}
with the $i$'th part ($i = 0,\ldots,(m-4)/2$) of the form $\left[ \begin{array}{c c c c c} m+2i+2 & m-2i-1 & 1 & m+2i+2 & 2m-2i-3 \end{array}\right]$, repeated $2$ times.

\noindent\textbf{Case~(iv): $K$ is odd, $K \geq 5$, $r \geq 2$ and $m$ is even}

In this case $X^m$ can be rewritten as 
$X^m = 2\vv{m} + (m+3)\uo{m} + (r-2)(\vv{m} + \uo{m}) + (K - 2r - 1)(m+1)\uo{m}$.
As was shown above, $(r-2)(\vv{m} + \uo{m})$ is $((2m+1)(r-2), 2(r-2))$-feasible and can be implemented in a $m(m+1) \times 2(r-2)$ matrix $|_{i = 1}^{r-2} \bm{R}(m)$. Moreover, by point~\ref{p:hart:1} of Proposition~\ref{pr:hart}, $(K - 2r - 1)\uo{m}$ is $(m(K-2r-1),K-2r-1)$-feasible and can be implemented by $m(m+1) \times (K-2r-1)$ matrix $\sslash_{i = 1}^{m+1} |_{j = 1}^{\frac{K-1}{2}-r} \bm{O}(m)$.
Thus it is enough to construct a $m(m+1) \times 5$ matrix $\bm{S}(m)$ that $(5m+2,5)$-implements $Z(m) = 2\vv{m} + (m+3)\uo{m}$, in the cases of $m \in \{2,4,6,8\}$.
Notice that $Z(m) = [0,5m+1,4,5m-3,8,\ldots,m+5,4m]$, that is $Z_{2i+1} = 5m + 1 - 2i$, for $i \in \{0,\ldots,m-1\}$,
and  $Z_{2i} = 4i$, for $i \in \{0,\ldots,m\}$.
Let $\bm{T}(m) = \bm{T}^{\mathrm{I}}(m) \sslash \bm{T}^{\mathrm{II}}(m) \sslash \bm{T}^{\mathrm{III}}(m) \sslash \bm{T}^{\mathrm{IV}}(m) \sslash \bm{T}^{\mathrm{V}}(m) \sslash \bm{T}^{\mathrm{VI}}(m) \sslash \bm{T}^{\mathrm{VII}}(m) \sslash \bm{T}^{\mathrm{VIII}}(m)$ be a $m(m+1) \times 5$ matrix consisting of $8$ blocks, defined as follows:

Block $\bm{T}^{q}(m)$, with $q \in \{\mathrm{I},\mathrm{VI},\mathrm{VII},\mathrm{VIII}\}$ is obtained from $\bm{S}^{q}(m)$, defined for case~(iii), by increasing each of the values in the second column by $1$. Block $\bm{T}^{\mathrm{II}}(m)$ is obtained from $\bm{S}^{\mathrm{II}}(m)$, defined for case~(iii), by increasing each of the values in the third column by $1$.
Block $\bm{T}^{q}(m)$, with $q \in \{\mathrm{III},\mathrm{IV},\mathrm{V}\}$, is obtained from $\bm{S}^{q}(m)$, defined for case~(iii), by increasing each of the values in the fifth column by $1$.
It is elementary to verify that $\bm{T}(m)$ $(5m+2,5)$-implements $Z(m)$ in the cases of $m \in \{2,4,6\}$. For the case of $m = 8$, we redefine $\bm{T}^{\mathrm{V}}(8)$, $\bm{T}^{\mathrm{VI}}(8)$, and $\bm{T}^{\mathrm{VIII}}(8)$ as follows:
\begin{equation}
\bm{T}^{\mathrm{V}}(8) = \left[ \begin{array}{l c c c c c l}
                             \{ & 13 & 3 & 3 & 13 & 10 & \}^{2} \\
                             \{ & 9 & 8 & 5 & 9 & 11 & \}^{4} \\
                             \{ & 5 & 11 & 7 & 5 & 14 & \}^{6}
                       \end{array} \right],
\end{equation}

\begin{equation}
\bm{T}^{\mathrm{VI}}(8) = \left[ \begin{array}{l c c c c c l}
                             \{ & 12 & 6 & 3 & 12 & 9 & \}^{4} \\
                             \{ & 8 & 9 & 5 & 8 & 12 & \}^{2}
                       \end{array} \right],
\end{equation}

\begin{equation}
\bm{T}^{\mathrm{VIII}}(8) = \left[ \begin{array}{l c c c c c r}
                             \{ & 14 & 4 & 1 & 14 & 9 & \}^2 \\
                             \{ & 12 & 5 & 1 & 12 & 12 & \}^2 \\
                             \{ & 10 & 8 & 1 & 10 & 13 & \}^2
                       \end{array} \right].
\end{equation}
The definitions of the remaining blocks of $\bm{T}(8)$ remain unchanged.
It is elementary to verify that $\bm{T}(8)$ $(5m+2,5)$-implements $Z(8)$.

\noindent\textbf{Case~(v): $K$ is odd, $m$ is even, and $m \geq 8$}

In this case $X^m$ can be rewritten as $X^m = \vv{m} + (m+2)\uo{m} + (K - 3)(m+1)\uo{m}$.
Moreover, since $K$ is odd so  $m(K-3) \equiv K - 3 \pmod 2$. By point~\ref{p:hart:1} of Proposition~\ref{pr:hart}, $(K - 3)\uo{m}$ is $(m(K-3),K-3)$-feasible and can be implemented by $m(m+1) \times (K-3)$ matrix $\sslash_{i = 1}^{m+1} |_{j = 1}^{\frac{K-3}{2}} \bm{O}(m)$.
Thus it is enough to construct a $m(m+1) \times 3$ matrix $\bm{S}(m)$ that $(3m+1,3)$-implements $Z(m) = \vv{m} + (m+2)\uo{m}$.
As we noticed earlier, $Z(m) = [0,3m+1,2,3m-1,4,\ldots,m+3,2m]$, that is $Z_{2i+1} = 3m - 2i + 1$, for $i \in \{0,\ldots,m-1\}$, and $Z_{2i} = 2i$, for $i \in \{0,\ldots,m\}$.
Let $\bm{S}(m) = \bm{P}(m) \sslash \bm{R}(m)$ where matrix $\bm{P}$ contains $m(m+1)/2$ rows with two even numbers and one odd number and matrix $\bm{R}$ contains $m(m+1)/2$ rows with three odd numbers. Let 
$P(m) = \left( \sslash_{j = 1}^{\frac{m}{2}-1} \bm{P}^{\mathrm{I},j}(m) \right) \sslash \bm{P}^{\mathrm{II}}(m) \sslash \bm{P}^{\mathrm{III}}(m) \sslash \bm{P}^{\mathrm{IV}}(m) \sslash \bm{P}^{\mathrm{V}}(m)$ be a $m(m+1)/2 \times 3$ matrix consisting of $\frac{m}{2} + 3$ blocks, defined as follows.

A $4j \times 3$ block
\begin{equation}
\bm{P}^{\mathrm{I},j}(m) = \left[ \begin{array}{l c c c l}
                             \{ & 2j+2 & 2m & m-2j-1 & \}^4 \\
                             \{ & 2j+4 & 2m-2j & m-2j-1 & \}^4 \\
                             & \vdots & \vdots & \vdots & \\
                             \{ & 4j & 2m-2j+2 & m-2j-1 & \}^4
                       \end{array} \right],
\end{equation}
with the $i$'th part ($i = 0,\ldots,j-1$) of the form $\left[ \begin{array}{c c c} 2j+2i+2 & 2m-2i & m-2j-1 \end{array}\right]$.

A $\frac{m}{2} \times 3$ block
\begin{equation}
\bm{P}^{\mathrm{II}}(m) = \left[ \begin{array}{l c c c l}
                             \{ & 6 & 2m & m-5 & \}^2 \\
                             \{ & 10 & 2m-2 & m-7 & \}^2 \\
                             & \vdots & \vdots & \vdots & \\
                             \{ & 4\lceil \frac{m}{4}\rceil-2 & 2m-2\lceil \frac{m}{4}\rceil + 4 & m-2\lceil\frac{m}{4}\rceil-1 & \}^{2} \\
                             \{ & 4\lceil \frac{m}{4}\rceil+2 & 2m-2\lceil \frac{m}{4}\rceil + 2 \rfloor & m-2\lceil\frac{m}{4}\rceil-3 & \}^{2 - \frac{m\bmod 4}{2}}
                       \end{array} \right],
\end{equation}
with the $i$'th part ($i = 0,\ldots,\lceil\frac{m}{4}\rceil-1$) of the form $\left[ \begin{array}{c c c} 6+4i & 2m - 2i & m-2i-5 \end{array}\right]$ and each but the last part is repeated $2$ times and the last part is repeated twice when $\divid{4}{m}$ and once when $\ndivid{4}{m}$.

A $\frac{m}{2} \times 3$ block
\begin{equation}
\bm{P}^{\mathrm{III}}(m) = \left[ \begin{array}{l c c c l}
                             \{ & m+2 & 2m-2 & 1 & \}^2 \\
                             \{ & m+4 & 2m-4 & 1 & \}^2 \\
                             & \vdots & \vdots & \vdots & \\
                             \{ & m+2\lceil \frac{m}{4}\rceil-2 & 2m-2\lceil \frac{m}{4}\rceil+2 & 1 & \}^{2} \\
                             \{ & m+2\lceil \frac{m}{4}\rceil & 2m-2\lceil \frac{m}{4}\rceil & 1 & \}^{2 - \frac{m\bmod 4}{2}}
                       \end{array} \right],
\end{equation}
with the $i$'th part ($i = 0,\ldots,\lceil\frac{m}{4}\rceil-1$) of the form $\left[ \begin{array}{c c c} m+2i+2 & 2m - 2i-2 & 1 \end{array}\right]$ and each but the last part is repeated $2$ times and the last part is repeated twice when $\divid{4}{m}$ and once when $\ndivid{4}{m}$.

A $\frac{m-4}{2} \times 3$ block
\begin{equation}
\bm{P}^{\mathrm{IV}}(m) = \left[ \begin{array}{l c c c l}
                             \{ & m+2 & 2m-2 & 1 & \}^2 \\
                             \{ & m+4 & 2m-6 & 3 & \}^2 \\
                             & \vdots & \vdots & \vdots & \\
                             \{ & m+2\lfloor \frac{m}{4}\rfloor-2 & 2m-4\lfloor \frac{m}{4}\rfloor+6 & 2\lfloor\frac{m}{4}\rfloor-3 & \}^{2} \\
                             \{ & m+2\lfloor \frac{m}{4}\rfloor & 2m-4\lfloor \frac{m}{4}\rfloor+2 & 2\lfloor\frac{m}{4}\rfloor-1 & \}^{\frac{m\bmod 4}{2}}
                       \end{array} \right],
\end{equation}
with the $i$'th part ($i = 0,\ldots,\lfloor\frac{m}{4}\rfloor-1$) of the form $\left[ \begin{array}{c c c} m+2i+2 & 2m - 4i-2 & 2i+1 \end{array}\right]$ and each but the last part is repeated $2$ times and the last part is repeated twice when $\divid{4}{m}$ and once when $\ndivid{4}{m}$.

A $2 \times 3$ block
\begin{equation}
\bm{P}^{\mathrm{V}}(m) = \left[ \begin{array}{l c c c l}
                             \{ & 2 & 2m & m-1 & \}^2
                       \end{array} \right].
\end{equation}

Let $R(m) = \left( \sslash_{j = 3}^{\frac{m}{2}} \bm{R}^{\mathrm{I},j}(m) \right) \sslash \left( \sslash_{j = 3}^{\frac{m}{2}} \bm{R}^{\mathrm{II},j}(m) \right) \sslash \left( \sslash_{j = 3}^{\frac{m}{2}} \bm{R}^{\mathrm{III},j}(m) \right) \sslash \left( \sslash_{j = 3}^{\frac{m}{2}} \bm{R}^{\mathrm{IV},j}(m) \right)  \sslash \bm{R}^{\mathrm{V}}(m)\sslash \bm{R}^{\mathrm{VI}}(m)$

\noindent $\sslash \bm{R}^{\mathrm{VII}}(m)\sslash \bm{R}^{\mathrm{VIII}}(m)\sslash \bm{R}^{\mathrm{IX}}(m)$ 
 be a $m(m+1)/2 \times 3$ matrix consisting of $2m-3$ blocks, defined as follows.

A $j \times 3$ block
\begin{equation}
\bm{R}^{\mathrm{I},j}(m) = \left[ \begin{array}{l c c c l}
                             \{ & m-2j+1 & m+1 & m+2j-1 & \} \\
                             \{ & m-2j+1 & m+3 & m+2j-3 & \} \\
                             & \vdots & \vdots & \vdots & \\
                             \{ & m-2j+1 & m+2j-1 & m+1 & \}
                       \end{array} \right],
\end{equation}
with the $i$'th part ($i = 0,\ldots,j-1$) of the form $\left[ \begin{array}{c c c} m-2j+1 & m+2i+1 & m+2j-2i-1 \end{array}\right]$.

A $(j+1) \times 3$ block
\begin{equation}
\bm{R}^{\mathrm{II},j}(m) = \left[ \begin{array}{l c c c l}
                             \{ & m-2j+3 & m-1 & m+2j-1 & \} \\
                             \{ & m-2j+2 & m+1 & m+2j-3 & \} \\
                             & \vdots & \vdots & \vdots & \\
                             \{ & m-2j+2 & m+2j-1 & m-1 & \}
                       \end{array} \right],
\end{equation}
with the $i$'th part ($i = 0,\ldots,j$) of the form $\left[ \begin{array}{c c c} m-2j+3 & m+2i-1 & m+2j-2i-1 \end{array}\right]$.

A $(j-1) \times 3$ block
\begin{equation}
\bm{R}^{\mathrm{III},j}(m) = \left[ \begin{array}{l c c c l}
                             \{ & m+2j-1 & m-2j+3 & m-1 & \} \\
                             \{ & m+2j-1 & m-2j+5 & m-3 & \} \\
                             & \vdots & \vdots & \vdots & \\
                             \{ & m+2j-1 & m-1 & m-2j+3 & \}
                       \end{array} \right],
\end{equation}
with the $i$'th part ($i = 0,\ldots,j-2$) of the form $\left[ \begin{array}{c c c} m+2j-1 & m-2j+2i+3 & m-2i-1 \end{array}\right]$.

A $(j-3) \times 3$ block
\begin{equation}
\bm{R}^{\mathrm{IV},j}(m) = \left[ \begin{array}{l c c c l}
                             \{ & m+2j-1 & m-2j+5 & m-3 & \} \\
                             \{ & m+2j-1 & m-2j+7 & m-5 & \} \\
                             & \vdots & \vdots & \vdots & \\
                             \{ & m+2j-1 & m-3 & m-2j+5 & \}
                       \end{array} \right],
\end{equation}
with the $i$'th part ($i = 1,\ldots,j-3$) of the form $\left[ \begin{array}{c c c} m+2j-1 & m-2j+2i+3 & m-2i-1 \end{array}\right]$.

A $(\lceil \frac{m}{4} \rceil-2) \times 3$ block
\begin{equation}
\bm{R}^{\mathrm{V}}(m) = \left[ \begin{array}{l c c c l}
                             \{ & 1 & m+1 & 2m-1 & \} \\
                             \{ & 3 & m+3 & 2m-5 & \} \\
                             & \vdots & \vdots & \vdots & \\
                             \{ & 2\lceil\frac{m}{4}\rceil-5 & m+2\lceil\frac{m}{4}\rceil-5 & 2m-4\lceil\frac{m}{4}\rceil+11 & \}
                       \end{array} \right],
\end{equation}
with the $i$'th part ($i = 0,\ldots,\lceil \frac{m}{4} \rceil-3$) of the form $\left[ \begin{array}{c c c} 2i+1 & m+2i+1 & 2m-4i-1 \end{array}\right]$.

A $(\lfloor\frac{m}{4}\rfloor-2) \times 3$ block
\begin{equation}
\bm{R}^{\mathrm{VI}}(m) = \left[ \begin{array}{l c c c l}
                             \{ & 1 & m+3 & 2m-3 & \} \\
                             \{ & 3 & m+5 & 2m-7 & \} \\
                             & \vdots & \vdots & \vdots & \\
                             \{ & 2\lfloor\frac{m}{4}\rfloor-5 & m+2\lfloor\frac{m}{4}\rfloor-3 & 2m-4\lfloor\frac{m}{4}\rfloor+9 & \}
                       \end{array} \right],
\end{equation}
with the $i$'th part ($i = 0,\ldots,\lfloor\frac{m}{4}\rfloor-3$) of the form $\left[ \begin{array}{c c c} 2i+1 & m+2i+3 & 2m-4i-3 \end{array}\right]$.

A $(\lceil\frac{m}{4}\rceil+1) \times 3$ block
\begin{equation}
\bm{R}^{\mathrm{VII}}(m) = \left[ \begin{array}{l c c c l}
                             \{ & m-1 & 2m-1 & 3 & \} \\
                             \{ & m-3 & 2m-3 & 7 & \} \\
                             & \vdots & \vdots & \vdots & \\
                             \{ & m-2\lceil\frac{m}{4}\rceil-1 & 2m-2\lceil\frac{m}{4}\rceil-1 & 4\lceil\frac{m}{4}\rceil+3 & \}
                       \end{array} \right],
\end{equation}
with the $i$'th part ($i = 0,\ldots,\lceil\frac{m}{4}\rceil$) of the form $\left[ \begin{array}{c c c} m-2i-1 & 2m-2i-1 & 4i+3 \end{array}\right]$.

A $(\lfloor\frac{m}{4}\rfloor+2) \times 3$ block
\begin{equation}
\bm{R}^{\mathrm{VIII}}(m) = \left[ \begin{array}{l c c c l}
                             \{ & m+1 & 2m-1 & 1 & \} \\
                             \{ & m-1 & 2m-3 & 5 & \} \\
                             & \vdots & \vdots & \vdots & \\
                             \{ & m-2\lfloor\frac{m}{4}\rfloor-1 & 2m-2\lfloor\frac{m}{4}\rfloor-3 & 4\lfloor\frac{m}{4}\rfloor+5 & \}
                       \end{array} \right],
\end{equation}
with the $i$'th part ($i = 0,\ldots,\lfloor\frac{m}{4}\rfloor+1$) of the form $\left[ \begin{array}{c c c} m-2i+1 & 2m-2i-1 & 4i+1 \end{array}\right]$.

A $7 \times 3$ block
\begin{equation}
\bm{R}^{\mathrm{IX}}(m) = \left[ \begin{array}{l c c c l}
                             \{ & m-1 & m-1 & m+3 & \}^2 \\
                             \{ & m-1 & m+1 & m+1 & \}^2 \\
                             \{ & m-3 & m+1 & m+3 & \}^2 \\
                             \{ & m-3 & m-3 & m+7 & \}
                       \end{array} \right].
\end{equation}

For point~\ref{p:aimpl2:2} we start by rewriting $\bm{y}$ as follows:
\begin{equation}
\bm{y} = \left(\frac{1}{Km(m+1)}\right) \left((K-r)(\vv{m} + \uo{m}) + (2r-K)m\uo{m+1} \right).
\end{equation}
Let $Y^m = (K-r)(\vv{m} + \uo{m}) + (2r-K)\uo{m+1}$. Similarly to point~\ref{p:aimpl2:1}, to show that $\bm{y}$ is $(A,K)$-feasible we construct
$m(m+1) \times K$ matrices such that $Y^m$ is a cardinality vector for them. We consider three cases separately: 
(i)~$(m+1)K \equiv K \pmod 2$ and $K \neq 2r - 1$, (ii)~$K = 2r-1$ and $m$ is even, and~(iii)~$K$ is odd and $m$ is odd.

\noindent\textbf{Case~(i): $(m+1)K \equiv K \pmod 2$ and $2r \neq K + 1$}

By the assumptions of this case, either $2r - K = 0$ or $2r - K \geq 2$.
Moreover, $(m+1)(2r-K) \equiv (2r-K) \pmod 2$. By point~\ref{p:hart:1} of Proposition~\ref{pr:hart}, $(2r-K)\uo{m+1}$ is $((m+1)(2r-K),2r-K)$-feasible
and can be implemented by $m(m+1) \times (2r-K)$ matrix $\sslash_{i = 1}^{m} |_{j = 1}^{\frac{K}{2} - r} \bm{O}(m+1)$, if $K$ is even,
or $m(m+1) \times (K-2r)$ matrix $\sslash_{i = 1}^{m} \left(\bm{RO}(m+1) | \left(|_{i = 1}^{\frac{K-3}{2} - r} \bm{O}(m+1)\right)\right)$,
if $K$ is odd. As was shown in case~(i) of point~\ref{p:aimpl2:1}, $(K-r)(\vv{m} + \uo{m})$ is $((2m+1)r,2r)$-implemented by $|_{i = 1}^{K-r} \bm{R}(m)$.
This completes proof of case~(i).

In the remaining two cases we rewrite $X^m$ as $X^m = \vv{m} + \uo{m} + m\uo{m+1} + (K-r-1)(\vv{m} + \uo{m})$. As was shown above, $(K - r-1)(\vv{m} + \uo{m})$
is $((2m+1)(K-r-1), 2(K-r-1))$-feasible and can be implemented in a $m(m+1) \times 2(K-r-1)$ matrix $|_{i = 1}^{K-r-1} \bm{R}(m)$.
Thus we need to show that $Z(m) = \vv{m} + \uo{m} + m\uo{m+1}$ is $(3m+2,3)$-feasible and can be implemented in a $m(m+1) \times 3$ matrix.
Notice that $Z(m) = [0,3m,2,3m-2,4,\ldots,m+2,2m,m]$, that is $Z_{2i+1} = 3m - 2i$ and $Z_{2i} = 2i$.

\noindent\textbf{Case~(ii): $2r = K + 1$ and $m$ is even}
Let $\bm{S}(m) = \bm{S}^{\mathrm{I}}(m) \sslash \bm{S}^{\mathrm{II}}(m) \sslash \left( \sslash_{j = 0}^{\frac{m}{2}-2} \bm{S}^{\mathrm{III},j}(m) \right) \sslash \left( \sslash_{j = 0}^{\frac{m}{2}-1} \bm{S}^{\mathrm{IV},j}(m) \right)$
be a $m(m+1) \times 3$ matrix consisting of $m+1$ blocks, defined as follows.
A $\frac{m(m+2)}{4} \times 3$ block
\begin{equation}
\bm{S}^{\mathrm{I}}(m) = \left[ \begin{array}{l c c c l}
                             \{ & 2m+1 & m-1 & 2 & \}^2 \\
                             \{ & 2m-1 & m-3 & 6 & \}^4 \\
                             & \vdots & \vdots & \vdots & \\
                             \{ & m+3 & 1 & 2m-2 & \}^{m}
                       \end{array} \right],
\end{equation}
with the $i$'th part ($i = 1,\ldots,\frac{m}{2}$) of the form $\left[ \begin{array}{c c c} 2m-2i+3 & m-2i+1 & 4i-2 \end{array}\right]$, repeated $2i$ times.
A $\frac{m(m+2)}{4} \times 3$ block
\begin{equation}
\bm{S}^{\mathrm{II}}(m) = \left[ \begin{array}{l c c c l}
                             \{ & 2m-1 & m-1 & 4 & \}^2 \\
                             \{ & 2m-3 & m-3 & 8 & \}^4 \\
                             & \vdots & \vdots & \vdots & \\
                             \{ & m+1 & 1 & 2m & \}^{m}
                       \end{array} \right],
\end{equation}
with the $i$'th part ($i = 1,\ldots,\frac{m}{2}$) of the form $\left[ \begin{array}{c c c} 2m-2i+1 & m-2i+1 & 4i \end{array}\right]$, repeated $2i$ times.
A $(m-2j-2) \times 3$ block
\begin{equation}
\bm{S}^{\mathrm{III},j}(m) = \left[ \begin{array}{l c c c l}
                             \{ & 2m-2j+1 & m-2j-3 & 4j+4 & \}^2 \\
                             \{ & 2m-2j+1 & m-2j-5 & 4j+6 & \}^2 \\
                             & \vdots & \vdots & \vdots & \\
                             \{ & 2m-2j+1 & 1 & m+2j & \}^2
                       \end{array} \right],
\end{equation}
with the $i$'th part ($i = 1,\ldots,\frac{m}{2}-j-1$) of the form $\left[ \begin{array}{c c c} 2m-2j+1 & m-2j-2i-1 & 4j+2i+2 \end{array}\right]$.
A $(m-2j) \times 3$ block
\begin{equation}
\bm{S}^{\mathrm{IV},j}(m) = \left[ \begin{array}{l c c c l}
                             \{ & m+1 & 2j+1 & 2m-2j & \}^2 \\
                             \{ & m-1 & 2j+3 & 2m-2j & \}^2 \\
                             & \vdots & \vdots & \vdots & \\
                             \{ & 2j+3 & m-1 & 2m-2j & \}^2
                       \end{array} \right],
\end{equation}
with the $i$'th part ($i = 0,\ldots,\frac{m}{2}-j-1$) of the form $\left[ \begin{array}{c c c} m-2i+1 & 2j+2i+1 & 2m-2j \end{array}\right]$.
Clearly each row of $\bm{S}(m)$ sums up to $3m+2$ and it is straightforward to verify that $\bm{S}(m)$ $m(m+1) \times 3$ implements $Z(m)$.

\noindent\textbf{Case~(iii): $K$ is odd and $m$ is odd}

Let $\bm{S}(m) = \bm{S}^{\mathrm{I}}(m) \sslash \left( \sslash_{j = 1}^{\frac{m-1}{2}} \bm{S}^{\mathrm{II},j}(m) \right) \sslash \bm{S}^{\mathrm{III}}(m) \sslash \left( \sslash_{j = 0}^{\frac{m-3}{2}} \bm{S}^{\mathrm{IV},j}(m) \right) \sslash \left( \sslash_{j = 0}^{\frac{m-3}{2}} \bm{S}^{\mathrm{V},j}(m) \right)$ be a $m(m+1) \times 3$ matrix consisting of $(3m+1)/2$ blocks, defined as follows.
A $\frac{m+1}{2} \times 3$ block
\begin{equation}
\bm{S}^{\mathrm{I}}(m) = \left[ \begin{array}{l c c c l}
                             \{ & 2m & 2 & m & \} \\
                             \{ & 2m-4 & 6 & m & \} \\
                             & \vdots & \vdots & \vdots & \\
                             \{ & 2 & 2m & m & \}
                       \end{array} \right],
\end{equation}
with the $i$'th part ($i = 0,\ldots,\frac{m-1}{2}$) of the form $\left[ \begin{array}{c c c} 2m-4i & 4i+2 & m \end{array}\right]$.
A $(2m-4j+2) \times 3$ block
\begin{equation}
\bm{S}^{\mathrm{II},j}(m) = \left[ \begin{array}{l c c c l}
                             \{ & 2m-2j+2 & 4j & m-2j & \}^4 \\
                             \{ & 2m-2j+2 & 4j+2 & m-2j-2 & \}^4 \\
                             & \vdots & \vdots & \vdots & \\
                             \{ & 2m-2j+2 & m+2j-1 & 1 & \}^4
                       \end{array} \right],
\end{equation}
with the $i$'th part ($i = 0,\ldots,\frac{m-1}{2}-j$) of the form $\left[ \begin{array}{c c c} 2m-2j+2 & 4j+2i & m-2j-2i \end{array}\right]$, repeated $2i$ times.
A $\frac{m+1}{2} \times 3$ block
\begin{equation}
\bm{S}^{\mathrm{III}}(m) = \left[ \begin{array}{l c c c l}
                             \{ & 2m+1 & 1 & m & \} \\
                             \{ & 2m-1 & 3 & m & \} \\
                             & \vdots & \vdots & \vdots & \\
                             \{ & m+2 & m & m & \}
                       \end{array} \right],
\end{equation}
with the $i$'th part ($i = 0,\ldots,\frac{m-1}{2}$) of the form $\left[ \begin{array}{c c c} 2m-2i+1 & 2i+1 & m \end{array}\right]$.
A $(m-2j-1) \times 3$ block
\begin{equation}
\bm{S}^{\mathrm{IV},j}(m) = \left[ \begin{array}{l c c c l}
                             \{ & 2m-2j+1 & m & 2j+1 & \}^2 \\
                             \{ & 2m-2j+1 & m-2 & 2j+3 & \}^2 \\
                             & \vdots & \vdots & \vdots & \\
                             \{ & 2m-2j+1 & 2j+3 & m-2 & \}^2
                       \end{array} \right],
\end{equation}
with the $i$'th part ($i = 0,\ldots,\frac{m-3}{2}-j$) of the form $\left[ \begin{array}{c c c} 2m-2j+1 & m-2i & 2j+2i+1 \end{array}\right]$.
A $(m-2j-1) \times 3$ block
\begin{equation}
\bm{S}^{\mathrm{V},j}(m) = \left[ \begin{array}{l c c c l}
                             \{ & 2m-2j-1 & 2j+1 & m+2 & \}^2 \\
                             \{ & 2m-2j-3 & 2j+1 & m+4 & \}^2 \\
                             & \vdots & \vdots & \vdots & \\
                             \{ & m+2 & 2j+1 & 2m-2j-1 & \}^2
                       \end{array} \right],
\end{equation}
with the $i$'th part ($i = 0,\ldots,\frac{m-3}{2}-j$) of the form $\left[ \begin{array}{c c c} 2m-2j-2i-1 & 2j+1 & m+2i+2 \end{array}\right]$.

Clearly each row of $\bm{S}(m)$ sums up to $3m+2$ and it is straightforward to verify that $\bm{S}(m)$ $m(m+1) \times 3$ implements $Z(m)$.
\end{proof}

\subsection{Proofs for the case of $\divides{K}{A}$}

\begin{proof}[Proof of Proposition~\ref{pr:aimpl5}]
Suppose that $K\geq 2$, $m \geq 1$, and $B = 2m-1$.

We start with rewriting $\bm{y}$ as follows:
\begin{align*}
\bm{y} & = \left(1-\frac{B}{Km}\right) \uvec{0} + \left(\frac{B}{Km}\right) \left(\left(\frac{m}{B}\right) \uovec{m} + \left(1-\frac{m}{B}\right)\uoupvec{m}\right) \\
& = \left(\frac{m(K-2)+1}{Km}\right) \uvec{0} + \left(\frac{1}{Km}\right) \left(m\uovec{m} + (m-1)\uoupvec{m}\right) \\
& = \left(\frac{m(K-2)}{Km}\right) \uvec{0} + \left(\frac{1}{Km}\right) \left(\uo{m} + \uoup{m} + \uvec{0}\right) \\
& = \left(\frac{1}{Km}\right)\left(m(K-2) \uvec{0} + \left(\uo{m} + \ue{m-1}\right)\right).
\end{align*}
Let $Y^m = m(K-2) \uvec{0} + \left(\uo{m} + \ue{m-1}\right)$.
To show that $\bm{x}$ is $(B,K)$-feasible, we will construct a $m \times K$ matrix such that $Y^m$ is a cardinality vector for it. Since $K \geq 2$ so $m(K-2)\uvec{0}$ is $(0,K-2)$-feasible and can be implemented by $m \times (K-2)$ matrix that contains zeros only.
Thus it is enough to construct a $m \times 2$ matrix $\bm{S}(m)$ that $(2m-1,2)$-implements $Z(m) = \uo{m} + \ue{m-1}$.
Notice that $Z(m) = [1,\ldots,1,0]$, that is $Z_{i} = 1$, for $i \in \{0,\ldots,2m-1\}$ and $Z_{2m} = 0$.
Let $\bm{S}(m)$ be $m \times 2$ matrix defined as follows:
\begin{displaymath}
\bm{S}(m) = \left[ \begin{array}{c c}
                       0 & 2m-1 \\
                       1 & 2m-2 \\
                       \vdots & \vdots \\
                       m-2 & m+1 \\
                       m-1 & m
                       \end{array}
                 \right],
\end{displaymath}
that is the $i$'th row, $i \in \{1,\ldots,m\}$, $s_{i} = \left[ \begin{array}{c c} i-1 & 2m-i \end{array} \right]$.
It is immediate to see that $\bm{S}(m)$ $(2m-1,2)$-implements $Z(m)$.
\end{proof}

In proof of Proposition~\ref{pr:aimpl4} we use the following theorem from~\cite{Dziubinski17}.

\begin{theorem}[\cite{Dziubinski17}]
\label{th:symmetric}
Let $A \geq 1$ and $K \geq 2$ be natural numbers such that $\divides{K}{A}$ and let $m = A/K$.
\begin{enumerate}
\item If $K$ is even, then $\lambda \uovec{m} + (1 - \lambda)\uevec{m}$ is $(A,K)$-feasible for all $\lambda \in [0,1]$.\label{p:symmetric:1}
\item If $K$ is odd and $A$ is odd, then $\lambda \uovec{m} + (1 - \lambda)\uevec{m}$ is $(A,K)$-feasible if and only if $\lambda \in \left[\frac{1}{K},1\right]$.\label{p:symmetric:2}
\item If $K$ is odd and $A$ is even, then $\lambda \uevec{m} + (1 - \lambda)\uovec{m}$ is $(A,K)$-feasible if and only if $\lambda \in \left[\frac{1}{K},1\right]$.\label{p:symmetric:3}
\end{enumerate}
\end{theorem}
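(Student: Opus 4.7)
The plan is to combine two ingredients throughout: (i)~the endpoint feasibility results of Proposition~\ref{pr:hart}, and (ii)~the elementary observation that the set of $(A,K)$-feasible distributions is a convex polytope, so a convex combination of two feasible distributions is again feasible. Concretely, given an $L_1\times K$ matrix $(A,K)$-implementing $\bm{p}$ and an $L_2\times K$ matrix $(A,K)$-implementing $\bm{q}$, repeating and vertically concatenating them with multiplicities proportional to $\lambda$ and $1-\lambda$ produces an $(A,K)$-implementation of $\lambda\bm{p}+(1-\lambda)\bm{q}$ for any rational $\lambda\in[0,1]$; closedness of the feasibility polytope extends this to all real $\lambda$. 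Once this convexity reduction is in hand, the entire problem reduces to endpoint feasibility (Part~\ref{p:symmetric:1}) and, for Parts~\ref{p:symmetric:2} and~\ref{p:symmetric:3}, to feasibility at the single additional extremal point where the weight on the ``rare parity'' endpoint equals $1/K$.

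For Part~\ref{p:symmetric:1}, since $K$ is even we have $A=mK$ even and $A\equiv K\pmod 2$, so both $\uovec{m}$ and $\uevec{m}$ are $(A,K)$-feasible by Proposition~\ref{pr:hart}, and the convexity observation finishes the argument.

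For Parts~\ref{p:symmetric:2} and~\ref{p:symmetric:3}, necessity is a uniform parity argument. In any implementing matrix every row sums to $A$ and has length $K$. In Part~\ref{p:symmetric:2}, $A$ and $K$ are both odd, so the number of odd entries per row is odd and therefore at least one, giving an overall odd-entry fraction $\lambda\geq 1/K$. In Part~\ref{p:symmetric:3}, $A$ is even and $K$ is odd, so the number of odd entries per row must be even, forcing at least one even entry per row, hence $\lambda\geq 1/K$ (where $\lambda$ is the weight on $\uevec{m}$).

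For sufficiency, the ``safe'' endpoint is already $(A,K)$-feasible by Proposition~\ref{pr:hart} ($\uovec{m}$ in Part~\ref{p:symmetric:2} since $A\equiv K\pmod 2$, and $\uevec{m}$ in Part~\ref{p:symmetric:3} since $A$ is even), so the convexity reduction leaves only the task of exhibiting one implementation at the threshold $\lambda=1/K$. For Part~\ref{p:symmetric:2} this amounts to designing a matrix in which every row has exactly one odd entry; the plan is to build a $m(m+1)\times 3$ core block $\bm{\Xi}(m)$ whose single ``odd'' column has cardinality vector proportional to $\uo{m}$ and whose two ``even'' columns contribute the right proportion of $\ue{m}$ while keeping each row sum equal to $3m$, then concatenate $(K-3)/2$ copies of $\bm{E}(m)$ to cover the remaining battlefields with the right even marginal, and finally symmetrize over column permutations so the odd entry is equidistributed across the $K$ positions. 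Part~\ref{p:symmetric:3} is handled symmetrically, swapping the roles of $\uovec{m}$ and $\uevec{m}$ and using $\bm{O}(m)$ as the padding block. The main obstacle, which I expect will require case analysis on $m\bmod 2$ or $m\bmod 4$, is the explicit combinatorial design of the core block $\bm{\Xi}(m)$, paralleling the intricate block-by-block constructions of $\bm{RE}(m)$ and $\bm{S}(m,r)$ already used in the Appendix.
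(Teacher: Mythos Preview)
This theorem is not proved in the present paper: it is quoted from \cite{Dziubinski17} and used as a black box in the proof of Proposition~\ref{pr:aimpl4}. So there is no ``paper's own proof'' to compare against here.

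That said, your outline is structurally sound. The convexity reduction is correct and in fact simpler than you make it: the set of $(A,K)$-feasible distributions is the image of the mixed-strategy simplex under a linear map, hence a polytope, and therefore convex; once the two endpoints of the segment are in it, the whole segment is, with no need for the rational-$\lambda$-plus-closure detour. The parity necessity arguments for Parts~\ref{p:symmetric:2} and~\ref{p:symmetric:3} are exactly right. The reduction of sufficiency to feasibility at the single threshold $\lambda=1/K$ is also correct, as is the decomposition into a $3$-column core plus $(K-3)/2$ padding copies of $\bm{E}(m)$ (respectively $\bm{O}(m)$). One minor point: the final ``symmetrize over column permutations'' step is unnecessary, since $(A,K)$-feasibility depends only on the global cardinality vector $\card(\cdot)$, not on which column contributes which values.

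The genuine gap is that you do not construct the core block $\bm{\Xi}(m)$; you only specify what it must achieve and predict that building it ``will require case analysis on $m\bmod 2$ or $m\bmod 4$''. But this construction \emph{is} the content of the theorem: everything else is routine. You need, for $m$ odd (Part~\ref{p:symmetric:2}), an explicit $m(m+1)\times 3$ array whose rows sum to $3m$, contain exactly one odd entry each, and whose cardinality vector is $(m+1)\uo{m}+2m\,\ue{m}$; and the analogous object with parities swapped for $m$ even (Part~\ref{p:symmetric:3}). These designs are of the same flavour as the blocks $\bm{RE}(m)$, $\bm{RO}(m)$, $\bm{S}(m,r)$, $\bm{T}(m,r)$ in the Appendix, and they do require care; until you write them down and verify their cardinality vectors, the proof is not complete.
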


With this theorem in hand we are ready to prove Proposition~\ref{pr:aimpl4}.

\begin{proof}[Proof of Proposition~\ref{pr:aimpl4}]
Suppose that $B$ is odd and let $r = B \bmod m$ and $L = \lfloor B/m \rfloor$ so that $B = Lm + r$. In addition, since $B \geq 2m$ so $L \geq 2$ and since $B \leq Km$ so either $L < K$ or $L = K$ and $r = 0$.

Notice that in the case of $L = K$ and $r = 0$, $B = Km$ and $\bm{y} = (1/K) \uovec{m} + ((K-1)/K)\uevec{m}$. Since $B$ is odd, so by points~\ref{p:symmetric:1} and~\ref{p:symmetric:2} of Theorem~\ref{th:symmetric}, $\bm{y}$ is $(B,K)$-feasible.
For the remaining part of the proof we assume that $L < K$. 

We start with rewriting $\bm{y}$ as follows:
\begin{align*}
\bm{y} & = \left(1-\frac{B}{Km}\right) \uvec{0} + \left(\frac{B}{Km}\right) \left(\left(\frac{m}{B}\right) \uovec{m} + \left(1-\frac{m}{B}\right)\uevec{m}\right) \\
& = \left(\frac{Km-B}{Km}\right) \uvec{0} + \left(\frac{1}{Km}\right) \left(m\uovec{m} + (B-m)\uevec{m}\right) \\
& = \left(\frac{Km-B}{Km}\right) \uvec{0} + \left(\frac{1}{Km}\right) \left(\uo{m} + \frac{B-m}{m+1}\ue{m}\right) \\
& = \left(\frac{1}{Km(m+1)}\right)\left((m+1)(m(K-L)-r) \uvec{0} + \left((m+1)\uo{m} + ((L-1)m+r)\ue{m}\right)\right).
\end{align*}
Let $Y^m = (m+1)(m(K-L)-r) \uvec{0} + ((m+1)\uo{m} + ((L-1)m+r)\ue{m})$.
To show that $\bm{y}$ is $(B,K)$-feasible, we will construct $m(m+1) \times K$ matrices such that
$Y^m$ is a cardinality vector for them. 

We consider two cases separately: (i)~$L\geq 3$ and $L$ odd and (ii)~$L\geq 2$ and $L$ even.

\noindent\textbf{Case~(i): $L \geq 3$ and $L$ odd.}

In this case $Y^m$ can be rewritten as $Y^m = (m-r)(m+1)  \uvec{0} + (m+1) \uo{m} + (2m+r) \ue{m} + m(m+1)(K-L-1)  \uvec{0} + (L-3)m\ue{m}$.
Since $L \geq 3$ and $L$ is odd so, by point~\ref{p:hart:2} of Proposition~\ref{pr:hart}, $(L-3)m\ue{m}$ is $((L-3)m,L-3)$-feasible and can be implemented by $m(m+1) \times (L-3)$ matrix $\sslash_{i = 1}^{m} |_{j = 1}^{\frac{L-3}{2}} \bm{E}(m)$.
Also, since $K > L$ so $m(m+1)(K-L-1)  \uvec{0}$ is $(0,K-L-1)$-feasible and can be implemented by $m(m+1) \times (K-L-1)$ matrix that contains zeros only.
Thus it is enough to construct a $m(m+1) \times 3$ matrix 
$\tilde{\bm{S}}(m)$ 
that $(3m+r,4)$-implements $Z(m) = (m-r)(m+1)  \uvec{0} + (m+1) \uo{m} + (2m+r) \ue{m}$.
Notice that $Z(m) = [(m-r)(m+1) + 2m+r,m+1,2m+r,\ldots,m+1,2m+r]$, that is $Z_{2i+1} = m+1$, for $i \in \{0,\ldots,m-1\}$, $Z_{0} = (m-r)(m+1) + 2m+r$, and $Z_{2i} = 2m+r$, for $i \in \{1,\ldots,m\}$.
Moreover, since $B$ is odd and $L$ is odd so $r \equiv (m+1) \pmod 2$.

Let 
$\tilde{\bm{S}}(m,r)$ be a $m(m+1) \times 4$ matrix obtained from matrix $\bm{S}(m,r+1)$, defined in proof of Proposition~\ref{pr:aimpl3}, by decreasing by $1$ all the values in the first column.

\noindent\textbf{Case~(ii): $L\geq 2$ and $L$ even.}
In this case $Y^m$ can be rewritten as $Y^m = (m-r)(m+1)  \uvec{0} + (m+1) \uo{m} + (m+r) \ue{m} + m(m+1)(K-L-1)  \uvec{0} + (L-2)m\ue{m}$.
Since $L \geq 2$ and $L$ is even so, by point~\ref{p:hart:2} of Proposition~\ref{pr:hart}, $(L-2)m\ue{m}$ is $((L-2)m,L-2)$-feasible and can be implemented by $m(m+1) \times (L-2)$ matrix $\sslash_{i = 1}^{m} |_{j = 1}^{\frac{L-2}{2}} \bm{E}(m)$.
Also, since $K > L$ so $m(m+1)(K-L-1)  \uvec{0}$ is $(0,K-L-1)$-feasible and can be implemented by $m(m+1) \times (K-L-1)$ matrix that contains zeros only.
Thus it is enough to construct a $m(m+1) \times 3$ matrix $\tilde{\bm{T}}(m)$ that $(2m+r,3)$-implements $Z(m) = (m-r)(m+1)  \uvec{0} + (m+1) \uo{m} + (m+r) \ue{m}$.
Notice that $Z(m) = [(m-r)(m+1) + m+r,m+1,m+r,\ldots,m+1,m+r]$, that is $Z_{2i+1} = m+1$, for $i \in \{0,\ldots,m-1\}$, $Z_{0} = (m-r)(m+1) + m+r$, and $Z_{2i} = m+r$, for $i \in \{1,\ldots,m\}$.
Moreover, since $B$ is odd and $L$ is even so $r$ is odd.

Let ${\tilde{\bm{T}}}(m,r)$ be a $m(m+1) \times 3$ matrix obtained from matrix $\bm{T}(m,r+1)$, defined in proof of Proposition~\ref{pr:aimpl3}, by decreasing by $1$: 
\begin{enumerate} 
\item all the values in the second column of block $\bm{T}^{\mathrm{I}}(m,r+1)$,
\item all the values in the first column of every odd numbered row of each of the remaining blocks (numbering rows in each block starting from $1$),
\item all the values in the second column of every even numbered row of each of the remaining blocks (numbering rows in each block starting from $1$).
\end{enumerate}
\end{proof}

\section{Proofs for the cases beyond General Lotto}

\begin{proof}[Proof of Proposition~\ref{pr:general:constr}]
Take any integer $m \geq 1$ and real $a =m + \alpha$, $b \leq m$, $\alpha \in (0,1)$, and $c \in (0,b/(m+1)]$.
Let $\vvvec{m} = \frac{1}{m}\sum_{i = 1}^{m} \vvec{i}{m}$, $\delta = (2m+1)/(m+1)$, and $\sigma = (2m+1)/m$.

To prove the proposition with first compute lower bounds on the values of $H(X,Y)$ for non-negative integer valued random variable $X$ distributed according to strategy $\bm{x}$, defined in the proposition, and a non-negative integer valued random variable $Y$, distributed with probability distribution $\bm{p}$ such that $\sum_{i \geq 0} p_{2i+1} = c$ and $\Ex(Y) = b$.

As was shown in~\cite{Hart08}, if $Z$ is a random variable distributed with $\uovec{m}$ then 
\begin{equation*}
H(Z,Y) = 1 - \left(\frac{1}{m}\right)\sum_{i=1}^{2m} \Pr(Y \geq i)  = 1 - \frac{b}{m} + \left(\frac{1}{m}\right)\sum_{i\geq 2m+1} \Pr(Y \geq i)
\end{equation*}
and since
\begin{equation*}
\sum_{i\geq 2m+1} \Pr(Y \geq i) = \sum_{i\geq 2m+1} (i-2m)p_i = p_{2m+1} + \sum_{i\geq 2m+2} (i-2m)p_i \geq p_{2m+1}
\end{equation*}
with equality if and only if $\sum_{i\geq 2m+2} p_i = 0$ so
\begin{equation*}
H(Z,Y) \geq 1 - \frac{b}{m} + \frac{p_{2m+1}}{m}.
\end{equation*}
Similarly, if $Z$ is a random variable distributed with $(1-\alpha)\uovec{m} + \alpha\uovec{m+1}$ then 
\begin{align*}
H(Z,Y) & = 1 - \frac{(1-\alpha) b}{m} - \frac{\alpha b}{m+1} + \frac{1-\alpha}{m}\sum_{i \geq 2m+1} \Pr(Y \geq i)+ \frac{\alpha}{m+1}\sum_{i \geq 2m+2} \Pr(Y \geq i)\\
& \geq 1 - \frac{(1-\alpha) b}{m} - \frac{\alpha b}{m+1} + \frac{(1-\alpha)p_{2m+1}}{m}
\end{align*}
with equality if and only if $\sum_{i\geq 2m+2} p_i = 0$.

As was shown in~\cite{Dziubinski12}, if $Z$ is a random variable distributed with $\vvec{j}{m}$ then
\begin{align*}
H(Z,Y) & = 1 - \left(\frac{1}{2m+1}\right)\left(- p_{2j-1} + 2\sum_{i=1}^{2m+1} \Pr(Y \geq i) \right)\\
& = 1 - \frac{2b}{2m+1} + \frac{p_{2j-1}}{2m+1} + \frac{2}{2m+1}\sum_{i \geq 2m+2} \Pr(Y \geq i).
\end{align*}
Using that, if $Z$ is a random variable distributed with $\vvvec{m}$ then 
\begin{align*}
H(Z,Y) & = 1 - \frac{2b}{2m+1} + \frac{\sum_{i = 1}^m p_{2i-1}}{m(2m+1)} + \frac{2}{2m+1}\sum_{i \geq 2m+2} \Pr(Y \geq i)\\
& = 1 - \frac{2b}{2m+1} + \frac{c}{m(2m+1)} + \frac{2}{2m+1}\sum_{i \geq 2m+2} \Pr(Y \geq i) 
- \frac{\sum_{i \geq m} p_{2i+1}}{m(2m+1)}
\end{align*}
and since
\begin{align*}
& \frac{2}{2m+1}\sum_{i \geq 2m+2} \Pr(Y \geq i) - \frac{\sum_{i \geq m} p_{2i+1}}{m(2m+1)}\\
&\qquad = \frac{1}{2m+1}\left(2\sum_{i \geq 2m+2} (i-2m-1) p_i - \frac{p_{2m+1}}{m} - \frac{1}{m}\sum_{i \geq m+1} p_{2i+1}\right) \\
&\qquad = \frac{1}{2m+1}\left(2\sum_{i \geq m+1} (2i-2m-1) p_{2i}  
+ \sum_{i \geq m+1} \left(4i-4m-\frac{1}{m}\right) p_{2i+1}\right) - \frac{p_{2m+1}}{m(2m+1)}\\
& \qquad \geq - \frac{p_{2m+1}}{m(2m+1)}
\end{align*}
with equality if and only if $\sum_{i \geq 2m+2} p_i = 0$ so
\begin{align*}
H(Z,Y) & \geq 1 - \frac{2b}{2m+1} + \frac{c}{m(2m+1)} - \frac{p_{2m+1}}{m(2m+1)}
\end{align*}
with equality if and only if $\sum_{i \geq 2m+2} p_i = 0$.

Using these derivations, if $X$ is a random variable distributed with $\bm{x}  = \alpha \delta \vvvec{m} + (1 - \alpha \delta) \uovec{m}$ then
\begin{equation}
\label{eq:pr:general:constr:1}
\begin{aligned}
H(X,Y) & \geq 1 - \frac{2b\alpha}{m+1} + \frac{\alpha c}{m(m+1)} - \frac{(m+1-(2m+1)\alpha)b}{m(m+1)}\\
& - \frac{\alpha p_{2m+1}}{m(m+1)} + \frac{(m+1-(2m+1)\alpha)p_{2m+1}}{m(m+1)}  \\
& = 1  - \frac{(1-\alpha)b}{m} - \frac{\alpha b}{m+1} + \frac{\alpha c}{m(m+1)} + \frac{(1-2\alpha)p_{2m+1}}{m},
\end{aligned}
\end{equation}
with equality if and only if $\sum_{i \geq 2m+2} p_i = 0$.

Similarly, if $X$ is a random variable distributed with 
\begin{align*}
\bm{x} & = \left(\frac{1-\alpha}{\alpha}\right) \left(\alpha \delta \vvvec{m} + (1 - \alpha \delta) \uovec{m}\right) + \left(\frac{2\alpha - 1}{\alpha}\right) \left( (1 - \alpha) \uovec{m} + \alpha \uovec{m+1} \right)
\end{align*}
then
\begin{equation}
\label{eq:pr:general:constr:2}
\begin{aligned}
H(X,Y) & \geq 1 - \frac{(1-\alpha) b}{m} - \frac{\alpha b}{m+1} + \frac{(1-\alpha) c}{m(m+1)}\\
& \qquad\qquad {} - \frac{(1-\alpha)(2\alpha-1)p_{2m+1}}{m\alpha} + \frac{(1-\alpha)(2\alpha-1)p_{2m+1}}{m\alpha}\\
& = 1 - \frac{(1-\alpha) b}{m} - \frac{\alpha b}{m+1} + \frac{(1-\alpha) c}{m(m+1)}.
\end{aligned}
\end{equation}
with equality if and only if $\sum_{i \geq 2m+2} p_i = 0$.

Next, we compute lower bounds on the values of $H(Y,X)$ for non-negative integer valued random variable $Y$ distributed according to strategy $\bm{y}$, defined in the proposition, and non-negative integer valued random variable $X$, distributed with probability distribution $\bm{q}$ such that and $\Ex(X) = m+\alpha$.

As was shown in~\cite{Hart08}, if $Z$ is a random variable distributed with $\uevec{m}$ then 
\begin{align*}
H(Z,X) & = 1 - \left(\frac{1}{m+1}\right)\left(1+\sum_{i=1}^{2m+1} \Pr(X \geq i)\right) \geq 1 - \frac{m+1+\alpha}{m+1}
\end{align*}
with equality if and only if $\sum_{i \geq 2m+2} q_i = 0$.
Thus if $Z$ is a random variable distributed with $\lambda \uovec{m} + (1-\lambda)\uevec{m}$
then 
\begin{equation*}
H(Z,X) \geq 1 - \frac{\lambda (m+\alpha)}{m} + \frac{\lambda q_{2m+1}}{m} - \frac{(1-\lambda)(m+1+\alpha)}{m+1} \geq - \frac{\alpha}{m+1} - \frac{\lambda\alpha}{m(m+1)},
\end{equation*}
with equality if and only if $\sum_{i \geq 2m+1} q_i = 0$.
Similarly, if $Z$ is a random variable distributed with $\lambda \uovec{m+1} + (1-\lambda)\uevec{m}$
then 
\begin{equation*}
H(Z,X) \geq 1 - \frac{\lambda (m+\alpha)}{m+1} - \frac{(1-\lambda)(m+1+\alpha)}{m+1} = \frac{\lambda-\alpha}{m+1},
\end{equation*}
with equality if and only if $\sum_{i \geq 2m+2} q_i = 0$.

Using these derivations, if $Y$ is a random variable distributed with
\begin{equation*}
\bm{y} = \left(1 - \frac{b}{m}\right)\uvec{0} + \frac{b}{m}\left(\frac{cm}{b} \uovec{m} + \left(1-\frac{cm}{b}\right)\uevec{m}\right)
\end{equation*}
then
\begin{equation}
\label{eq:pr:general:constr:3}
\begin{aligned}
H(Y,X) & \geq \left(1 - \frac{b}{m}\right)(q_0-1)- \frac{b}{m}\left(\frac{\alpha}{m+1} + \frac{c\alpha}{b(m+1)}\right)\\
& \geq -1+\frac{b}{m} - \frac{b\alpha}{m(m+1)} - \frac{c\alpha}{m(m+1)}\\
& = -1+\frac{\alpha b}{m+1} + \frac{(1-\alpha)b}{m} - \frac{c\alpha}{m(m+1)}\\
\end{aligned}
\end{equation}
with equality if and only if $q_0 = 0$ and $\sum_{i \geq 2m+2} q_i = 0$.

Similarly, if $Y$ is a random variable distributed with 
\begin{equation*}
\bm{y} = \left(1 - \frac{b-c}{m}\right)\uvec{0} + \frac{b-c}{m}\left(\frac{mc}{b-c} \uovec{m+1} + \left(1 - \frac{mc}{b-c}\right)\uevec{m}\right)
\end{equation*}
then
\begin{equation}
\label{eq:pr:general:constr:4}
\begin{aligned}
H(Y,X) & \geq \left(1 - \frac{b-c}{m}\right)(q_0-1) + \frac{(m+\alpha)c - b \alpha}{m(m+1)}\\
& \geq -1 + \frac{b-c}{m} + \frac{(m+\alpha)c - b \alpha}{m(m+1)}\\
& = -1 + \frac{\alpha b}{m+1} + \frac{(1-\alpha)b}{m} - \frac{(1-\alpha)c}{m(m+1)},
\end{aligned}
\end{equation}
with equality if and only if $q_0 = 0$ and $\sum_{i \geq 2m+2} q_i = 0$.

Now we are ready to prove optimality of the strategies defined in the proposition.
Suppose that $\alpha \in (0,1/2)$. By~\eqref{eq:pr:general:constr:1}, if $X$ is a non-negative integer valued random variable distributed with optimal strategy $\bm{x}$ of player A, defined in point~\ref{p:general:constr:1} of the proposition, then 
\begin{align*}
H(X,Y) & \geq 1  - \frac{(1-\alpha)b}{m} - \frac{\alpha b}{m+1} + \frac{\alpha c}{m(m+1)} + \frac{(1-2\alpha)p_{2m+1}}{m}\\
       & = 1  - \frac{(1-\alpha)b}{m} - \frac{\alpha b}{m+1} + \frac{\alpha c}{m(m+1)},
\end{align*}
with equality if and only if $\sum_{i \geq 2m+1} p_i = 0$, for a non-negative integer valued random variable $Y$ distributed with any strategy of player B. On the other hand, by~\eqref{eq:pr:general:constr:3}, if $Y$ is a non-negative integer valued random variable distributed with optimal strategy $\bm{y}$ of player B, defined in point~\ref{p:general:constr:1} of the proposition, then 
\begin{equation*}
H(Y,X) \geq -1+\frac{\alpha b}{m+1} + \frac{(1-\alpha)b}{m} - \frac{c\alpha}{m(m+1)},
\end{equation*}
for a non-negative integer valued random variable $X$ distributed with any strategy of player A. Since the game is zero-sum, this proves that $\bm{x}$ and $\bm{y}$ are optimal strategies of players A and B, respectively, when $\alpha \in (0,1/2)$.

Suppose that $\alpha \in [1/2,1)$. 
By~\eqref{eq:pr:general:constr:2}, if $X$ is a non-negative integer valued random variable distributed with optimal strategy $\bm{x}$ of player A, defined in point~\ref{p:general:constr:2} of the proposition, then 
\begin{equation*}
H(X,Y) \geq 1 - \frac{(1-\alpha) b}{m} - \frac{\alpha b}{m+1} + \frac{(1-\alpha) c}{m(m+1)},
\end{equation*}
for a non-negative valued random variable $Y$ distributed with any strategy of player B. 
By~\eqref{eq:pr:general:constr:4}, if $Y$ is a non-negative integer valued random variable distributed with optimal strategy $\bm{y}$ of player B, defined in point~\ref{p:general:constr:2} of the proposition, then 
\begin{equation*}
H(Y,X) \geq -1 + \frac{\alpha b}{m+1} + \frac{(1-\alpha)b}{m} - \frac{(1-\alpha)c}{m(m+1)},
\end{equation*}
for non-negative integer valued random variable $X$ distributed with any strategy of player A. Since the game is zero-sum, this proves that $\bm{x}$ and $\bm{y}$ are optimal strategies of players A and B, respectively, when $\alpha \in [1/2,1)$.
\end{proof}

\begin{proof}[Proof of Proposition~\ref{pr:aimpl6}]
Suppose that $B$ is odd and let $r = (B-1) \bmod m$ and $L = \lfloor (B-1)/m \rfloor$ so that $B = Lm+r+1$. In addition, since $B \geq 2m+1$ so $L \geq 2$ and since $B \leq Km$ so $L < K$.

We start by rewriting $\bm{y}$ as follows:
\begin{align*}
\bm{y} & = \left(1-\frac{B-1}{Km}\right) \uvec{0} + \left(\frac{B-1}{Km}\right) \left(\left(\frac{m}{B-1}\right) \uovec{m+1} + \left(1-\frac{m}{B-1}\right)\uevec{m}\right) \\
& = \left(\frac{Km-B+1}{Km}\right) \uvec{0} + \left(\frac{1}{Km}\right) \left(m\uovec{m+1} + (B-m-1)\uevec{m}\right) \\
& = \left(\frac{Km-B+1}{Km}\right) \uvec{0} + \left(\frac{1}{Km}\right) \left(\frac{m}{m+1}\uo{m+1} + \frac{B-m-1}{m+1}\ue{m}\right) \\
& = \left(\frac{1}{Km(m+1)}\right)\left((m+1)(Km-B+1) \uvec{0} + \left(m\uo{m+1} + (B-m-1)\ue{m}\right)\right) \\
& = \left(\frac{1}{Km(m+1)}\right)\left((m+1)(m(K-L)-r) \uvec{0} + \left(m\uo{m+1} + ((L-1)m+r)\ue{m}\right)\right)
\end{align*}
Let $Y^m = (m+1)(m(K-L)-r) \uvec{0} + (m\uo{m+1} + ((L-1)m+r)\ue{m})$.
To show that $\bm{y}$ is $(B,K)$-feasible, we will construct $m(m+1) \times K$ matrices such that
$Y^m$ is the cardinality vector for them. 

We consider two cases separately: (i)~$L\geq 3 $ and $L$ odd and (ii)~$L\geq 2$ and $L$ even.

\noindent\textbf{Case~(i): $L$ odd.}

In this case $Y^m$ can be rewritten as $Y^m = (m-r)(m+1)  \uvec{0} + m \uo{m+1} + (2m+r) \ue{m} + m(m+1)(K-L-1)  \uvec{0} + (L-3)m\ue{m}$.
Since $L \geq 3$ and $L$ is odd so, by point~\ref{p:hart:2} of Proposition~\ref{pr:hart}, $(L-3)m\ue{m}$ is $((L-3)m,L-3)$-feasible and can be implemented by $m(m+1) \times (L-3)$ matrix $\sslash_{i = 1}^{m} |_{j = 1}^{\frac{L-3}{2}} \bm{E}(m)$.
Also, since $K > L$ so $m(m+1)(K-L-1)  \uvec{0}$ is $(0,K-L-1)$-feasible and can be implemented by $m(m+1) \times (K-L-1)$ matrix that contains zeros only.
Thus it is enough to construct a $m(m+1) \times 3$ matrix $\tilde{\bm{S}}(m)$ that $(3m+r+1,4)$-implements $Z(m) = (m-r)(m+1)  \uvec{0} + m \uo{m+1} + (2m+r) \ue{m}$.
Notice that $Z(m) = [(m-r)(m+1) + 2m+r,m,2m+r,\ldots,m,2m+r,m]$, that is $Z_{2i+1} = m$, for $i \in \{0,\ldots,m\}$, $Z_{0} = (m-r)(m+1) + 2m+r$, and $Z_{2i} = 2m+r$, for $i \in \{1,\ldots,m\}$.
Moreover, since $B$ is odd and $L$ is odd so $r \equiv m \pmod 2$.

Let 
$\tilde{\bm{S}}(m,r)$ be a $m(m+1) \times 4$ matrix obtained from matrix $\bm{S}(m,r)$, defined in proof of Proposition~\ref{pr:aimpl3}, by increasing by $1$
\begin{enumerate} 
\item the value in the first column of the first row and the value in the third column of the second row of each part of block $\bm{S}^{\mathrm{I}}(m,r)$ (parts of this block consist of two rows, the value in the third column is $0$ in every row),
\item the value in the third column of the first row and the values in the first column of the remaining rows of every part of block $\bm{S}^{\mathrm{IV}}(m,r)$ (the value in the third column is $0$ in every row),
\item all the values in the first column of the remaining blocks.
\end{enumerate}

\noindent\textbf{Case~(ii): $L\geq 2$ and $L$ even.}
In this case $Y^m$ can be rewritten as $Y^m = (m-r)(m+1)  \uvec{0} + m \uo{m+1} + (m+r) \ue{m} + m(m+1)(K-L-1)  \uvec{0} + (L-2)m\ue{m}$.
Since $L \geq 2$ and $L$ is even so, by point~\ref{p:hart:2} of Proposition~\ref{pr:hart}, $(L-2)m\ue{m}$ is $((L-2)m,L-2)$-feasible and can be implemented by $m(m+1) \times (L-2)$ matrix $\sslash_{i = 1}^{m} |_{j = 1}^{\frac{L-2}{2}} \bm{E}(m)$.
Also, since $K > L$ so $m(m+1)(K-L-1)  \uvec{0}$ is $(0,K-L-1)$-feasible and can be implemented by $m(m+1) \times (K-L-1)$ matrix that contains zeros only.
Thus it is enough to construct a $m(m+1) \times 3$ matrix $\tilde{\bm{T}}(m)$ that $(2m+r+1,3)$-implements $Z(m) = (m-r)(m+1)  \uvec{0} + m \uo{m+1} + (m+r) \ue{m}$.
Notice that $Z(m) = [(m-r)(m+1) + m+r,m,m+r,\ldots,m,m+r,m]$, that is $Z_{2i+1} = m$, for $i \in \{0,\ldots,m\}$, $Z_{0} = (m-r)(m+1) + m+r$, and $Z_{2i} = m+r$, for $i \in \{1,\ldots,m\}$.
Moreover, since $B$ is odd and $L$ is even so $r$ is even.

Let $\tilde{\bm{T}}(m,r)$ be a $m(m+1) \times 3$ matrix obtained from matrix $\bm{T}(m,r)$, defined in proof of Proposition~\ref{pr:aimpl3}, by increasing by $1$: 
\begin{enumerate} 
\item the values in the first column of the first two rows and the values in the second column of the remaining rows of parts $i \in \{1,\ldots,\frac{r}{2}-1\}$ of block $\bm{T}^{\mathrm{I}}(m,r)$ (the values in the first column are $0$ for every part and the value in the second row of part $i$ is $r+2i$),
\item the values in the first column of the first row and the values in the second column of the remaining rows of parts $i = 0$ and $i \in \{\frac{r}{2},\ldots,m-\frac{r}{2}\}$ of block $\bm{T}^{\mathrm{I}}(m,r)$
(like in the previous point, the values in the first column are $0$ for every part and the value in the second row of part $i$ is $r+2i$),
\item the values in the second column of the two rows of part $i = j$ and the value in the first column of the first row and the value in the second column of the second row of each of the remaining parts $i \in \{0,\ldots,m-\frac{r}{2}\}\setminus \{j\}$ of block $\bm{T}^{\mathrm{II},j}(m,r)$, for each $j \in \{1,\ldots,\frac{r}{2}-1\}$ (the values in the first column are $2j$ for every part and the value in the second row of part $i$ is $r+2i$),
\item all the values in the first column of every odd numbered row of each of the remaining blocks (numbering rows in each block starting from $1$),
\item all the values in the second column of every even numbered row of each of the remaining blocks (numbering rows in each block starting from $1$).
\end{enumerate}

\end{proof}

\end{appendix}

\end{document}